\newcommand{\vast}{\bBigg@{4}}
\newcommand{\Vast}{\bBigg@{5}}
\DeclareMathAlphabet{\mathcal}{OMS}{cmsy}{m}{n}
\newcommand{\argmin}[2]{\underset{#1}{\mathrm{argmin}}\left\{ #2 \right\}}
\def\minimize#1#2{  {\underset{#1}{\mathrm{minimize}}}\left\{#2\right\}}
\newcommand{\thj}{\hat\tau_{j}}
\newcommand{\thi}[1]{\hat\tau_{#1}}
\newcommand{\pval}{\ensuremath{p}_{\text{}}}
\newcommand{\cost}{\text{Cost}}
\begin{document}
\title{Quantifying uncertainty in spikes estimated from calcium imaging data}
\author{
YIQUN T. CHEN$^\ast$\\[4pt]
\textit{Department of Biostatistics, University of Washington, Seattle, WA 98195, USA}
\\[2pt]
{yiqunc@uw.edu}\\
SEAN W. JEWELL\\[4pt]
\textit{Department of Statistics, University of Washington, Seattle, WA 98195, USA}
\\[2pt]
DANIELA M. WITTEN\\[4pt]
\textit{Departments of Statistics \& Biostatistics, University of Washington, Seattle, WA 98195, USA} 
\\[2pt]
}
\markboth%
{Y. T. CHEN AND OTHERS}
{Quantifying uncertainty in calcium imaging data}
\maketitle
\footnotetext{To whom correspondence should be addressed.}
\begin{abstract}
{In recent years, a number of methods have been proposed to estimate the times at which a neuron spikes on the basis of calcium imaging data. However,  quantifying the uncertainty associated with these  estimated spikes remains an open problem. We consider a simple and well-studied model for calcium imaging data, which states that calcium decays exponentially in the absence of a spike, and instantaneously increases when a spike occurs. 
We wish to test the null hypothesis that the neuron did \emph{not} spike --- i.e., that there was no increase in calcium --- at a particular timepoint at which a spike was estimated. In this setting, classical hypothesis tests lead to inflated Type I error, because the spike was estimated on the same data used for testing. To overcome this problem, we propose a selective inference approach.  We describe an efficient algorithm to compute finite-sample $p$-values that control selective Type I error, and confidence intervals with correct selective coverage, for spikes estimated using a recent proposal from the literature. We apply our proposal in simulation and on calcium imaging data from the \texttt{spikefinder} challenge.}
 {Calcium imaging; Changepoint detection; Neuroscience; Hypothesis testing; Selective inference }
\end{abstract}
\section{Introduction}
\label{section:intro}

In the field of neuroscience, recent advances in calcium imaging have enabled recording from large populations of neurons \textit{in vivo}~\citep{Prevedel2014-pq,Ahrens2013-ac,Chen2013-ha}. 
When a neuron spikes,  calcium floods the cell;  the presence of fluorescent calcium indicator molecules  causes it to fluoresce. Thus, for each neuron, calcium imaging results in a time series of fluorescence intensities that can be seen as a noisy approximation to its unobserved spike times. 
Typically, the neuron's observed fluorescence trace is  not of scientific interest; instead, the interest lies in the unobserved spike times.  

A number of methods have been developed to estimate spike times from  the fluorescence trace of a neuron~\citep{Theis2016-wd,Berens2018-wq,vogelstein2010fast,Jewell2018-de,Pachitariu2018-ew,Stringer2019-bi,Jewell2019-jw}. 
One line of work makes use of a simple model that relates the unobserved calcium $c_t$ and the observed fluorescence $Y_t$ at the $t$th time step \citep{vogelstein2010fast,friedrich2016fast,Jewell2018-de,Jewell2019-jw},
{
  \setlength{\belowdisplayskip}{1pt}
\setlength{\abovedisplayskip}{1pt}
\begin{align}
Y_{t} &= c_{t} + \epsilon_{t}, \quad \epsilon_{t} \overset{\text{i.i.d.}}{\sim} \mathcal{N}(0, \sigma^{2}), \quad t=1,\ldots,T, \nonumber \\
c_{t} &= \gamma c_{t-1} + z_{t}, \quad t =2, \ldots, T,
\label{eq:obs-model}
\end{align}
}
where $z_t\geq 0$ for all $t$, and $z_t>0$ indicates the presence of a spike at the $t$th time step. At most time steps, $z_t = 0$, corresponding to no spike. Between spikes, calcium decays exponentially at a rate $\gamma \in (0,1)$; $\gamma$ can be viewed as a property of the calcium indicator, and is taken to be known.  
Model \eqref{eq:obs-model} suggests estimating the underlying calcium $c_t$ by solving the optimization problem
\begin{align}
\minimize{c_1,\ldots, c_T\geq 0; \;\; z_1,\ldots, z_T}{\frac12 \sum_{t = 1}^{T} (y_{t} - c_{t})^{2} + \lambda \sum_{t=2}^{T} 1_{(z_{t}\neq 0)}} \text{ subject to $z_t= c_t-\gamma c_{t-1}\geq 0$ }, 
\label{eq:l0-opt-complete}
\end{align} where $\lambda\geq 0$ is a  tuning parameter that trades off the number of estimated spikes and the fit to the observed fluorescence \citep{Jewell2018-de}. The $\ell_0$ penalty
  $\sum_{t=2}^{T} 1_{(z_{t}\neq 0)}$ is non-convex, which has motivated  a number of authors to consider  a convex relaxation to \eqref{eq:l0-opt-complete} using an $\ell_1$ penalty~\citep{friedrich2016fast,vogelstein2010fast,Friedrich2017-vn}. An efficient dynamic programming algorithm that yields the global optimum to \eqref{eq:l0-opt-complete} has also been proposed \citep{Jewell2018-de,Jewell2019-jw}.

Despite the extensive literature on estimating a neuron's spike times from its fluorescence intensity~\citep{Theis2016-wd,vogelstein2010fast,Jewell2018-de,Pachitariu2018-ew,Jewell2019-jw},  quantifying the uncertainty associated with these estimated spikes remains in large part an open problem. 
More precisely, suppose we observe a $T$-vector of fluorescence intensities under model \eqref{eq:obs-model}, and estimate the $J$ spike times $\hat\tau_1,\ldots,\hat\tau_J$. For fixed $j \in \{1,\ldots,J\}$, consider testing whether there is a spike at $\thj$, i.e.,
{
  \setlength{\belowdisplayskip}{1pt}
\setlength{\abovedisplayskip}{1pt}
\begin{align}
H_{0}: c_{\thj + 1} - \gamma c_{\thj} = 0 \quad \text{versus} \quad  H_{1}: c_{\thj + 1} - \gamma c_{\thj} > 0,
\label{eq:basic-test}
\end{align}
} where the one-sided alternative reflects the fact that a spike  leads to an \emph{increase} (rather than a decrease) in calcium. 
Despite the apparent simplicity of \eqref{eq:basic-test},  obtaining a test with correct size requires care. For instance, motivated by a Wald test, we can consider the $p$-value 
{
  \setlength{\belowdisplayskip}{1pt}
\setlength{\abovedisplayskip}{1pt}
\begin{align}
\label{eq:wald_test}
\mathbb{P}_{H_0}\left( Y_{\thj + 1} - \gamma Y_{\thj}  \geq  y_{\thj + 1} - \gamma y_{\thj}  \right),
\end{align}
} where $y_1,\ldots,y_T$ is the observed fluorescence,
 and \eqref{eq:obs-model} implies that  $Y_{\thj + 1} - \gamma Y_{\thj} \sim \mathcal{N}\left(0, (1+\gamma^2)\sigma^2 \right)$ under $H_0$. 
 But this naive approach ignores the fact that estimation \eqref{eq:l0-opt-complete} and inference \eqref{eq:basic-test} for $\thj$ were performed \emph{on the same data} \citep{Button2019-hz,Fithian2014-ow}. Thus, even in the absence of a true spike, we will observe a large value of $y_{\thj + 1}-\gamma y_{\thj}$; see Figure~\ref{fig:motivation}(a). Figure~\ref{fig:motivation}(b) demonstrates that \eqref{eq:wald_test} does not control the \emph{selective Type I error}:  the probability of a false rejection conditional on the fact that this null hypothesis was tested \citep{Fithian2014-ow}.

In this paper, we leverage the \emph{selective inference} framework, which enables us to test a null hypothesis that was selected using the data, to develop a valid test for  \eqref{eq:basic-test}. Related approaches have been developed for a number of problems, including penalized and stepwise regression \citep{Lee2016-te,Tibshirani2016-bx,Fithian2014-ow}, changepoint detection \citep{Hyun2018-pe,Jewell2019-vv}, and aggregate testing \citep{Heller2018-rn}. In a nutshell, to obtain a test that controls the selective Type I error, we condition  on the aspect of the data that led us to test this particular null hypothesis. In particular, since we have chosen to test the null hypothesis $H_0: c_{\thj + 1} - \gamma c_{\thj} = 0$ in \eqref{eq:basic-test}  because $\thj$ is an estimated changepoint, our $p$-value should be computed \emph{conditional on the event that $\thj$ is an estimated changepoint}. As seen in Figure~\ref{fig:motivation}(c), this results in a test that controls the selective Type I error. 

Some authors have considered quantifying the uncertainty in the location of an estimated spike $\thj$ \citep{Pnevmatikakis2016-xe,Merel2016-dw}. Others have applied a Bayesian lens to the uncertainty associated with the magnitude of the change in calcium associated with an estimated spike $\thj$ \citep{Pnevmatikakis2016-xe,Soltanian-Zadeh2018-ce, Merel2016-dw,Theis2016-wd,Vogelstein2009-mo,Deneux2016-ve}. Despite the flexibility and robustness of Bayesian methods, they do not provide a straightforward way to test \eqref{eq:basic-test}. First, they provide an uncertainty estimate for the change of calcium at \emph{every timepoint}. As a result, we still need to account for selection if we only choose to test the null hypothesis for the estimated spikes \citep{Yekutieli2012-wk}. Second, even with appropriate adjustments, Bayesian hypothesis testing typically will not control Type I error  \citep{Ghosh2011-kb}. 

The current paper is closely related to the literature on changepoint detection. \citet{Jewell2018-de} showed that \eqref{eq:l0-opt-complete} is equivalent to a changepoint detection problem, which allows us to tap into the toolbox of inferential procedures for changepoint detection \citep{Yao1989-iv,Yao1988-vu,Harchaoui2010-cl,Zou2020-dl,Song2016-kh,Fryzlewicz2014-ui}. Despite the abundant literature on this topic, a few gaps remain to be filled, as reviewed in  \citet{Niu2016-es}: (i) much of the prior work has focused on quantifying the uncertainty associated with either the number or locations of the estimated changepoints; and (ii) most existing inferential procedures are asymptotic and approximate. Two recent exceptions include \citet{Hyun2018-pe} and \citet{Jewell2019-vv}, which  took a selective inference approach and computed finite-sample $p$-values for testing the changes in mean around changepoints estimated using an $\ell_1$  and  an $\ell_0$ penalty, respectively. Our work is closest to \citet{Jewell2019-vv}, and extends  their proposal to the model \eqref{eq:obs-model}.

In this paper, we propose a general framework to quantify the uncertainty associated with the set of spikes estimated from calcium imaging data, using \emph{any} spike detection algorithm.  Our testing framework controls the selective Type I error associated with the null hypothesis \eqref{eq:basic-test}. However, in practice it might be very hard to carry out this framework for an arbitrary spike detection algorithm. Thus, in the special case of spikes estimated by solving a variant of the $\ell_0$ optimization problem in \eqref{eq:l0-opt-complete}, we provide an algorithm that can be used to efficiently compute $p$-values and confidence intervals associated with these estimated spikes. 

The rest of this paper is organized as follows. In Section~\ref{section:hypothesis}, we detail the null hypothesis  of interest, and develop a framework to test it for spikes estimated using  \emph{any} spike estimation procedure, under model \eqref{eq:obs-model}. We develop an efficient algorithm to compute the $p$-values for spikes estimated via a variant of \eqref{eq:l0-opt-complete} in Section~\ref{section:method}, and  develop confidence intervals in Section~\ref{section:extension}.  We apply our proposal in a simulation study in Section~\ref{section:sim}, and to calcium imaging data  in Section~\ref{section:real_data}. The discussion is in Section~\ref{section:discussion}. Proofs and other technical details are relegated to the Appendix. 

Throughout this paper, upper case $Y$ denotes a random variable, and lower case $y$ denotes a realization of $Y$. For a vector $\nu \in \mathbb{R}^T$, $\Vert\nu\Vert_2$ denotes its $\ell_2$ norm, $\nu^\top$  its transpose, and $\Pi_\nu^\perp$  the projection matrix onto its orthogonal complement, i.e., $\Pi_\nu^\perp = I-\frac{\nu\nu^{\top}}{\Vert\nu\Vert_2^2}$.  We use $\mathbb{N}$ to denote the natural numbers and $\mathbb{R}$ to denote the real numbers. The notation $1(\cdot)$ and $\overset{d}{=}$ denote an indicator function and equality in distribution, respectively.

\section{Selective inference for spike detection}
\label{section:hypothesis}

\subsection{Defining the null hypothesis}
\label{subsection:choose_nu}

We wish to test for an increase in calcium at $\hat\tau_j$, an estimated spike time. We re-write
 \eqref{eq:basic-test} as 
 \begin{equation}
  H_0: \nu^\top c=0 \mbox{ versus } H_1: \nu^\top c > 0, 
  \label{eq:null-nu}
  \end{equation}
  where $\nu \in \mathbb{R}^T$ is a contrast vector defined as 
 \begin{align}
\label{eq:nu_example}
\nu_{t} = &
\begin{cases}
 -\gamma, & t = \thj,\\
 1, & t =\thj+1,\\
 0, & \mbox{otherwise}. 
 \end{cases}
\end{align}
However, \eqref{eq:nu_example} only considers the two timepoints immediately before and after $\hat\tau_j$, leaving most data unused. In order to take advantage of a larger data window, we will generalize the contrast vector $\nu$ under a simple assumption.

\noindent \begin{center} \emph{Assumption 1: There are no spikes within a window of $\pm h$ of $\hat\tau_j$. In other words, $ \gamma^{h}c_{\thj-h+1} =  \gamma^{h-1}c_{\thj-h+2}  = \ldots= \gamma c_{\thj}$
 and $c_{\thj+1}=  c_{\thj+2} \gamma^{-1} =\ldots= \gamma^{-h+1}c_{\thj+h}$. } \end{center} 
 Under Assumption 1, and treating  $\thj$ as fixed, the log likelihood of  $Y_{\thj-h+1},\ldots,Y_{\thj}$ is proportional to 
 $\sum_{t = \thj-h+1}^{\thj} \left( Y_t - c_{\thj} \gamma^{t-\thj} \right)^2$.  
  Thus, the maximum likelihood estimator  for $c_{\thj}$ is
 $\hat{c}_{\thj} = \frac{\gamma^2-1}{\gamma^2-\gamma^{-2h+2}}\sum_{t = \thj-h+1}^{\thj}  Y_t \gamma^{t-\thj}.$
 Similarly, using the $h$ observations $Y_{\thj+1},\ldots,Y_{\thj+h}$, the maximum likelihood estimator for ${c}_{\thj+1}$ is
  $\hat{c}_{\thj+1} = \frac{\gamma^2 -1 }{\gamma^{2h}-1} \sum_{t = \thj+1}^{\thj+h}  Y_t \gamma^{t-(\thj+1)} .$
  This suggests that we can test for an increase in calcium at $\thj$ using \eqref{eq:null-nu} with  $\nu$ defined  as
  \begin{align}
\label{eq:nu_def}
\nu_{t} &= 
\begin{cases}
- \frac{\gamma (\gamma^2-1)}{\gamma^2-\gamma^{-2h+2}}    \gamma^{t-\thj}    , & \thj-h+1 \leq  t \leq \thj, \\
\frac{\gamma^2 -1 }{\gamma^{2h}-1} \gamma^{t-(\thj+1)}   , & \thj + 1\leq t \leq \thj+h,\\
 0, & \mbox{otherwise}. \\
\end{cases}
\end{align}
  Details of the form of $\nu$ if $\hat\tau_j + h >T$ or $\hat\tau_j -h+1 < 1$, as well as a visualization of $\nu$ in \eqref{eq:nu_def}, are provided in Appendix~\ref{appendix:general_nu}.

\subsection{A selective test for $H_0: \nu^{\top}c = 0$ versus $H_1: \nu^{\top}c>0$}
\label{sec:selective_test}

   Suppose that we test for an increase in calcium only at timepoints at which (i) we estimate a spike; and (ii) there is an increase in fluorescence associated with this estimated spike. This motivates the 
  following $p$-value to test \eqref{eq:null-nu}:
\begin{align}
 \mathbb{P}_{H_0}\left(\nu^{\top} Y \geq \nu^{\top} y \;\middle\vert\; \thj(y) \in \mathcal{M}(Y) , \nu^{\top}Y > 0 \right),
\label{eq:pval_ideal}
\end{align} where $\mathcal{M}(Y)$ is the set of spikes estimated from $Y$.  
Roughly speaking, this $p$-value answers the question: {\emph{Assuming that there is no true spike at $\thj$, what's the probability of observing such a large increase in fluorescence at $\thj$, given that we decided to test for a spike at $\hat\tau_j$?}}

The $p$-value in \eqref{eq:pval_ideal} controls the \emph{selective Type I error} \citep{Fithian2014-ow}: the probability of \emph{falsely rejecting the null hypothesis, given that the we decided to conduct the test}. However, computing \eqref{eq:pval_ideal} is hard because the conditional distribution of $\nu^{\top}Y $ given $\thj(y) \in \mathcal{M}(Y)$ and $\nu^{\top} Y>0$  depends on the nuisance parameter $\Pi_{\nu}^{\perp} c$. Therefore, we further condition on $\left\{\Pi_{\nu}^{\perp} Y= \Pi_{\nu}^{\perp} y\right\}$ to eliminate the dependence on the nuisance parameter, arriving at the  $p$-value:
\begin{align}
\pval = \mathbb{P}_{H_0}\left(\nu^{\top} Y \geq \nu^{\top} y \;\middle\vert\; \thj(y) \in \mathcal{M}(Y), \nu^{\top}Y > 0, \Pi_{\nu}^{\perp} Y= \Pi_{\nu}^{\perp} y \right).
\label{eq:pval}
\end{align}
Following arguments in Section 5 of \citet{Lee2016-te},  \eqref{eq:pval} controls the selective Type I error. 
This $p$-value is the focus of this paper. 
\begin{Proposition}
\label{prop:pval}
Suppose that $Y \sim \mathcal{N}(c,\sigma^2 I)$. Then,
{
\begin{equation}
\begin{aligned}
\label{eq:single_param_general}
&\mathbb{P}\left(\nu^{\top} Y \geq \nu^{\top} y \;\middle\vert\; \thj(y) \in \mathcal{M}(Y), \nu^{\top}Y > 0, \Pi_{\nu}^{\perp} Y= \Pi_{\nu}^{\perp} y \right)\\
&=\mathbb{P} \left( \phi \geq \nu^{\top}y \;\middle\vert\; \thj(y) \in \mathcal{M}(y'(\phi)) , \phi>0 \right),
\end{aligned}
\end{equation}
}
for $\phi\sim \mathcal{N}\qty(\nu^\top c,\sigma^2||\nu||_2^2)$, where
{
\begin{align}
\label{eq:phi}
y'(\phi) = \Pi_\nu^\perp y+\phi\cdot \frac{\nu}{||\nu||_2^2} = y+\left(\frac{\phi-\nu^{\top}y}{||\nu||_2^2}\right)\nu. 
\end{align}
}
Furthermore, for $\pval$ defined in \eqref{eq:pval}, and $\phi_0 \sim  \mathcal{N}\qty(0,\sigma^2||\nu||_2^2)$,
{
\begin{align}
\label{eq:single_param}
\pval = \mathbb{P} \left( \phi_0 \geq \nu^{\top}y \;\middle\vert\; \thj(y) \in \mathcal{M}(y'(\phi_0)) , \phi_0>0 \right).
\end{align}}
\end{Proposition} 
It follows that to compute the $p$-value in \eqref{eq:pval}, we must characterize the set 
{
\setlength{\belowdisplayskip}{1pt}
\setlength{\abovedisplayskip}{1pt}
\begin{align}
\label{eq:S_set}
\mathcal{S} = \left\{ \phi: \thj \in \mathcal{M}(y'(\phi)) \right\}.
\end{align} 
}
Of course, the practical details of computing the  set \eqref{eq:S_set} will depend on the function  $\mathcal{M}(\cdot)$ that yields the estimated spikes. The task of characterizing the set \eqref{eq:S_set} is the focus of Section~\ref{section:method}.

In \eqref{eq:phi}, $y'(\phi)$ results from perturbing  $y$ by a function of $\phi$ along the direction defined by $\nu$. Elements of $y$ that fall outside of the support of $\nu$ are not perturbed. Then,  $\mathcal{S}$ in \eqref{eq:S_set}  is the set of $\phi$ such that applying  $\mathcal{M}(\cdot)$ to the perturbed data $y'(\phi)$ results in an estimated spike at $\thj$.

As an example, we generate data from  \eqref{eq:obs-model} with $T=80$, $\sigma = 0.1$, and $\gamma=0.98$ with a true spike at $t=40$, and $c_{41} - \gamma c_{40} = 1$. This results in $\phi=\nu^{\top} y = 1.02$.  
  Solving the optimization problem in \eqref{eq:l0-opt-complete} with $\lambda=0.75$ results in a single estimated spike at $t=40$, which means that $\mathcal{S}=\{ \phi: 40 \in \mathcal{M}(y'(\phi)) \}$. The set-up is displayed in Figure \ref{fig:perturbation}(a). In panel (b), we perturb the observed data with $\phi=0$. Now a spike is no longer estimated at $t=40$, so $0 \notin \mathcal{S}$. In panel (c), we perturb the observed data  with $\phi=2$ to exaggerate the increase in fluorescence; now a spike is estimated at  $t=40$, so $2 \in \mathcal{S}$.  In panel (d), we display the set $\mathcal{S} \cap (0, +\infty)=  (0.29, +\infty)$.

\section{Computation of the selective $p$-value}
\label{section:method}

Proposition~\ref{prop:pval} indicates that we can compute the $p$-value defined in \eqref{eq:pval} provided that we are able to compute 
the set $\mathcal{S}$ defined in \eqref{eq:S_set}. 
In this section, we will show that $\mathcal{S}$ can be efficiently computed for  spikes estimated by solving a variant of the $\ell_0$ optimization problem in \eqref{eq:l0-opt-complete} that omits the positivity constraint $c_t -  \gamma  c_{t-1} \geq 0$: namely,
\begin{align}
\minimize{c_1,\ldots, c_T\geq 0}{\frac12 \sum_{t = 1}^{T} (y_{t} - c_{t})^{2} + \lambda \sum_{t=2}^{T} 1_{(c_{t}\neq \gamma c_{t-1})}}.
\label{eq:l0-opt}
\end{align}
In Section~\ref{sec:review_dp}, we briefly review the work of \citet{Jewell2018-de} and \citet{Jewell2019-jw}, who showed that the solution to \eqref{eq:l0-opt} can be characterized through a recursion involving piecewise quadratic functions. 
The rest of this section is quite technical. An overview  is as follows: 
\begin{itemize}
	  \setlength\itemsep{1pt}
	\item
   In Section~\ref{sec:p_val_calc}, we introduce functions $C(\phi)$ and $C'(\phi)$ such that $\mathcal{S} = \{ \phi: C(\phi) \leq C'(\phi) \}$.
   \item
   Then, in Section~\ref{sec:comp}, we show that $C(\phi)$ and $C'(\phi)$ are piecewise quadratic in $\phi$.
   \item
   We can therefore apply approaches from \citet{Rigaill2015-pm} and \citet{Maidstone2017-vc} for efficient manipulation of piecewise quadratic functions, to efficiently compute $\{ \phi: C(\phi) \leq C'(\phi) \}$, and in turn, $\mathcal{S}$ in \eqref{eq:S_set}.
\end{itemize}

\subsection{An algorithm to solve \eqref{eq:l0-opt}}
\label{sec:review_dp}
\citet{Jewell2018-de} noted that  \eqref{eq:l0-opt} 
 is equivalent to a changepoint detection problem,
\begin{align}
\minimize{0=\tau_0<\tau_1<\ldots<\tau_k<\tau_{k+1}=T, k}{\sum_{j=0}^k \min_{\alpha\geq 0} \left\{\frac{1}{2} \sum_{t=\tau_j+1}^{\tau_{j+1}} (y_t-\alpha \gamma^{t-\tau_{j+1}})^2 \right\} +\lambda k},
\label{eq:changepoint_form}
\end{align} 
in the sense that $\{ t: \hat{c}_{t+1}-\gamma \hat{c}_t \neq 0\} = \{\hat{\tau}_1,\ldots , \hat{\tau}_{J} \}$, where $\hat{c}_1,\ldots, \hat{c}_T$ and $\hat\tau_1,\ldots, \hat\tau_J, J$ are solutions to \eqref{eq:l0-opt} and \eqref{eq:changepoint_form}, respectively. 
Furthermore, let $F(s)$ denote the optimal objective of \eqref{eq:changepoint_form} for the first $s$ data points $y_{1:s} = (y_1,\ldots, y_s)$, and define
\begin{align}
\label{eq:def_cost}
\cost\qty(y_{1:s},\alpha;\gamma) = \min_{0\leq\tau <s} \left \{ F(\tau) + 
 \left\{ \frac{1}{2} \sum_{t=\tau+1}^s \qty(y_t - \alpha \gamma^{t-s})^2 \right\} + \lambda \right\}.
\end{align}
In words, $\cost\qty(y_{1:s},\alpha;\gamma)$ is the optimal cost of partitioning the data $y_{1:s}$ into exponentially decaying regions with decay parameter $\gamma$, given that the calcium at the $s$th timepoint equals $\alpha$. It turns out that $\cost\qty(y_{1:s},\alpha;\gamma)$  admits a recursion that can be solved efficiently, which provides intuition for characterizing the set $\mathcal{S}$ in the next section.

\begin{Proposition}[Proposition 1 and Section 2.2.3 in \citet{Jewell2019-jw}]
\label{prop:cost_recursion}
For $\cost\qty(y_{1:s},\alpha;\gamma)$ defined in \eqref{eq:def_cost}, the following recursion holds:
\begin{equation}
\label{eq:cost_recursion}
\small {\cost}\qty(y_{1:s},\alpha;\gamma) = \min \left\{ {\cost}\qty(y_{1:(s-1)},\alpha/\gamma;\gamma), \min_{\alpha'\geq 0}{\cost}\qty(y_{1:(s-1)},\alpha';\gamma)+\lambda \right\} + \frac{1}{2}\qty(y_s-\alpha)^2,
\end{equation}
\normalsize with ${\cost}\qty(y_{1},\alpha;\gamma) = \frac{1}{2}\qty(y_1 - \alpha)^2$. Also, ${\cost}\qty(y_{1:s},\alpha;\gamma)$ is a piecewise quadratic function of $\alpha$.
\end{Proposition}

In words, the recursion in \eqref{eq:cost_recursion} considers the following two possibilities: (i) there is no spike at the $(s-1)$th time point, in which case the calcium decays exponentially, and the cost equals $\cost\qty(y_{1:(s-1)},\alpha/\gamma;\gamma)$; (ii) there is a spike at the $(s-1)$th time point, and the cost equals the optimal cost up to $s-1$, $\min_{\alpha'\geq 0}\cost\qty(y_{1:(s-1)},\alpha';\gamma)$, plus the cost of placing a changepoint, $\lambda$.

Building on Proposition~\ref{prop:cost_recursion}, \citet{Jewell2019-jw} made use of the recent literature on \emph{functional pruning}  \citep{Maidstone2017-vc,Rigaill2015-pm} to efficiently compute the cost functions $\cost(y_{1:s},\alpha;\gamma)$, as a function of $\alpha$, using clever manipulations of the piecewise quadratic functions involved in the recursion \eqref{eq:cost_recursion}. This approach has a worst-case complexity of
 $O(s^2)$, and is often much faster in practice. Once the cost functions have been computed, it is straightforward to identify the changepoints in \eqref{eq:changepoint_form}, and, in turn, the spikes in \eqref{eq:l0-opt}. Details are provided in Section 2.2 of \citet{Jewell2019-jw}.

\subsection{Characterizing $\mathcal{S}$ for spikes estimated using \eqref{eq:l0-opt}}
\label{sec:p_val_calc}

In what follows, we leverage ideas from \citet{Jewell2019-vv} to develop an efficient algorithm to analytically characterize \eqref{eq:S_set}, i.e., the set of values $\phi$ such that solving \eqref{eq:l0-opt} on perturbed data $y'(\phi)$  yields an estimated spike $\thj$. Throughout this section, we define  $y_{1:s} = \qty(y_1,\ldots,y_s)$, $y_{T:s} = \qty(y_T,\ldots,y_s)$,  $y'_{1:s}(\phi) = \qty([y'(\phi)]_1,\ldots,[y'(\phi)]_s)$, and $y'_{T:s}(\phi) = \qty([y'(\phi)]_T,\ldots,[y'(\phi)]_s)$.

Let $\mathcal{M}(y)$ denote the spikes estimated by applying \eqref{eq:l0-opt} to the data $y$. To begin, we characterize the set $\mathcal{S}$ using the $\cost\qty(y_{1:s},\alpha;\gamma)$ function defined in \eqref{eq:def_cost}. 

\begin{Proposition}
\label{prop:characterization_S}
Let $\{\hat\tau_1,\ldots,\hat\tau_J \} = \left\{ t: \hat{c}_{t+1}-\gamma\hat{c}_{t}\neq 0\right\}$ be the timesteps of the estimated spikes from \eqref{eq:l0-opt}. 
For ${\cost}\qty(y_{1:s},\alpha;\gamma)$ in \eqref{eq:def_cost}, we have that
 {
  \setlength{\belowdisplayskip}{1pt}
\setlength{\abovedisplayskip}{1pt}
\begin{align}
\label{eq:cost_spike_thj}
C(\phi) &= \min_{\alpha\geq 0} \left\{ {\cost}\qty(y'_{1:\thj}(\phi),\alpha;\gamma)\right\} + \min_{\alpha'\geq 0} \left\{ {\cost}\qty(y'_{T:(\thj+1)}(\phi),\alpha';1/\gamma)\right\} + \lambda \, 
\end{align} 
}
equals the objective of \eqref{eq:changepoint_form} applied to data $y'(\phi)$, subject to the constraint that $\hat\tau_j$ \emph{is} an estimated spike. Furthermore,
{
  \setlength{\belowdisplayskip}{1pt}
\setlength{\abovedisplayskip}{1pt} 
 \begin{align}
\label{eq:cost_no_spike_thj}
C'(\phi) &= \min_{\alpha\geq 0} \left\{ {\cost}\qty(y'_{1:\thj}(\phi),\alpha;\gamma) +  {\cost}\qty(y'_{T:(\thj+1)}(\phi),\gamma \alpha; 1/\gamma )\right\}
\end{align} 
}
 equals the objective of \eqref{eq:changepoint_form} applied to data $y'(\phi)$, subject to the constraint that $\hat\tau_j$ \emph{is not} an estimated spike.
Moreover, for $\mathcal{S}$ defined in \eqref{eq:S_set}, 
{\setlength{\belowdisplayskip}{1pt}
\setlength{\abovedisplayskip}{1pt}
\begin{equation}
\label{eq:S_in_phi}
\mathcal{S}  = \left\{ \phi:C(\phi) \leq C'(\phi) \right\}.
\end{equation} 
}
\end{Proposition} 
Therefore, to characterize $\mathcal{S}$ in \eqref{eq:S_set}, it suffices to characterize $C(\phi)$ in \eqref{eq:cost_spike_thj} and $C'(\phi)$ in \eqref{eq:cost_no_spike_thj}. To do this, we will leverage the toolkit from \citet{Jewell2019-vv} to analytically characterize $\cost(y_{1:s}'(\phi),\alpha;\gamma)$ as a function of both $\phi$ and $\alpha$. While this is related to the task of efficiently characterizing $\cost(y_{1:s},\alpha;\gamma)$ in terms of $\alpha$ in Section \ref{sec:review_dp}, it is substantially more challenging, due to the presence of the additional parameter $\phi$.

\subsection{Efficient computation of $\mathcal{S}$ via $\emph{\cost}\qty(y_{1:s}'(\phi),\alpha;\gamma)$}\label{sec:comp}

While Proposition~\ref{prop:cost_recursion} cannot be directly applied to $\cost\qty(y_{1:s}'(\phi),\alpha;\gamma)$, we can arrive at a very similar result by adapting Theorem 2 from \citet{Jewell2019-vv}. 

\begin{Proposition}
\label{prop:bivariate_cost_recursion}
For $\thj-h+1 \leq s \leq \thj$ and $y'(\phi)$ defined in \eqref{eq:phi},  
{
\setlength{\belowdisplayskip}{1pt}
\setlength{\abovedisplayskip}{1pt}
\begin{align}
{\cost}\qty(y_{1:s}'(\phi),\alpha;\gamma) = \min_{f\in\mathcal{C}_s} f(\alpha,\phi),
\label{eq:forward_fpop}
\end{align}
}
where $\mathcal{C}_s$ is a collection of $s-\thj+h+1$ piecewise quadratic functions of $\alpha$ and $\phi$ constructed with the initialization  
{
\setlength{\belowdisplayskip}{1pt}
\setlength{\abovedisplayskip}{1pt}
\begin{align}
\mathcal{C}_{\thj-h} = \left\{ {\cost}(y_{1:(\thj-h)}'(\phi),\alpha;\gamma) \right\},
\label{eq:recursion_init}
\end{align} 
}
and the recursion
{
\setlength{\belowdisplayskip}{1pt}
\setlength{\abovedisplayskip}{1pt}
\begin{align}
\label{eq:union_update_bivariate}
\mathcal{C}_s = \left( \bigcup_{f \in \mathcal{C}_{s-1}} \left\{ f(\alpha/\gamma,\phi)+\frac{1}{2}(y'_s(\phi)-\alpha)^2  \right\} \right) \bigcup \left\{ g_s(\phi)+\frac{1}{2}(y'_s(\phi)-\alpha)^2  \right\} \, ,
\end{align} 
}
where  
{
\setlength{\belowdisplayskip}{1pt}
\setlength{\abovedisplayskip}{1pt}
\begin{align}
g_s(\phi) = \min_{f\in\mathcal{C}_{s-1}}\min_{\alpha\geq 0} f(\alpha,\phi) +\lambda.
\label{eq:g_update}
\end{align}
}
\end{Proposition}
Proposition~\ref{prop:bivariate_cost_recursion} applies when $\thj - h \geq 1$; Appendix \ref{appendix:general_tau_L_tau_R} details the extension for $\thj - h < 1$. Proposition~\ref{prop:bivariate_cost_recursion} indicates that $\cost\qty(y_{1:s}'(\phi),\alpha;\gamma)$ is in fact a bivariate piecewise quadratic function of both $\phi$ and $\alpha$ (in contrast to a univariate piecewise quadratic function of $\alpha$, as in $\cost\qty(y_{1:s},\alpha;\gamma)$). Moreover, $\cost\qty(y_{1:s}'(\phi),\alpha;\gamma)$ can be efficiently computed with the recursion in \eqref{eq:union_update_bivariate}. 

To compute $C(\phi)$ in \eqref{eq:cost_spike_thj}, we first use Proposition~\ref{prop:bivariate_cost_recursion}  to compute the collection $\mathcal{C}_{\thj}$ such that $\cost\qty(y'_{1:\thj}(\phi),\alpha;\gamma) = \min_{f \in \mathcal{C}_{\thj}} f(\alpha,\phi)$. Using a slight modification of Proposition~\ref{prop:bivariate_cost_recursion}  (see Proposition \ref{prop:bivariate_cost_recursion_reverse}  in Appendix \ref{appendix:recursion_reverse}),  we also compute the collection $\tilde{\mathcal{C}}_{\thj+1}$ such that  $\cost\qty(y'_{T:(\thj+1)}(\phi),\alpha';1/\gamma) = \min_{f \in \tilde{\mathcal{C}}_{\thj+1}} f(\alpha',\phi)$.
Then, we have that 
{
\setlength{\belowdisplayskip}{1pt}
\setlength{\abovedisplayskip}{1pt}
\begin{equation}
\begin{aligned}
C(\phi) &\overset{a.}{=} \min_{\alpha\geq 0} \left\{ \min_{f \in \mathcal{C}_{\thj}} f(\alpha,\phi) \right\} +  \min_{\alpha' \geq 0} \left\{ \min_{f \in \tilde{\mathcal{C}}_{\thj+1}} f(\alpha' ,\phi) \right\} + \lambda \,  \\
 &\overset{b.}{=}  \min_{f \in \mathcal{C}_{\thj}} \left\{ \min_{\alpha\geq 0}f(\alpha,\phi) \right\} +  \min_{f \in \tilde{\mathcal{C}}_{\thj+1}}  \left\{\min_{\alpha' \geq 0}  f(\alpha' ,\phi) \right\} + \lambda.
 \end{aligned}
\label{eq:Cphi}
\end{equation}
}
Here, $a.$ follows from combining the definition of $C(\phi)$ in \eqref{eq:cost_spike_thj} with the expression for ${\cost}\qty(y_{1:s}'(\phi),\alpha;\gamma)$ in \eqref{eq:forward_fpop} and the expression for ${\cost}\qty(y_{T:s}'(\phi),\alpha;1/\gamma)$ in Appendix~\ref{appendix:recursion_reverse}; $b.$ follows from changing the order of minimizations. Furthermore, since Proposition \ref{prop:bivariate_cost_recursion} states that the functions in $\mathcal{C}_{\thj}$ are piecewise quadratic in $\phi$ and $\alpha$, it follows that $\min_{\alpha \geq 0} f(\alpha ,\phi)$ is a piecewise quadratic function of $\phi$ only. A similar result in Appendix~\ref{appendix:recursion_reverse} guarantees that the functions in $\tilde{\mathcal{C}}_{\thj+1}$ are piecewise quadratic in $\phi$ and $\alpha$; therefore, for each $f\in\tilde{\mathcal{C}}_{\thj+1}$, we have that $\min_{\alpha' \geq 0}  f(\alpha' ,\phi)$ is piecewise quadratic in $\phi$. Because minimization and summation over piecewise quadratic functions yields a piecewise quadratic function, it follows that $C(\phi)$ is piecewise quadratic in $\phi$.

We now consider computing $C'(\phi)$ in \eqref{eq:cost_no_spike_thj}. Plugging in the expressions for ${\cost}\qty(y_{1:s}'(\phi),\alpha;\gamma)$ in \eqref{eq:forward_fpop} and ${\cost}\qty(y_{T:s}'(\phi),\alpha;1/\gamma)$ in Appendix~\ref{appendix:recursion_reverse} into \eqref{eq:cost_no_spike_thj}, we have 
{
\setlength{\belowdisplayskip}{1pt}
\setlength{\abovedisplayskip}{1pt}
\begin{equation}
\begin{aligned}
C'(\phi) &= \min_{\alpha\geq 0} \qty{  \min_{f \in \mathcal{C}_{\thj}} f(\alpha,\phi) +  \min_{f \in \tilde{\mathcal{C}}_{\thj+1}} f(\gamma\alpha,\phi)  } \\
&= \min_{\alpha\geq 0} \qty{ \min_{ f\in  \mathcal{C}_{\thj}, \tilde{f}\in\tilde{\mathcal{C}}_{\thj+1}}\qty{  f(\alpha,\phi)+\tilde{f}(\gamma\alpha,\phi) } }  \\
&= \min_{ f\in  \mathcal{C}_{\thj}, \tilde{f}\in\tilde{\mathcal{C}}_{\thj+1}} \qty{ \min_{\alpha\geq 0} \qty{  f(\alpha,\phi)+\tilde{f}(\gamma\alpha,\phi) } } .
\end{aligned}
\label{eq:Cphi_prime}
\end{equation} 
}
By Proposition \ref{prop:bivariate_cost_recursion} and Appendix \ref{appendix:recursion_reverse}, both $ f(\alpha,\phi)$ and $\tilde{f}(\gamma\alpha,\phi)$ are piecewise quadratic in $\alpha$ and $\phi$, which implies that $\min_{\alpha\geq 0}  \qty{  f(\alpha,\phi)+\tilde{f}(\gamma\alpha,\phi) } $ is a piecewise quadratic function of $\phi$. Therefore, $C'(\phi)$ is the minimum over a set of piecewise quadratic functions of $\phi$, and thus is itself piecewise quadratic in $\phi$.

Finally, since both $C(\phi)$ and $C'(\phi)$ are piecewise quadratic in $\phi$, we can apply ideas from the functional pruning literature to compute the set $\mathcal{S}=\{\phi: C(\phi)\leq C'(\phi)  \}$ efficiently \citep{Maidstone2017-vc,Rigaill2015-pm}. The procedure and computation time are summarized in Algorithm~\ref{alg:S_computation} (see Appendix~\ref{appendix:algorithm_s}) and Proposition~\ref{prop:S_set_timing}, respectively.

\begin{Proposition}
\label{prop:S_set_timing}
Once ${\cost}\qty(y_{1:(\thj-h)},\alpha;\gamma)$ and ${\cost}\qty(y_{T:(\thj+h+1)},\alpha;1/\gamma)$ have been computed, Algorithm~\ref{alg:S_computation} can be performed in $O(h^2)$ operations.
\end{Proposition}
The worst-case complexity of computing $\cost\qty(y_{1:(\thj-h)},\alpha;\gamma)$ and $\cost\qty(y_{T:(\thj+h+1)},\alpha;1/\gamma)$ is $O(T^2)$, but it is often much faster in practice \citep{Jewell2019-jw}. Furthermore, ${\cost}(y_{1:(\thj-h)},\alpha;\gamma)$  was already  computed to solve \eqref{eq:l0-opt}. Therefore, estimating $J$ changepoints via  \eqref{eq:l0-opt} and then computing their corresponding $p$-values has a worst-case computation time of $O(T^2 + J h^2)$, and is often much faster in practice. An empirical analysis of the timing complexity of Algorithm~\ref{alg:S_computation} can be found in Appendix~\ref{appendix:empirical_computation}. We walk through Algorithm~\ref{alg:S_computation} on a small example in Appendix~\ref{appendix:example}.

\section{Confidence intervals with correct selective coverage}
\label{section:extension}

\label{sec:conf_int}
 We now  construct a $(1-\alpha)$ confidence interval for $\nu^{\top} c$, the change in calcium associated with an estimated spike $\thj$. 
\begin{Proposition}
\label{prop:ci}
Suppose that \eqref{eq:obs-model} holds, and let $\thj$ denote a spike estimated by solving  \eqref{eq:l0-opt}.  
 For a given value of $\alpha\in(0,1)$, define functions $\theta_L(t)$ and $\theta_U(t)$ such that
 {
 \setlength{\belowdisplayskip}{1pt}
\setlength{\abovedisplayskip}{1pt}
\begin{align}
F_{\theta_L(t),\sigma^2||\nu||_2^2}^{\mathcal{S}\cap(0,\infty)}(t) = 1-\frac{\alpha}{2}, \quad  F_{\theta_U(t),\sigma^2||\nu||_2^2}^{\mathcal{S}\cap(0,\infty)}(t) = \frac{\alpha}{2},
\label{eq:LCB_UCB}
\end{align}
 }
  where $F_{\mu,\sigma^2}^{\mathcal{S}\cap(0,\infty)}(t)$ is the cumulative distribution function of a normal distribution with mean $\mu$ and variance $\sigma^2$, truncated to the set $\mathcal{S}\cap(0,\infty)$. 
Then  $[\theta_L(\nu^{\top} Y),\theta_U(\nu^{\top} Y)]$ is a $(1-\alpha)$ confidence interval for $\nu^\top c$, in the sense that 
 {
 \setlength{\belowdisplayskip}{1pt}
\setlength{\abovedisplayskip}{1pt}
 \begin{align}
 \mathbb{P}\Big(\nu^{\top}c \in \left[\theta_L(\nu^{\top} Y),\theta_U(\nu^{\top} Y)\right]  \;\Big\vert\; \thj(y) \in \mathcal{M}(Y), \nu^\top Y  > 0, \Pi_{\nu}^{\perp} Y= \Pi_{\nu}^{\perp} y  \Big) = 1-\alpha.
 \end{align}
  }
\end{Proposition}
Thus, the confidence interval guarantees coverage \emph{conditional on the selection procedure} \citep{Lee2016-te,Fithian2014-ow,Tibshirani2016-bx}. Computing $\theta_L$ (and $\theta_U$) in \eqref{eq:LCB_UCB} amounts to a root-finding problem, which can be solved, e.g., using bisection.

\section{Simulation study}
\label{section:sim}
Recall that our selective inference framework involves testing the null hypothesis of no increase in calcium at timepoints for which the following two conditions hold: (i) 
this timepoint was an estimated spike in the solution to \eqref{eq:l0-opt}; and 
(ii) $\nu^\top y > 0$ for this particular timepoint.  
We let $\left\{ \tilde\tau_1,\ldots,\tilde\tau_M \right\}$
denote the set of timepoints satisfying these two conditions, i.e., the set of timepoints to be tested using our selective inference approach.
That is, 
 {
 \setlength{\belowdisplayskip}{1pt}
\setlength{\abovedisplayskip}{1pt}
\begin{align}
 \left\{ \tilde\tau_1,\ldots,\tilde\tau_M \right\}= \left\{ \hat\tau_1,\ldots,\hat\tau_J: \nu^\top y>0 \right\},
 \label{eq:tautilde}
 \end{align} 
 }
where $\left\{ \hat\tau_1,\ldots,\hat\tau_J \right\}$ denotes the set of spikes estimated from \eqref{eq:l0-opt}. \eqref{eq:tautilde} slightly abuses notation, since $\nu$  in \eqref{eq:nu_def} is a function of $\thj$. 
Therefore, the right-hand side of \eqref{eq:tautilde} should be interpreted as the estimated spike times associated with an increase in fluorescence in a window of $\pm h$.  
\subsection{Selective Type I error control under the global null}
\label{sec:type_I_error}
We simulated $y_1,\hdots, y_{10,000}$ according to \eqref{eq:obs-model} with $\gamma = 0.98$, $\sigma = 0.2$, and $z_t = 0$ for all $t=2,\ldots,10,000$. Thus, the null hypothesis $H_0:\nu^{\top} c = 0$ holds for all contrast vectors $\nu$ defined in \eqref{eq:nu_def}, regardless of the timepoint being tested, and the value of $h$ in \eqref{eq:nu_def}. 

We solved \eqref{eq:l0-opt} with the tuning parameter $\lambda$ selected to yield $J=100$ estimated spikes; thus, $J=100$ in \eqref{eq:tautilde}.
 Then, for each $\hat\tau_j$, 
 we constructed four contrast vectors $\nu$, defined in \eqref{eq:nu_def}, corresponding to $h\in\{1,2,10,20\}$. Then, provided that $\nu^\top y > 0$,  we computed the selective $p$-values in \eqref{eq:pval} and the naive (Wald) $p$-values defined as
 {
\setlength{\belowdisplayskip}{1pt}
\setlength{\abovedisplayskip}{1pt}
\begin{align}
\label{eq:wald_p_val}
\mathbb{P}\qty(\nu^{\top}Y \geq \nu^{\top}y).
\end{align}
}
The results, aggregated over 1,000 simulations, are displayed in Figure \ref{fig:Type_I_power}. Panels (a) and (b) display quantile-quantile plots of the naive and selective $p$-value quantiles versus the Uniform(0,1) quantiles, respectively; we see that for all values of $h$,  (i) the naive procedure in \eqref{eq:wald_p_val} is anti-conservative; and (ii) the proposed selective test in \eqref{eq:pval} controls the selective Type I error. 

\subsection{Power and detection probability}
\label{sec:power_result}

Recall that we test $H_0: \nu^\top c=0$ only for timepoints in the set $\left\{ \tilde\tau_1,\ldots,\tilde\tau_M \right\}$ defined in \eqref{eq:tautilde}. Therefore, we separately consider the \emph{conditional power} of the proposed test \citep{Jewell2019-vv,Hyun2018-pe} and the \emph{detection probability} of the spike estimation procedure. 

Given a dataset $y = (y_1,\ldots,y_T)$ with $K$ true spikes $\tau_1,\ldots,\tau_K$, and recalling the definition in \eqref{eq:tautilde}, we define the \emph{conditional power} to be the ratio between (i) the number of true spikes for which the nearest  null hypothesis among those tested (i.e., the set $\{\tilde\tau_1,\ldots,\tilde\tau_M\}$ in \eqref{eq:tautilde}) is within $b$ timepoints of the true spike \emph{and} has a $p$-value less than $\alpha$; and (ii) the number of true spikes for which the nearest tested hypothesis falls within $b$ timepoints. That is, 
{
\setlength{\belowdisplayskip}{1.5pt}
\setlength{\abovedisplayskip}{1.5pt}
\begin{align}
\label{eq:cond}
\text{Conditional power} = \frac{\sum_{i=1}^K 1\left( p_{m(i)}\leq \alpha,| \tau_i - \tilde{\tau}_{m(i)} | \leq b  \right) }{ \sum_{i=1}^K 1\left( | \tau_i - \tilde{\tau}_{m(i)} | \leq b \right) },
\end{align} 
}
where  $m(i) = \argmin{m}{|\tau_i - \tilde{\tau}_m|}$ indexes the 
 timepoint to be tested that is closest to the $i$th true spike time,  and $p_{m(i)}$ is the corresponding $p$-value. 
Since \eqref{eq:cond} conditions on the event that the closest tested timepoint $\tilde{\tau}_{m(i)}$ is within $b$ timepoints of the true spike time $\tau_i$, we also consider the \emph{detection probability}, which tells us how often this event occurs:
{
\setlength{\belowdisplayskip}{1.5pt}
\setlength{\abovedisplayskip}{1.5pt}
\begin{equation}
\label{eq:detect}
\text{Detection probability} = \frac{\sum_{i=1}^K 1\left( | \tau_i - \tilde{\tau}_{m(i)}| \leq b \right)}{K}.
\end{equation}
}
We evaluate the detection probability and conditional power on data generated from \eqref{eq:obs-model} with $T=10,000$, $\gamma = 0.98$, $z_t \overset{\text{i.i.d.}}{\sim} \text{Poisson}(0.01)$ for all $t=2,\ldots,T$, and $\sigma \in \{1,2,\hdots, 10\}$. In \eqref{eq:l0-opt}, $\lambda$ is chosen to yield $J=100$ estimated spikes, i.e., $J=100$ in \eqref{eq:tautilde}; this is the expected number of spikes in this simulation. We generate 500 datasets, and consider $h \in \{1,2,10,20\}$ in \eqref{eq:nu_def}. Results with $\alpha=0.05$ and $b=2$ are displayed in Figure \ref{fig:Type_I_power}. Panels (c) and (d) display the detection probability and conditional power, respectively. 
Both quantities increase as $1/\sigma$ increases. Interpreting the relationship between conditional power and $h$ requires more care: larger values of $h$ typically give rise to higher conditional power for the same value of $\sigma$. 
However, the null hypothesis in \eqref{eq:null-nu} changes as a function of $h$, and it may be the case that $H_0$ holds for a smaller value of $h$, but not for a larger value.

\subsection{Confidence interval coverage and width}
\label{sec:CI_result}
We now generate data from \eqref{eq:obs-model} with $T=10,000$, $\gamma = 0.98$, $z_t \overset{\text{i.i.d.}}{\sim} \text{Poisson}(0.01)$ for $t=2,\ldots,T$, and $\sigma \in \{1,2,\ldots, 6\}$. The tuning parameter $\lambda$ in  \eqref{eq:l0-opt} is chosen to yield $J=100$ estimated spikes, i.e., $J=100$ in \eqref{eq:tautilde}. 
For each timepoint $\tilde\tau_m$ in \eqref{eq:tautilde},  we construct 95\% selective confidence intervals $\qty[\theta_L \qty(\nu^\top y), \theta_U \qty(\nu^\top y)]$ for the parameter $\nu^\top c$,
 with $h\in \{1,2,10,20\}$. As a comparison, we also construct 95\% naive (Wald) confidence intervals for $\nu^\top c$, 
 {
  \setlength{\belowdisplayskip}{1pt}
\setlength{\abovedisplayskip}{1pt}
\begin{equation}
\qty[\nu^\top y -1.96 \sigma ||\nu||_2 ,  \nu^\top y + 1.96 \sigma ||\nu||_2 ],
\label{eq:naiveci}
\end{equation} 
}
which do not account for the fact that we decided to test $H_0: \nu^\top c=0$ after looking at the data. 


Suppose that we construct $M$ confidence intervals (see \eqref{eq:tautilde} for the definition of $M$), we define their coverage, average width, and average midpoint relative to the value of $\nu^\top y$, as follows:
{
  \setlength{\belowdisplayskip}{1em}
\setlength{\abovedisplayskip}{1em}
\begin{align}
\label{eq:coverage}
\text{Coverage} &=  \frac{1}{M}\sum_{m=1}^M 1\left( \nu^{\top}c \in \left[\theta_L\qty(\nu^\top y), \theta_U\qty(\nu^\top y) \right] \right) ,\\
 \label{eq:width}
 \text{Width} &=  \frac{1}{M}\sum_{m=1}^M \left( \theta_U\qty(\nu^\top y)-\theta_L \qty(\nu^\top y)\right),\\
\label{eq:skew}
\text{Midpoint}&=  \frac{1}{M}\sum_{m=1}^M \left( \frac{\theta_L\qty(\nu^\top y)+\theta_U\qty(\nu^\top y)}{2}-\nu^{\top}y \right).
\end{align}
}
There is a slight abuse of notation in \eqref{eq:coverage}--\eqref{eq:skew}, since  $\nu$ is a function of $\tilde\tau_m$ (see \eqref{eq:nu_def}). 

Panels (a) and (b) of Figure~\ref{fig:CI} display the coverage of the  selective and naive confidence intervals, respectively. The selective  intervals achieve the nominal 95\% coverage of the parameter $\nu^{\top}c$ across all values of $\sigma$ and  $h$. The naive intervals have poor coverage when $1/\sigma$ is small. As $1/\sigma$ increases, however, the coverage of the naive approach improves. This is because when $1/\sigma$ is very large (and hence $\sigma$ is very small), 
 the spikes estimated by solving the $\ell_0$ problem \eqref{eq:l0-opt} do not change much as a function of $\phi$, and thus the truncation set $\{\phi: \thj \in \mathcal{M}(y'(\phi))\}$ in \eqref{eq:S_set} is very large; this means that ignoring this conditioning set has little effect on the confidence interval computed. A similar observation was made for the lasso in \citet{Zhao2017-he}. 
 
Figure~\ref{fig:CI}(c) investigates the average width of the naive and selective confidence intervals as a function of $\sigma$, for $h=1$. Selective intervals are much wider, on average. But the difference in width diminishes as $1/\sigma$ increases. This is congruent with our observations in panel (b): selective intervals can be well-approximated by naive intervals when $1/\sigma$ is large.
 
To understand how selective intervals achieve the nominal coverage, we plot the average midpoint of the selective intervals, after subtracting out $\nu^\top y$,  in panel (d). If a confidence interval is symmetric around $\nu^\top y$ (as is the case for the naive  interval in \eqref{eq:naiveci}), then this value equals zero. A  positive value indicates that the  interval is shifted upwards relative to $\nu^\top y$, and a negative value indicates the opposite.
We see that for all values of $h$ and $\sigma$, the selective intervals have a negative value of the midpoint after subtracting out $\nu^\top y$. This indicates that the  selective approach provides an interval that is centered below the observed value of $\nu^\top y$. 

Throughout this section, we have assumed that $\sigma^2$ in \eqref{eq:obs-model} is known. However, if it is unknown, we propose to use $\hat\sigma^2 = \frac{1}{T-1}\sum_{t=1}^T\left(y_t - \hat{c}_t \right)^2$ as an estimator for $\sigma^2$ in evaluating the $p$-value in \eqref{eq:pval}, where $\hat{c}_t$ is the  solution to \eqref{eq:l0-opt}. In Appendix~\ref{appendix:estimated_sigma}, we demonstrate that this estimator has adequate selective Type I error control and substantial power in a simulation study.

\section{Application to calcium imaging data}
\label{section:real_data}

\subsection{Overview of data and analysis plan} 
\label{sec:real-overview}

Here we examine data aggregated as part of the \texttt{spikefinder} challenge \citep{Theis2016-wd}. The data consist of simultaneous electrophysiology and calcium recordings for a number of neurons. We consider the spike times recorded through electrophysiology to be the true, or ``ground truth", spike times, against which we assess the accuracy of the spikes estimated via calcium imaging \citep{Theis2016-wd,Berens2018-wq}. The calcium recordings have been resampled to 100 Hz, and linear trends removed, as described in \cite{Theis2016-wd}.

As in prior work \citep{Pachitariu2018-ew,Jewell2019-jw}, we set the value of $\gamma$ in \eqref{eq:l0-opt} based on known properties of the calcium indicators (0.986 for GCamp6f and 0.995 for GCamp6s).  In settings where the properties of the calcium indicators are unknown, we can leverage a proposal from \citet{Fleming2021-mk} for estimating $\gamma$.
Since the calcium has a nonzero baseline, we solve a slight modification of \eqref{eq:l0-opt}:
{
\setlength{\belowdisplayskip}{1pt}
\setlength{\abovedisplayskip}{1pt}
\begin{align}
\minimize{c_1,\ldots, c_T\geq 0, \beta_0}{\frac12 \sum_{t = 1}^{T} (y_{t} - c_{t} - \beta_0)^{2} + \lambda \sum_{t=2}^{T} 1_{(c_{t}\neq \gamma c_{t-1})}}
\label{eq:l0-opt-intercept}.
\end{align} 
}
We first computed the average firing rates for data from \citet{Chen2013-ha}, which are 0.53 and 0.42 spikes per second for GCamp6f and GCamp6s recordings, respectively. For each recording, we solved \eqref{eq:l0-opt-intercept} over a two-dimensional grid of $(\lambda,\beta_0)$ values on the first 25\% of the recording, and considered only the 20 pairs that yield an estimated average firing rate closest to the average firing rate of the corresponding calcium indicator. Among the 20 pairs, we then chose the   $(\lambda,\beta_0)$ pair that results in the smallest objective in \eqref{eq:l0-opt-intercept} on the first 25\% of the recording.

We quantify the accuracy of the estimated spikes resulting from \eqref{eq:l0-opt-intercept}
by comparing them to the ground truth spikes recorded using electrophysiology on the remaining 75\% of each recording, using two widely-used metrics: (i)
 The \emph{correlation between the true and estimated spikes}, after downsampling to 25 Hz, as described in \citet{Theis2016-wd}. Larger values of the correlation suggest better agreement between the true and estimated spikes.
(ii) The \emph{Victor-Purpura distance between the true and estimated spikes}, with cost parameter 10, as proposed in \citet{victor1996nature,victor1997metric}. Smaller values of the Victor-Purpura distance suggest better agreement between the true and estimated spikes.

We also quantify the accuracy of the subset of estimated spikes from \eqref{eq:l0-opt-intercept} for which the $p$-values in \eqref{eq:pval} are below $0.05$.  As before, we computed the $p$-value \eqref{eq:pval} only on the 
estimated spikes for which $\nu^\top y > 0$. For each recording, we used $\hat\sigma^2 = \frac{1}{T-1}\sum_{t=1}^T\left( y_t - \hat{c}_t \right)^2$ to estimate the variance parameter $\sigma^2$, where $\hat{c}_t$ is the  solution to \eqref{eq:l0-opt-intercept}. We used $h=20$ in \eqref{eq:nu_def}; this choice is motivated by the half decay times of the calcium indicators used in \citet{Chen2013-ha}, which are approximately 150 ms and 250 ms for GCamp6f and GCamp6s, respectively. Results for other values of $h$, as well as diagnostics to model \eqref{eq:obs-model}, are in Section~\ref{appendix:sensitivity_real_data} of the Appendix.

\subsection{Results for a single cell}
\label{section:result_one_cell}

In Figure~\ref{fig:real_data_analysis}, we display results for a single cell: recording 29 of dataset 7 from the \texttt{spikefinder} challenge. 
Each panel displays the following quantities, at varying levels of zoom: (i) the fluorescence trace  (grey dots); 
(ii) the estimated spikes from \eqref{eq:l0-opt-intercept} (orange ticks); (iii) the estimated spikes from \eqref{eq:l0-opt-intercept} for which the $p$-values from \eqref{eq:pval} with $h=20$ are below $0.05$ (blue ticks); and (iv) the true spikes (black ticks). 

We see that the estimated spikes with $p$-values less than $0.05$ match very closely with the true spikes. For example, \eqref{eq:l0-opt-intercept} estimates spikes near 79.3, 83.0, 89.1, and 92.9 seconds. None of these correspond to a true spike, and none  have a $p$-value less than $0.05$. Thus, the spikes with $p$-values above $0.05$ appear to be false positives. By contrast, those with $p$-values below 0.05 are mostly true positives. The quantitative measures defined in Section~\ref{sec:real-overview} further indicate that considering only spikes with $p$-values below $0.05$ increases accuracy:  the correlations between the true spikes and the estimated spikes including and excluding $p$-values below $0.05$ are $0.54$ and $0.62$, respectively, and the Victor-Purpura distances are $278$ and $244$, respectively.

\subsection{Results for recordings in \citet{Chen2013-ha}}
\label{section:result_all_cell}

We now examine datasets 7 and 8 of the \texttt{spikefinder} challenge. Their original source is \citet{Chen2013-ha}. The data consist of 58 recordings; each is approximately 230 seconds long. 

Figure~\ref{fig:perm_test_cor_vp_chen_cells} displays the accuracy --- relative to the ground truth spikes obtained via electrophysiology --- of the spikes estimated via \eqref{eq:l0-opt-intercept} (in orange), along with the subset of those  spikes for which the $p$-value is below $0.05$ (in blue). Accuracy is measured using Victor-Purpura distance and correlation.  
We find that the spikes from \eqref{eq:l0-opt-intercept} with $p$-values below $0.05$ are more accurate than the full set of spikes from \eqref{eq:l0-opt-intercept}.
These results are based on $h=20$ in \eqref{eq:pval}. Results for $h=5$ and $h=50$  are similar; see Figures~\ref{fig:sensitivity_hcor_vp_chen_cells} and \ref{fig:sensitivity_hcor_vp_chen_cells_large_h} in Section~\ref{appendix:sensitivity_real_data} of the Appendix. 

It is natural to wonder whether retaining only estimated spikes with $p$-values below $0.05$ improves the correlation and Victor-Purpura distance merely as a byproduct of reducing the number of estimated spikes, rather than due to the high quality
of the estimated spikes with $p$-values below $0.05$. We assess this using a resampling approach. Let $M$ denote the number of spikes for which $p$-values are computed, and let $\tilde{M}$ denote the number that are below $0.05$.  We sample $\tilde{M}$ out of $M$ estimated spike times for which $p$-values are computed without replacement, and compute the correlation and Victor-Purpura distance between the true spike times and the sampled subset.  We do this 1,000 times, and record the 2.5\% and 97.5\% quantiles of the accuracy measures obtained. 
These are shown as the endpoints of the black lines displayed in Figure~\ref{fig:perm_test_cor_vp_chen_cells}. We see that even after taking into account the effect of a smaller number of estimated spikes, excluding spikes with $p$-values greater than 0.05 still provides improved accuracy, measured using either correlation (56 out of 58 recordings) or Victor-Purpura distance (51 out of 58 recordings).

\section{Discussion}
\label{section:discussion}
Methods developed in this paper are implemented in the \texttt{R} package \texttt{SpikeInference}, available at \texttt{https://github.com/yiqunchen/SpikeInference}. We provide a tutorial for the package at \texttt{https://yiqunchen.github.io/SpikeInference/}. Code for reproducing the results in this paper can be found at \texttt{https://github.com/yiqunchen/SpikeInference-experiments}.

Our work leads to a few future directions of research.

\subsection{Alternative conditioning sets and contrast vectors for testing \eqref{eq:null-nu}}

Instead of conditioning on the $j$th estimated spike $\thj$ to obtain the $p$-value in \eqref{eq:pval}, we could instead condition on $\thj$ and its immediate neighbors, $\hat{\tau}_{j-1}$ and $\hat{\tau}_{j+1}$. 
This would allows us to define the contrast vector $\nu$ as 
{
\begin{equation*}
  \setlength{\belowdisplayskip}{1em}
\setlength{\abovedisplayskip}{1em}
\nu_{t} = 
\begin{cases}
-\frac{\gamma(\gamma^2-1)}{\gamma^2-\gamma^{2(\hat\tau_{j-1}-\thj)}} \cdot \gamma^{t-\thj}  , & \hat{\tau}_{j-1}+1 \leq  t \leq \thj \\
\frac{\gamma^2-1}{\gamma^{2(\hat\tau_{j+1}-\thj)}-1}\cdot  \gamma^{t - (\thj + 1)}  , & \thj+1 \leq t \leq \hat{\tau}_{j+1} \\
 0, & \text{otherwise},
\end{cases}
\end{equation*}
}
leading to a $p$-value given by $\mathbb{P}\left(\phi\geq \nu^\top y \;\middle\vert\; \{\hat{\tau}_{j-1},\thj,\hat{\tau}_{j+1}\} \subseteq \mathcal{M}(y'(\phi)), \phi>0  \right)$, where $\phi\sim\mathcal{N}(0,\sigma^2||\nu||_2^2)$. 
This approach eliminates the need to specify a window size $h$, and instead chooses the window size adaptively. Computing this new $p$-value requires only minor modifications of the results in Section~\ref{section:method}, using ideas from \citet{Jewell2019-vv}; we leave the details to future work.

As an alternative, we could keep the conditioning set in \eqref{eq:pval}, but define a contrast vector $\nu$ that uses different numbers of timepoints to the left and right of $\hat\tau_j$ (in contrast to \eqref{eq:nu_def}). 


 
\subsection{Selective inference for other spike detection methods}
In this paper, we considered selective inference on spikes estimated via the $\ell_0$ problem in  \eqref{eq:l0-opt}. However, another line of research \citep{vogelstein2010fast,friedrich2016fast,Friedrich2017-vn} involves estimating spikes via an $\ell_1$-penalized approach: {
  \setlength{\belowdisplayskip}{1em}
\setlength{\abovedisplayskip}{1em}
\begin{equation}
 \minimize{c_1,\hdots, c_T\geq 0; z_1,\hdots, z_T}{\frac12 \sum_{t = 1}^{T} (y_{t} - c_{t})^{2} + \lambda \sum_{t=2}^{T} |z_{t}|} \text{ subject to } z_t= c_t-\gamma c_{t-1}\geq 0. 
 \label{eq:l1}
\end{equation} 
Spikes are estimated at timepoints for which $\hat{c}_t \neq \gamma \hat{c}_{t-1}$. To conduct inference on these estimated spikes, we could leverage the framework in Section \ref{sec:selective_test}, along with recent developments in selective inference for the lasso and related problems \citep{Lee2016-te,Hyun2018-pe}. 
 
\subsection{Propagating uncertainty to downstream data analysis}
This article focused on quantifying the uncertainty associated with $\nu^\top c$, the change in calcium associated with an estimated spike. It is also of interest to propagate this uncertainty to downstream analyses, such as the neural decoding model \citep{Pillow2011-er,Ventura2008-ld}. This model is  similar to \eqref{eq:obs-model} with $z_t \overset{\text{i.i.d.}}{\sim} \text{Poisson}\qty(f\qty(\theta_t))$ for a function $f$; the goal is to estimate the coefficients  $\theta_t$. We could leverage the framework proposed in \citet{Wei2019-gm} to propagate uncertainty of estimating $\nu^\top c$ to $\theta_t$.

\section*{Supplementary Materials}
The reader is referred to the online Supplementary Materials for technical appendices, proofs of all Propositions, and additional results. 

 \section*{Acknowledgments}
 We thank Paul Fearnhead for helpful conversations. \emph{Conflict of Interest:} None declared.

\section*{Funding}

  This work was partially supported by National Institutes of Health grants [R01EB026908, R01DA047869] and a Simons Investigator Award in Mathematical Modeling of Living Systems to D.W.

\linespread{1.5}\selectfont
\setlength{\bibsep}{0.0pt}
\bibliographystyle{biorefs}
\bibliography{ref.bib}


\clearpage 
\markboth%
{Y. T. CHEN AND OTHERS}
{Figures}


\noindent
\begin{figure}[htbp!]
\begin{centering}
\hspace{15mm} (a) \hspace{40mm} (b) \hspace{40mm} (c)\\
\end{centering}
\begin{subfigure}{\linewidth}
  \centering
  \includegraphics[width=.3\linewidth]{./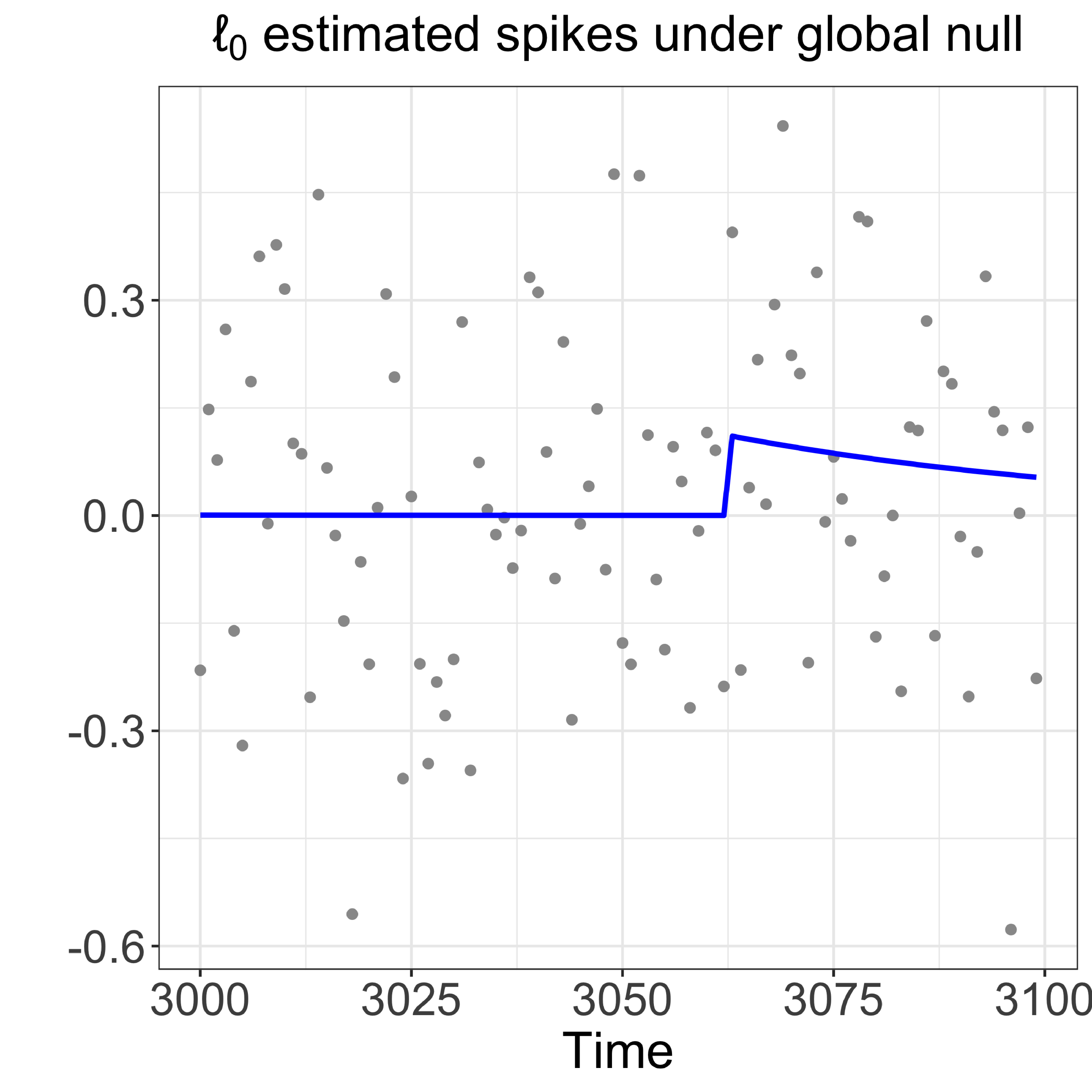}
  \includegraphics[width=.3\linewidth]{./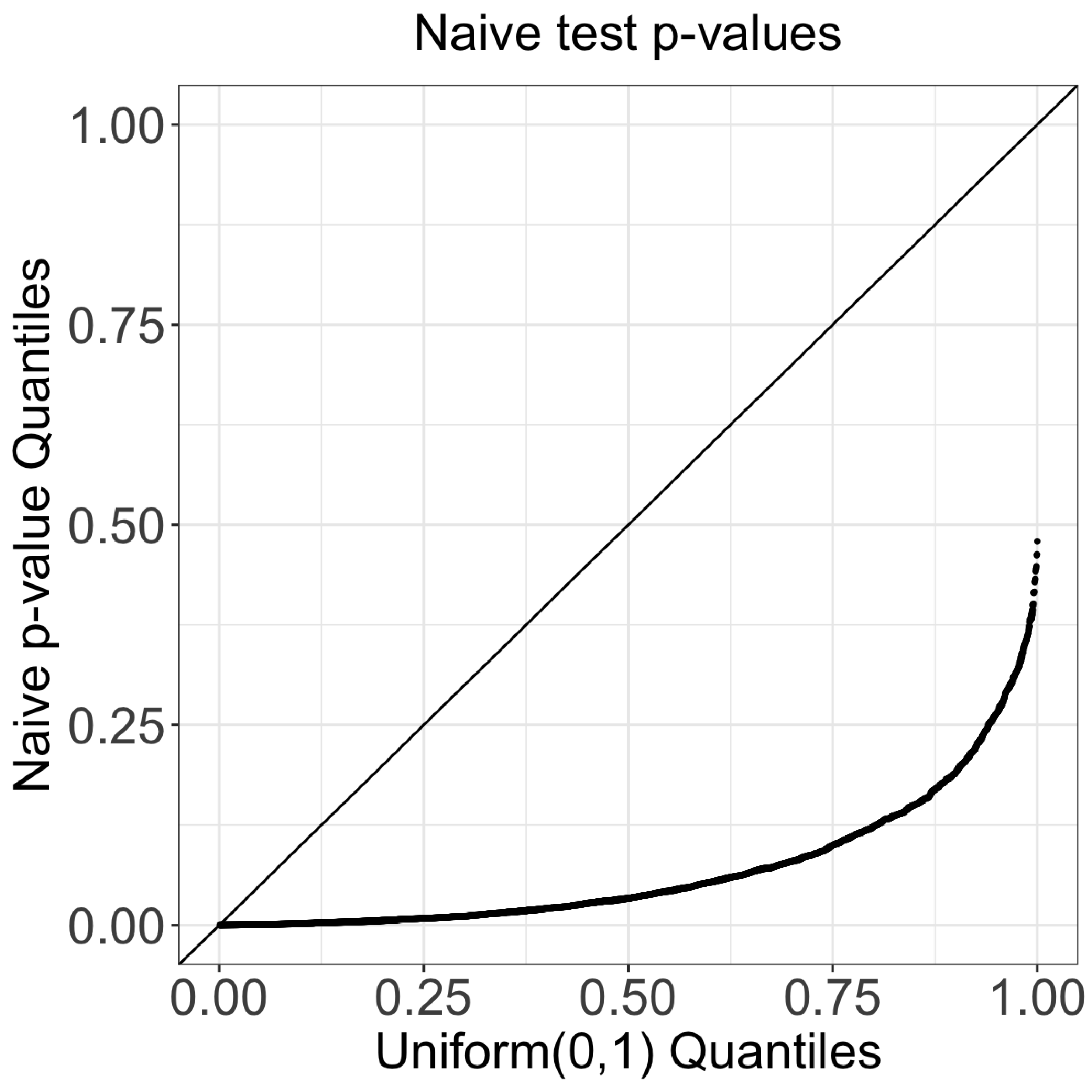}
  \includegraphics[width=.3\linewidth]{./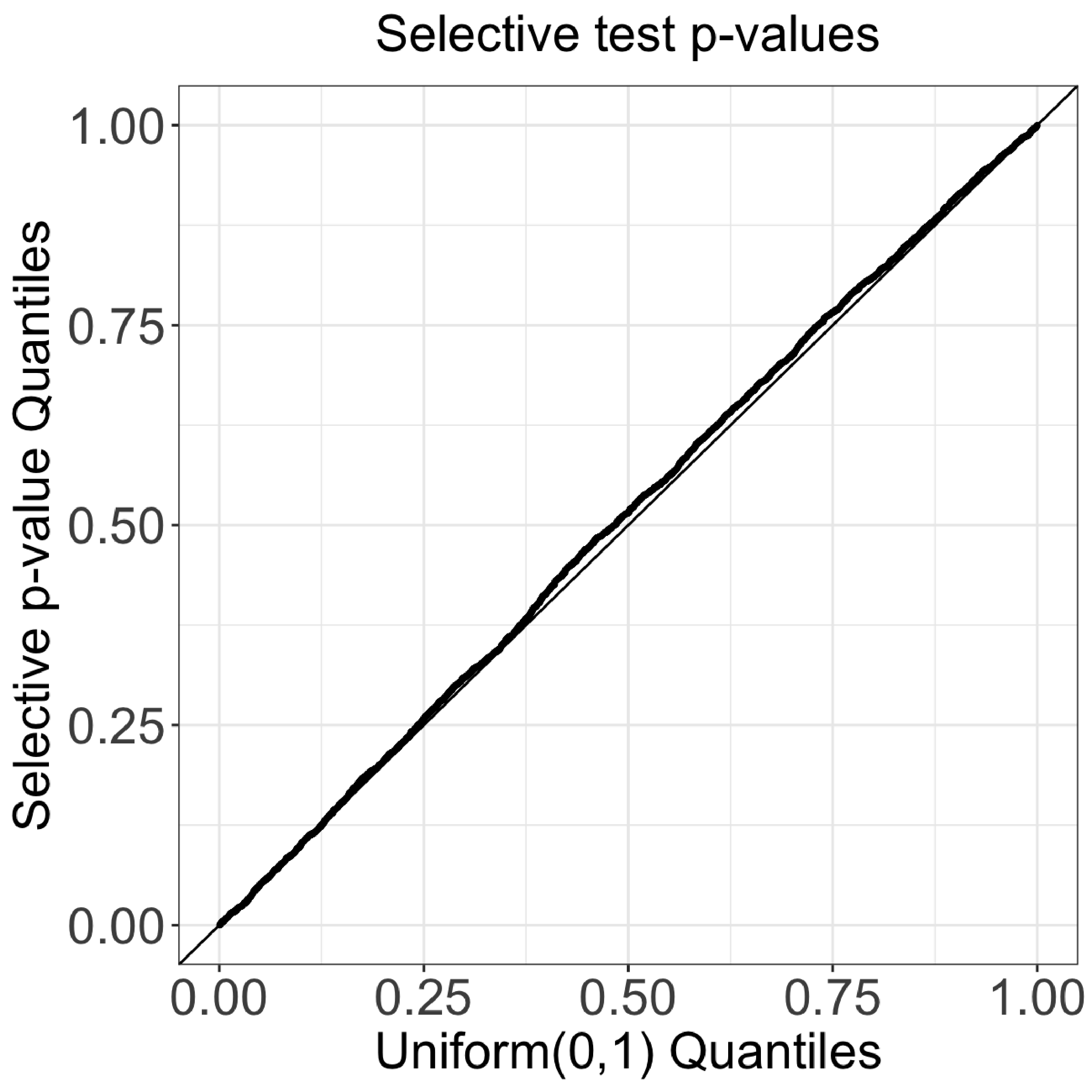}
\end{subfigure}
\caption{\textit{(a): } One simulation with $y_1,\dots, y_{10,000}$ (grey dots) generated according to model \eqref{eq:obs-model} with $\gamma = 0.98$, $\sigma = 0.2$, and $z_t = 0$ for all $t$. The $\ell_0$ problem in \eqref{eq:l0-opt} was solved with $\lambda=0.1$, resulting in 47 estimated spikes with fluorescence increases. Estimated calcium is displayed in blue. We display one estimated spike at time $\hat\tau = 3,060$ with $y_{3,000},\ldots, y_{3,100}$. \textit{(b): } Quantile-quantile plot for the Wald $p$-values (defined in \eqref{eq:wald_test}) based on 100 simulations (2,988 hypothesis tests).
\textit{(c): } Quantile-quantile plot for the selective $p$-values (defined in \eqref{eq:pval} with $h=1$) based on 100 simulations (2,988 hypothesis tests).} 
\label{fig:motivation}
\end{figure}

\noindent
\begin{figure}[htbp!]
\begin{centering}
\hspace{15mm} (a) \hspace{40mm} (b) \hspace{40mm} (c)\\
\end{centering}

\begin{subfigure}{\linewidth}
  \centering
  \includegraphics[width=0.3\linewidth]{./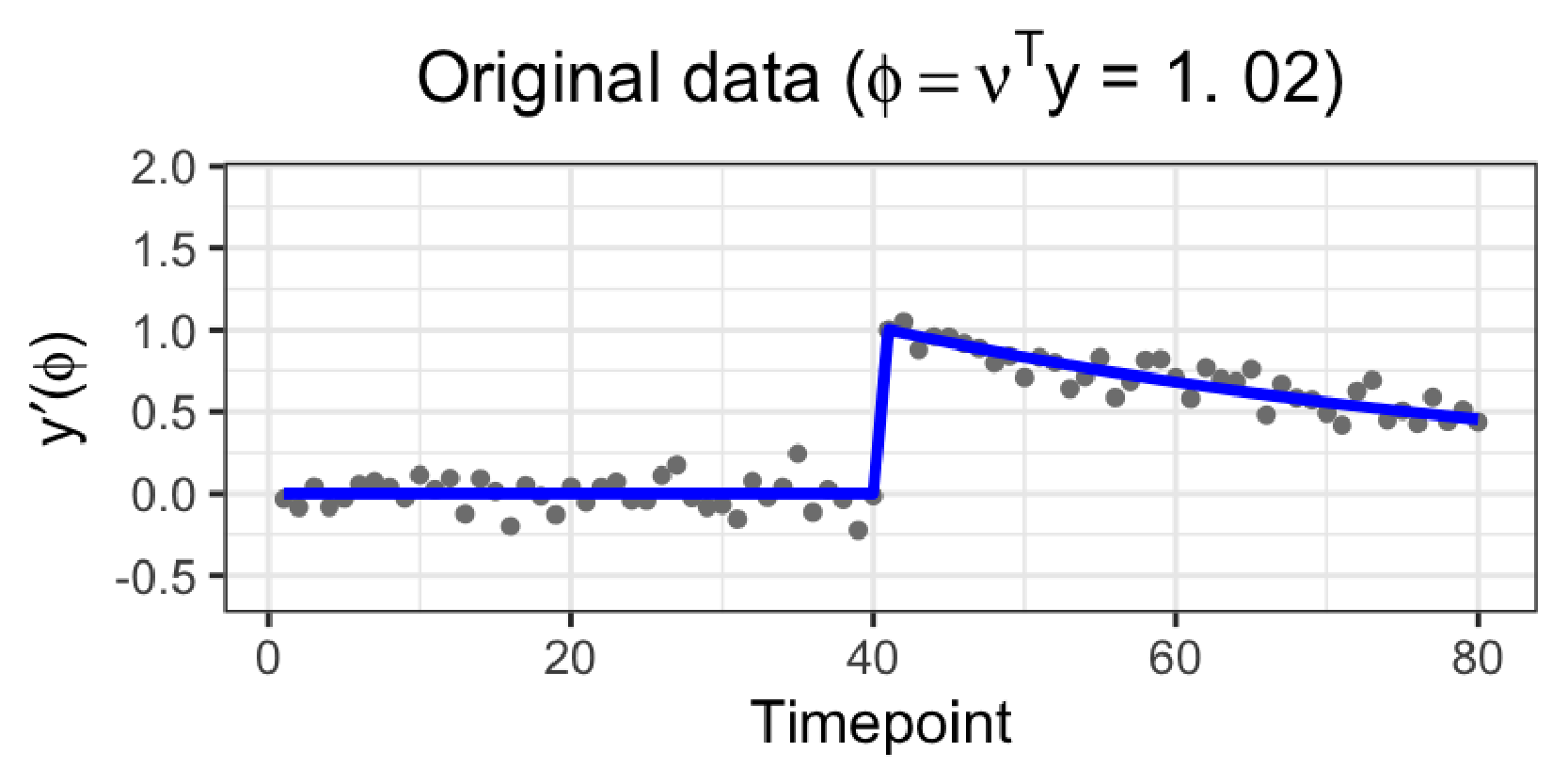}
  \includegraphics[width=0.3\linewidth]{./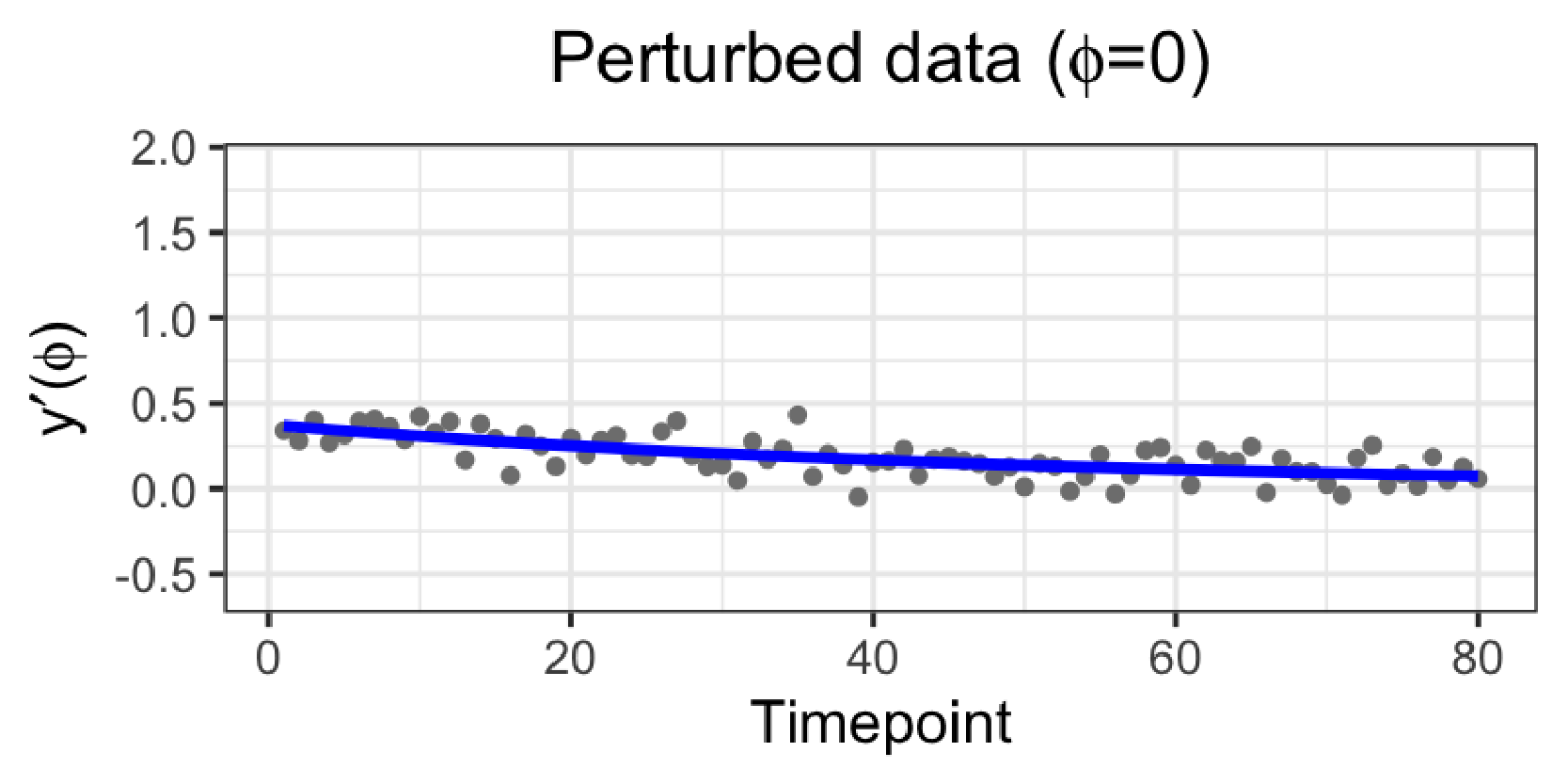}
  \includegraphics[width=0.3\linewidth]{./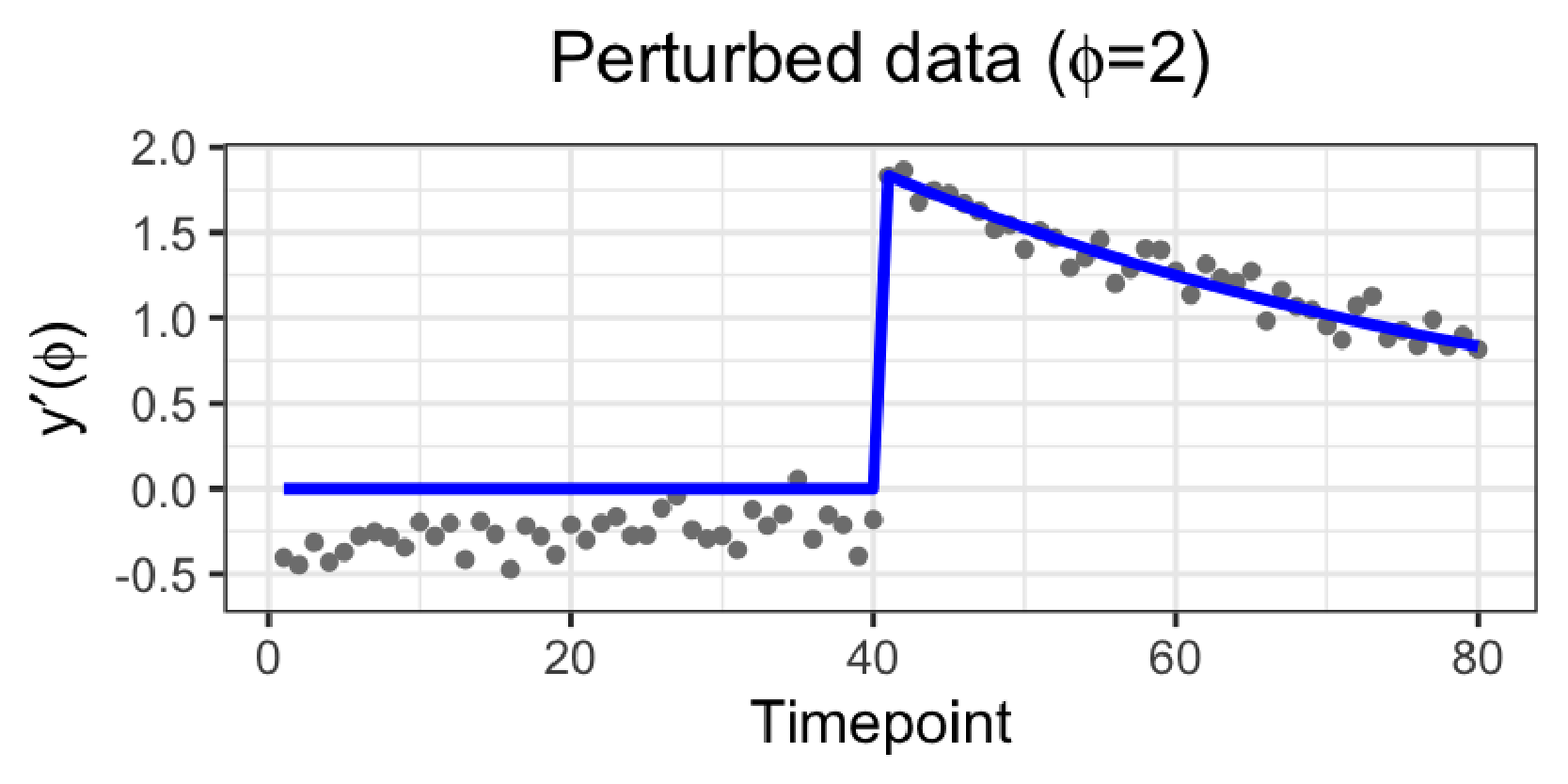}
\end{subfigure}

\begin{centering}
\hspace{15mm}(d)  \hspace{5mm}\\
\end{centering}
 \begin{subfigure}{\linewidth}
    \centering
   \noindent\includegraphics[width=0.9\linewidth]{./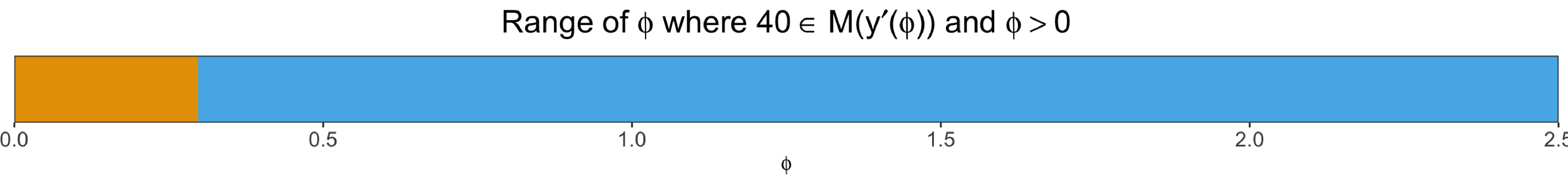}
    \end{subfigure}

\caption{Data generated according to \eqref{eq:obs-model}, with $T=80,\sigma = 0.1, \gamma=0.98$, and one  spike at $t=40$. Solving the $\ell_0$ problem \eqref{eq:l0-opt} with $\lambda = 0.75$ yields a single estimated spike at $t= 40$. \textit{(a): } We plot the original data, which corresponds to $y'(\phi)$ with $\phi = \nu^{\top}y = 1.02$, where $\nu$ is constructed according to \eqref{eq:nu_def} with $\thj = 40$ and $h=40$. The estimated calcium concentration is displayed in blue. \textit{(b): } The perturbed dataset $y'(\phi)$ with $\phi = 0$ is shown. Now there is no increase in calcium at $t=40$ on $y'(\phi)$, and no spike is estimated. \textit{(c): } The perturbed dataset $y'(\phi)$ with $\phi=2$ is shown. There is now a very pronounced increase in calcium at $t=40$, and a spike is estimated. \textit{(d): } The set of $\phi$ for which $40 \in \mathcal{M}(y'(\phi))$ and $\phi>0$ is displayed in blue; other values of $\phi$ are in orange. }
\label{fig:perturbation}
\end{figure}

\noindent
\begin{figure}[htbp!]
\begin{centering}
\hspace{5mm} (a) \hspace{60mm} (b) \\
\end{centering}
\begin{subfigure}{\linewidth}
  \noindent\includegraphics[width=0.9\linewidth]{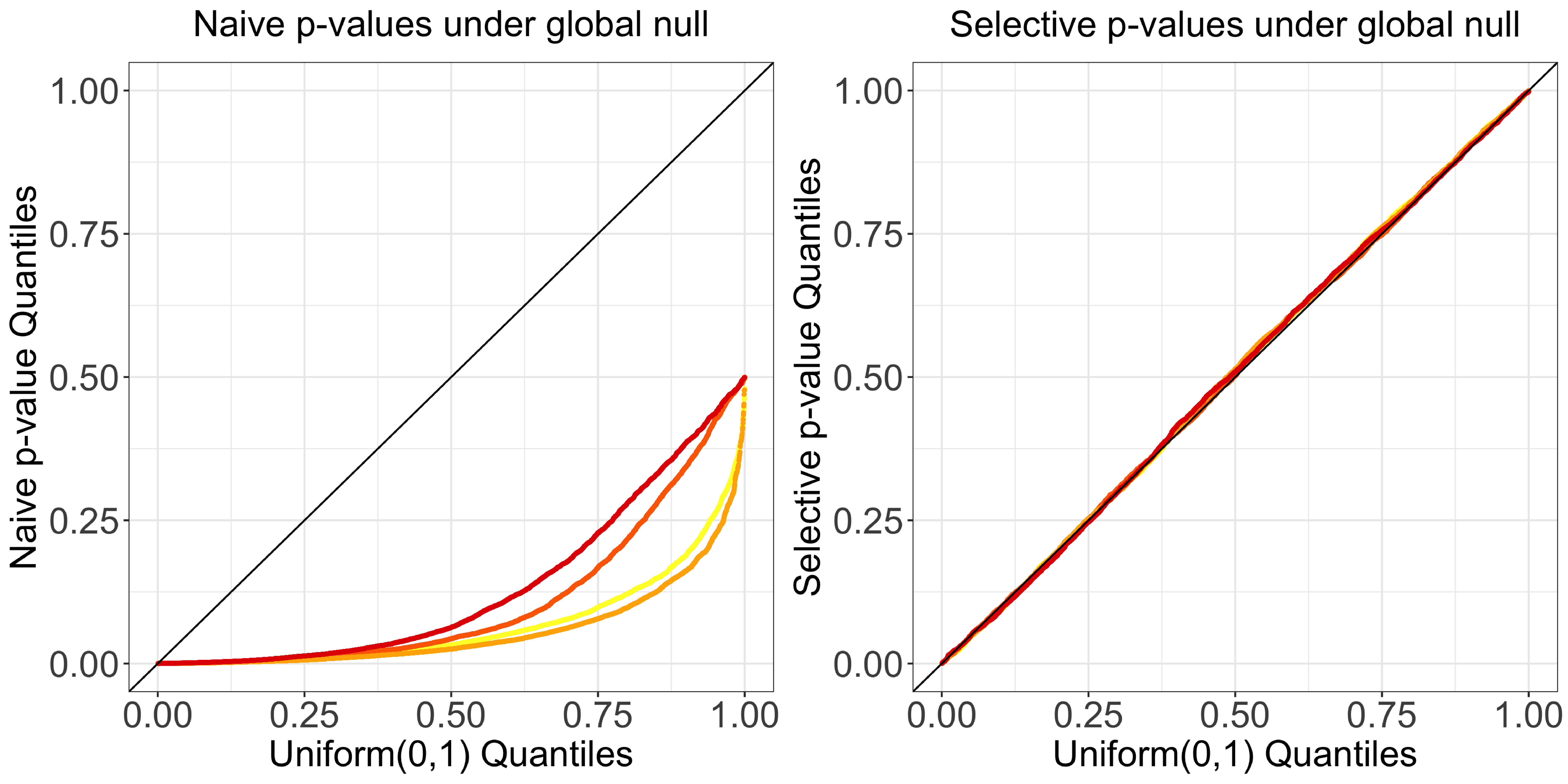}
\end{subfigure}

\begin{centering}
\hspace{5mm} (c) \hspace{60mm} (d) \\
\end{centering}
\begin{subfigure}{\linewidth}
  \noindent\includegraphics[width=0.9\linewidth]{./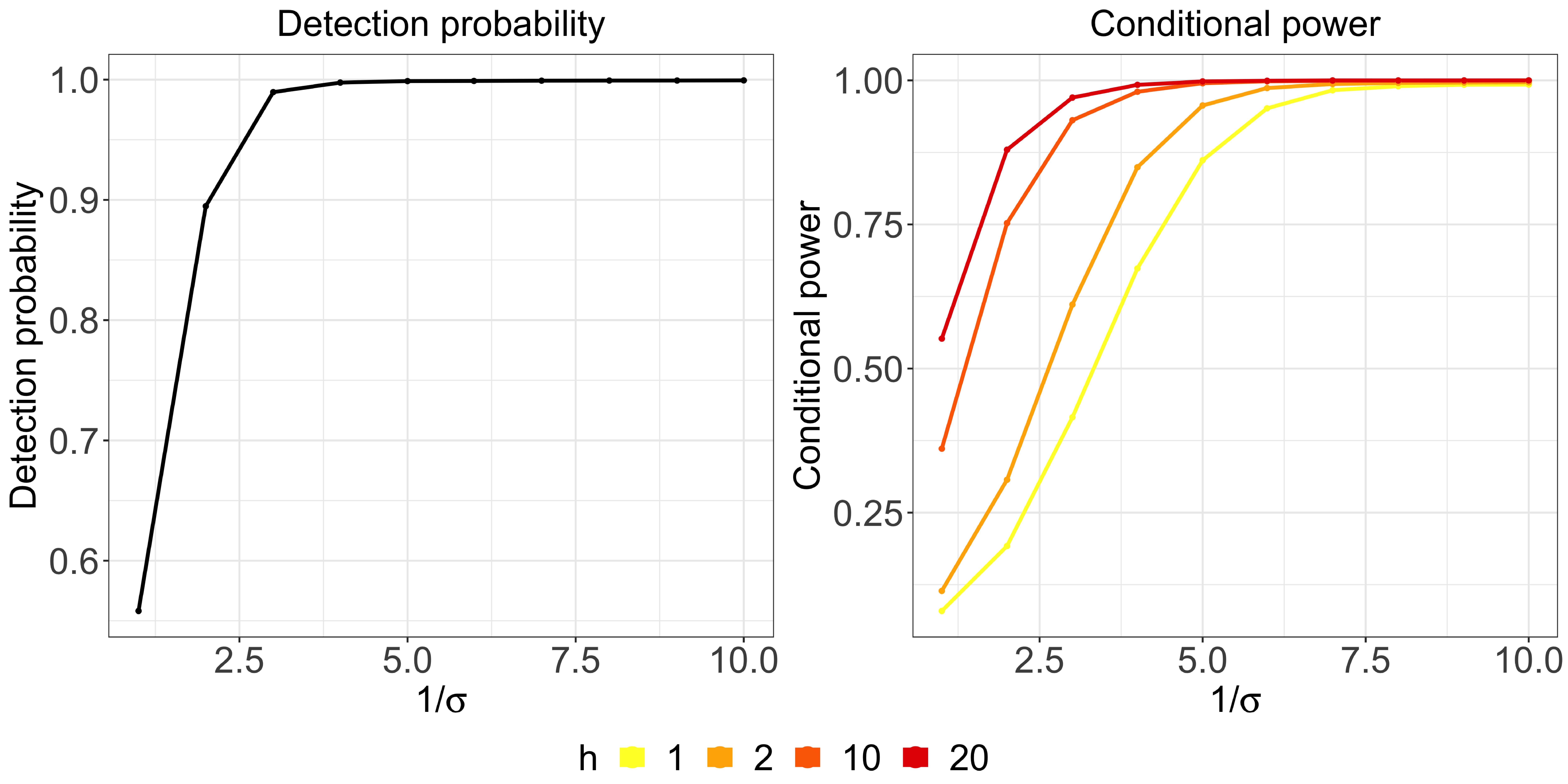}
\end{subfigure}
\caption{\textit{(a):} Quantile-quantile plot for the naive $p$-values defined in \eqref{eq:wald_p_val}, which have inflated selective Type I error. \textit{(b):} Quantile-quantile plot for $p$-values from our proposed selective test in \eqref{eq:pval}, which controls selective Type I error. \textit{(c):} Under the model \eqref{eq:obs-model}, detection probability \eqref{eq:detect}  is an increasing  function of $1/\sigma$. \textit{(d):} Conditional power \eqref{eq:cond} increases as a function of $1/\sigma$ for all $h$. For a given value of $\sigma$, a larger value of  $h$ corresponds to higher conditional power, with the caveat that the meaning of the null hypothesis in \eqref{eq:null-nu} changes as a function of $h$, and the null hypothesis that holds for a smaller $h$ might not hold for a larger value of $h$. The constant $h$ appears in the definition of the contrast vector $\nu$; see \eqref{eq:nu_def}.
}
\label{fig:Type_I_power}
\end{figure}

\begin{figure}[htbp!]
\begin{centering}
\hspace{5mm} (a) \hspace{60mm} (b) \\
\end{centering}
\begin{subfigure}{\linewidth}
  \noindent\includegraphics[width=0.9\linewidth]{./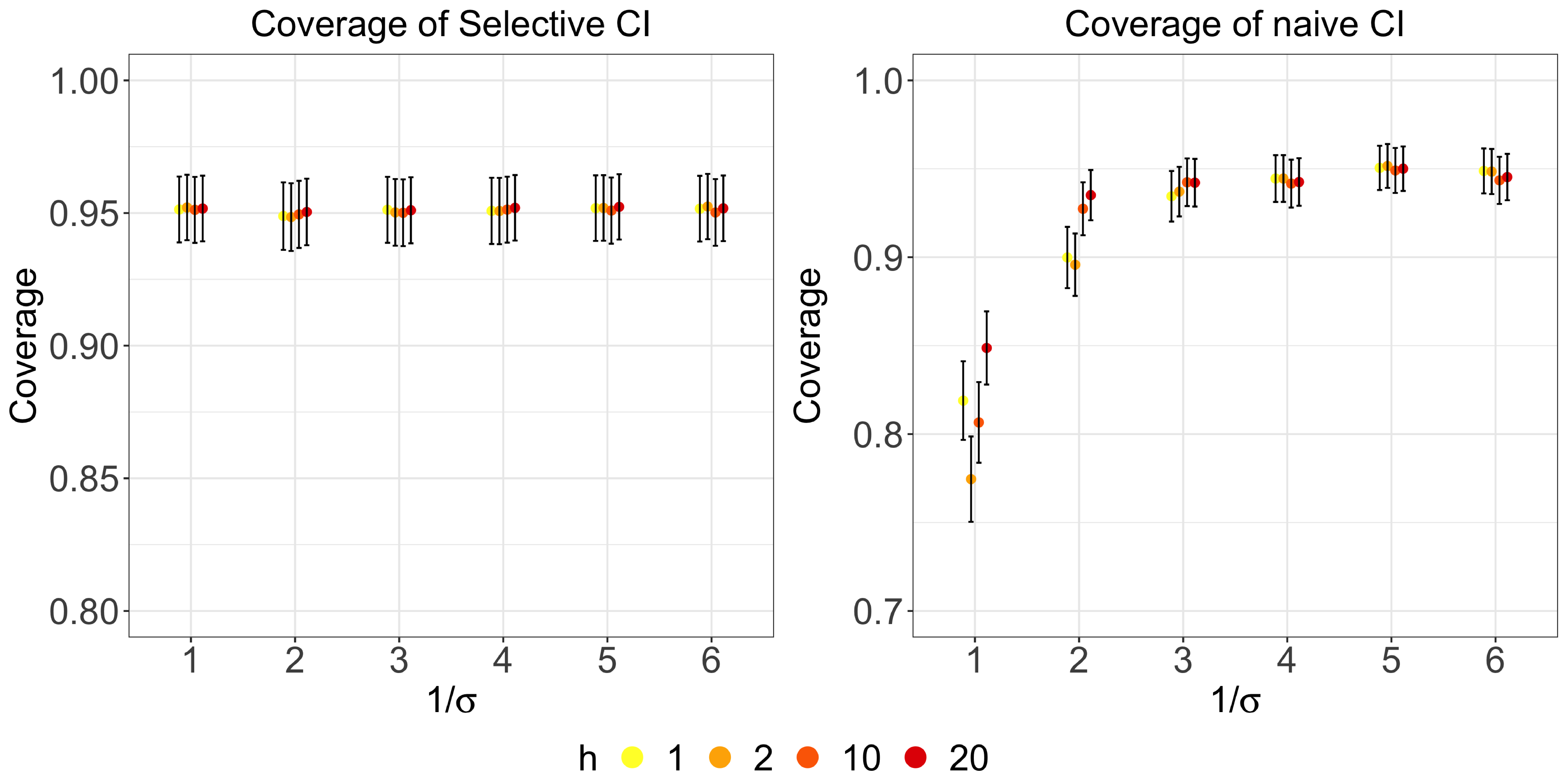}
\end{subfigure}

\begin{centering}
\hspace{5mm} (c) \hspace{60mm} (d) \\
\end{centering}

\begin{subfigure}{0.9\linewidth}
   \noindent\includegraphics[width=\linewidth]{./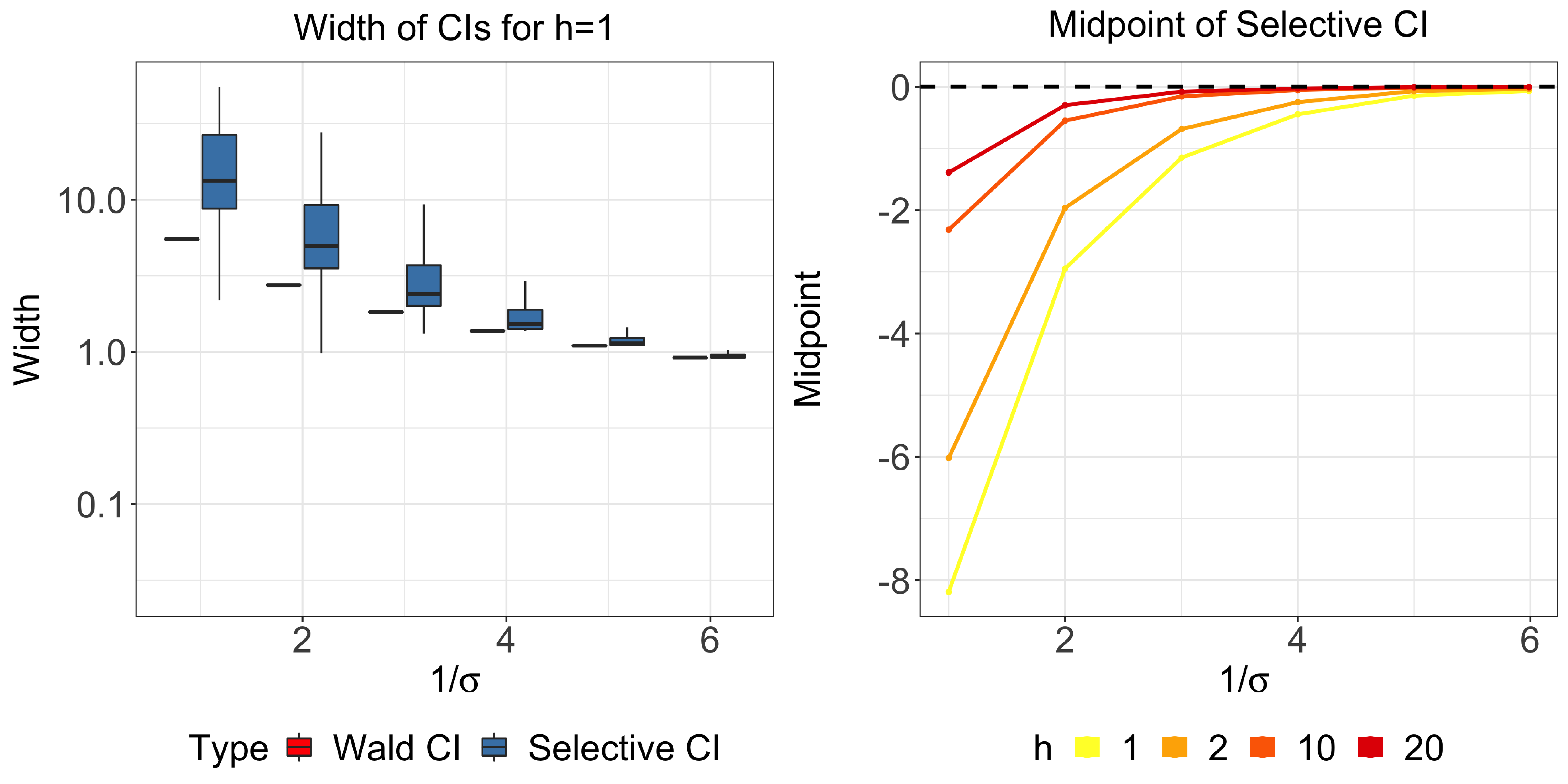}
\end{subfigure}

\caption{\textit{(a):} Selective confidence intervals achieve correct nominal coverage (95\% coverage at level $\alpha=0.05$) across all values of $h$ (defined in \eqref{eq:nu_def}) and $\sigma$ (defined in \eqref{eq:obs-model}). The mean (and standard deviation) over 500 simulated datasets are displayed. \textit{(b):} Naive confidence intervals  have poor coverage when $1/\sigma$ is small, for all values of $h$. \textit{(c):} For $h=1$, selective confidence intervals are on average wider than naive intervals, but the difference decreases as $1/\sigma$ increases. \textit{(d):} The midpoint of the selective confidence interval is, on average, smaller than $\nu^\top y$. 
}
\label{fig:CI}
\end{figure}

\begin{figure}[htbp!] 
  \noindent\includegraphics[width=\linewidth]{./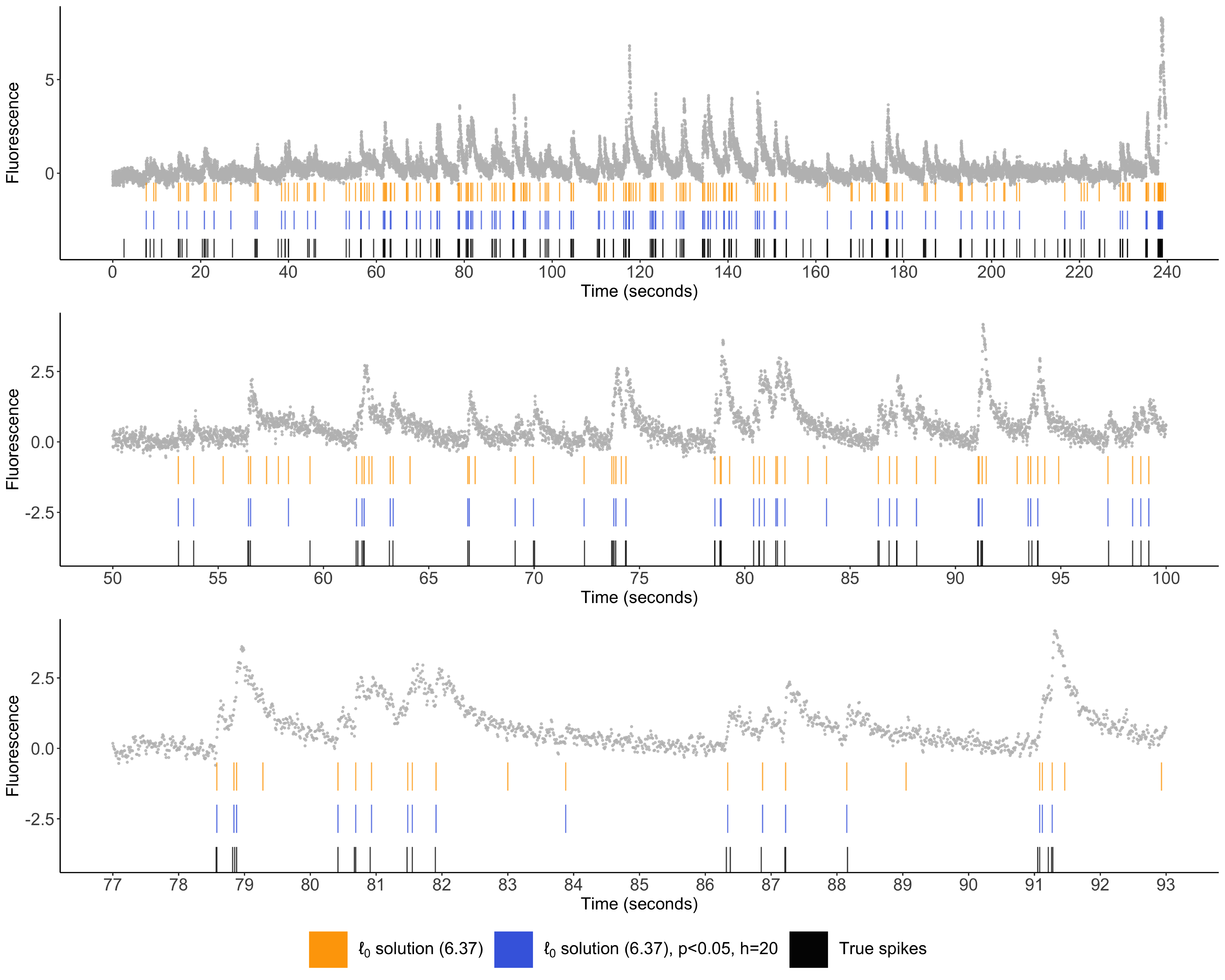}
  \caption{Illustrative example for recording 29 from \citet{Chen2013-ha}, which uses the GCaMP6f indicator, after preprocessing as described in \citet{Theis2016-wd}. The cell's fluorescence trace is displayed in grey. Estimated spikes from \eqref{eq:l0-opt-intercept} are displayed in orange; the spikes with $p$-values from \eqref{eq:pval} below $0.05$ (with $h=20$) are displayed in blue; and the true spike times are shown in black.}
\label{fig:real_data_analysis}
\end{figure}

\begin{figure}[htbp!]
\begin{centering}
\hspace{15mm}(a)  \hspace{5mm}\\
\end{centering}
\begin{subfigure}{\linewidth}
  \noindent\includegraphics[width=0.9\linewidth]{./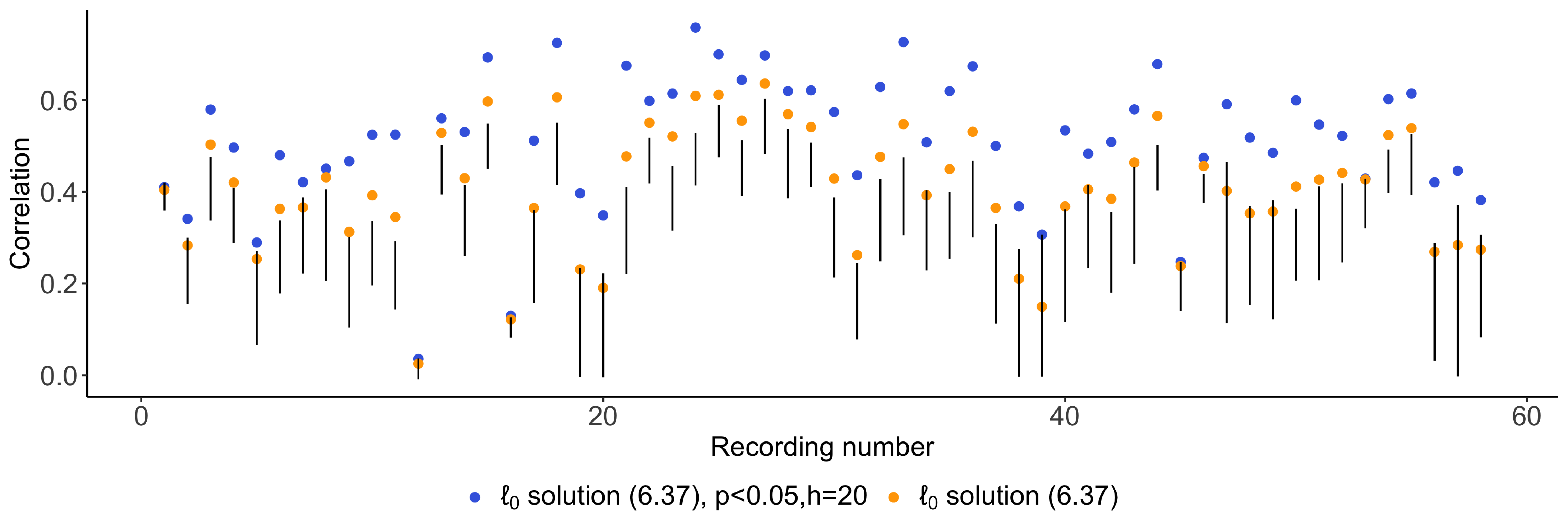}
\end{subfigure}

\begin{centering}
\hspace{15mm}(b)  \hspace{5mm}\\
\end{centering}
\begin{subfigure}{\linewidth}
  \noindent\includegraphics[width=0.9\linewidth]{./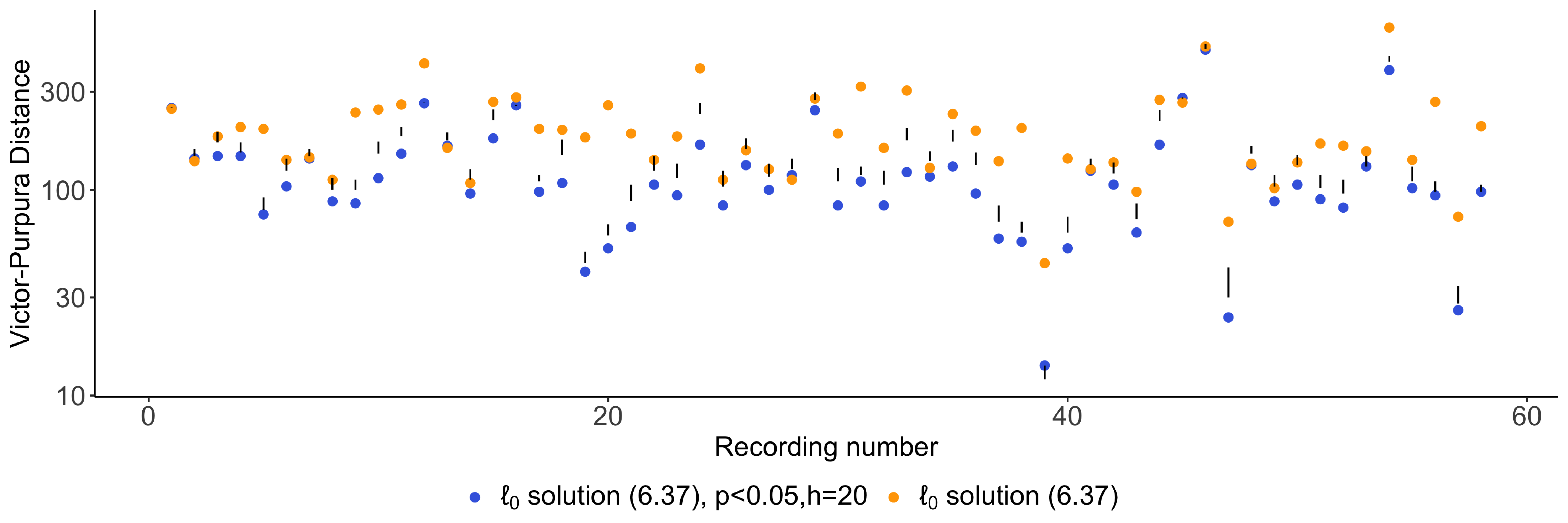}
\end{subfigure}
\caption{Result for recordings from the \cite{Chen2013-ha} dataset. \textit{(a):} The correlations between the true spike times and the spikes estimated from \eqref{eq:l0-opt-intercept} are plotted in orange. The correlations between the true spike times and the subset of the spikes from \eqref{eq:l0-opt-intercept} with $p$-value \eqref{eq:pval} below $0.05$ are plotted in blue. For each recording, the black line represents the 2.5\% and 97.5\% quantiles of the resampling distribution with 1,000 samples. \textit{(b):} As in (a), but Victor-Purpura distance is displayed instead of correlation.}
\label{fig:perm_test_cor_vp_chen_cells}
\end{figure}

\clearpage

{\centering
\title{\textbf{Quantifying uncertainty in spikes estimated from calcium imaging data \\Supplementary Materials}}

}
\markboth%
{Y. T. CHEN AND OTHERS}
{Supplementary Materials}
\maketitle
\pagenumbering{arabic}

\linespread{1.5}\selectfont

\appendix

\subsection{Proof of Proposition~\ref{prop:pval}}
\label{appendix:prop_pval}
We first prove the statement \eqref{eq:single_param_general}.
The following equalities hold:
\begin{small}
\begin{align*}
&\mathbb{P}\left(\nu^{\top} Y \geq \nu^{\top} y \;\middle\vert\; \thj(y) \in \mathcal{M}(Y), \Pi_{\nu}^{\perp} Y= \Pi_{\nu}^{\perp} y,\nu^{\top} Y>0 \right) \\
 &\overset{a.}{=} \mathbb{P}\left(\nu^{\top} Y \geq \nu^{\top} y \;\middle\vert\; \thj(y) \in \mathcal{M}(\Pi_{\nu}^{\perp}y +\Pi_{\nu} Y), \Pi_{\nu}^{\perp} Y= \Pi_{\nu}^{\perp} y , \nu^{\top} Y > 0 \right) \\
 &\overset{b.}{=}  \mathbb{P} \left( \phi \geq  \nu^{\top}y  \;\middle\vert\; \thj(y) \in \mathcal{M}\left(y'(\phi)\right) ,\Pi_{\nu}^{\perp} Y= \Pi_{\nu}^{\perp} y , \nu^{\top} Y > 0 \right) \\
 &\overset{c.}{=}  \mathbb{P} \left( \phi \geq  \nu^{\top}y  \;\middle\vert\; \thj(y) \in \mathcal{M}\left(y'(\phi)\right),\phi>0 \right).
\end{align*}
\end{small}
Here, $a.$ follows from the fact that $Y = \Pi_{\nu}^{\perp}Y+\Pi_{\nu}Y$, and the fact that we have conditioned on the event $\Pi_{\nu}^{\perp}Y = \Pi_{\nu}^{\perp}y$. To prove $b.$, we first note that $I = \Pi_{\nu} + \Pi_{\nu}^{\perp}$ and $\Pi_{\nu} Y=\frac{\nu\nu^{\top}}{||\nu||_2^2}Y$, which implies $$\Pi_{\nu}^{\perp}y +\Pi_{\nu} Y = y- \Pi_{\nu}y+\Pi_{\nu} Y = y-\frac{\nu^{\top}y}{||\nu||_2^2}\nu+\frac{\nu^{\top}Y}{||\nu||_2^2}\nu  = y'(\phi), $$ where we define $\phi = \nu^{\top} Y \sim  \mathcal{N}(\nu^\top c, \sigma^2 ||\nu||_2^2)$. Finally, $c.$ follows from the fact that $Y \sim \mathcal{N}(c, \sigma^2 I)$ implies independence of $\phi = \nu^{\top}Y$ and $\Pi_\nu^\perp Y$.

Now to prove \eqref{eq:single_param}, we note that under $H_0$ in \eqref{eq:null-nu}, $\nu^\top Y \sim \mathcal{N}\qty(0, \sigma^2 ||\nu||_2^2)$. Therefore, applying the result above with $\nu^\top c =0$ completes the proof.

\subsection{General case for the contrast vector $\nu$}
\label{appendix:general_nu}

The definition \eqref{eq:nu_def} only applies when $\thj-h+1 \geq 1$ and $\thj+h\leq T$. In the case that $\thj-h+1< 1$ or $\thj+h> T$, we define the contrast vector $\nu$ as follows:
\begin{align}
\label{eq:nu_def_special}
\nu_{t} &= 
\begin{cases}
-\gamma \frac{\gamma^2-1}{\gamma^2-\gamma^{2(\hat\tau_L-\thj)}}\cdot \gamma^{t-\thj}  , & \hat\tau_L \leq  t \leq \thj \\
\frac{\gamma^2-1}{\gamma^{2(\thi{R}-\thj)}-1}\cdot  \gamma^{t - (\thj + 1)}  , & \thj+1 \leq t \leq \hat\tau_R \\
0, & \text{otherwise}  \\
\end{cases},
\end{align}
where
$\hat\tau_L = \max(1, \thj-h+1)$, and $\hat\tau_R = \min(T, \thj+h)$.

In Figure~\ref{fig:plot_nu_vec}, we plot the contrast vector in \eqref{eq:nu_def}, generated with $T=50$, $\gamma=0.98$, $\hat\tau_j=20$, and $h=5$.


\subsection{Proof of Proposition~\ref{prop:characterization_S}}
\label{appendix:prop_S_char}

Recall that the $\ell_0$ problem \eqref{eq:l0-opt} is equivalent to the changepoint detection problem \eqref{eq:changepoint_form}, in the sense that \eqref{eq:l0-opt} results in an estimated changepoint at  $\hat\tau_j$ if and only if $\hat\tau_j$ is in the solution to \eqref{eq:changepoint_form}. 

We first prove that $C(\phi)$ defined in \eqref{eq:cost_spike_thj} equals the objective of \eqref{eq:changepoint_form} applied to data $y'(\phi)$, subject to the constraint that $\hat\tau_j$ is in the solution. 

{
\setlength{\belowdisplayskip}{1pt}
\setlength{\abovedisplayskip}{1pt}
\begin{small}
\begin{align*}
C(\phi) &= \min_{\alpha\geq 0}\qty{\cost\left(y_{1:\thj}'(\phi),\alpha;\gamma\right)} + \min_{\alpha\geq 0}\qty{\cost\qty(y_{T:(\thj+1)}'(\phi),\alpha;1/\gamma)}   +\lambda  \\
  &\overset{a.}{=} \min_{\alpha\geq 0}\qty{\cost\qty(y_{1:\thj}'(\phi),\alpha;\gamma)} + \min_{\alpha\geq 0}\qty{\cost\qty(y_{(\thj+1):T}'(\phi),\alpha;\gamma)}   +\lambda  \\
 &\overset{b.}{=}  \min_{\substack{0=\tau_0<\tau_1<\ldots<\tau_k<\tau_{k+1}=\thj, k}}  \qty{  \sum_{j=0}^k \min_{\alpha\geq 0} \qty(\frac{1}{2} \sum_{t=\tau_j+1}^{\tau_{j+1}} \qty(y_t'(\phi)-\alpha \gamma^{t-\tau_{j+1}})^2 ) +\lambda k } \\
&+ \min_{\substack{\thj=\tilde{\tau}_0<\tilde{\tau}_1<\ldots<\tilde{\tau}_l<\tilde{\tau}_{l+1}=T, l }}  \qty{\sum_{j=0}^l \min_{\alpha\geq 0} \qty(\frac{1}{2} \sum_{t=\tilde{\tau}_{j}+1}^{\tilde{\tau}_{j+1}} \qty(y_{t}'(\phi)-\alpha \gamma^{t-\tilde{\tau}_{j+1}})^2 ) +\lambda l  } +\lambda \\
&\overset{c.}{=}  \min_{\substack{0=\tau_0<\tau_1<\ldots<\tau_k<\tau_{k+1}=\thj, k,\\
\thj=\tilde{\tau}_0<\tilde{\tau}_1<\ldots<\tilde{\tau}_l<\tilde{\tau}_{l+1}=T, l}}   \vast\{  \sum_{j=0}^k \min_{\alpha\geq 0} \qty(\frac{1}{2} \sum_{t=\tau_j+1}^{\tau_{j+1}} \qty(y_t'(\phi)-\alpha \gamma^{t-\tau_{j+1}})^2 ) \\
&+\sum_{j=0}^l \min_{\alpha\geq 0} \qty(\frac{1}{2} \sum_{t=\tilde{\tau}_{j}+1}^{\tilde{\tau}_{j+1}} \qty(y_{t}'(\phi)-\alpha \gamma^{t-\tilde{\tau}_{j+1}})^2 ) + \lambda (k+l) \vast\}+\lambda \\
 &\overset{d.}{=}  \min_{\substack{0=\tau_0<\tau_1<\ldots<\tau_k<\tau_{k+1}=T,  \\
 k, \thj \in \{\tau_1,\ldots, \tau_k\}}}   \qty{  \sum_{j=0}^k \min_{\alpha\geq 0} \qty(\frac{1}{2} \sum_{t=\tau_j+1}^{\tau_{j+1}} \qty(y_t'(\phi)-\alpha \gamma^{t-\tau_{j+1}})^2 ) +\lambda k }. 
 \end{align*}
\end{small}
}

Here, $a.$ follows from Lemma \ref{lemma:cost_time_reversal} and $b.$ follows from Lemma \ref{lemma:alt_cost}. Part $c.$ follows from combining the two minimization problems, and finally part $d.$ follows from treating $(k+l)$ as a new variable in the optimization problem.

Next, we show that $C'(\phi)$ defined in \eqref{eq:cost_no_spike_thj} equals the objective of \eqref{eq:changepoint_form} applied to data $y'(\phi)$, subject to the constraint that $\hat\tau_j$ is \emph{not} in the solution. 

\begin{small}
\begin{align*}
 C'(\phi) &= \min_{\alpha \geq 0}\qty{ \cost\qty(y_{1:\thj}'(\phi),\alpha;\gamma) +  \cost\qty(y_{T:(\thj+1)}'(\phi),\gamma\alpha;1/\gamma)}  \\
&\overset{a.}{=}  \min_{\alpha\geq 0}\vast\{ \min_{\substack{0=\tau_0<\tau_1<\ldots<\tau_k<\tau_{k+1}=\thj,  \\
\alpha_0,\ldots,\alpha_{k-1}\geq 0, \alpha_k = \alpha, k}} \qty{ \frac{1}{2} \sum_{j=0}^k \sum_{t=\tau_j+1}^{\tau_{j+1}} \qty(y_t'(\phi)-\alpha_j \gamma^{t-\tau_{j+1}})^2  +\lambda k } \\
&+  \min_{\substack{\thj=\tau_0<\tau_1<\ldots<\tau_k<\tau_{k+1}=T,  \\
\alpha_0,\ldots,\alpha_k\geq 0, \alpha_k = \gamma\alpha, k}}  \qty{\frac{1}{2} \sum_{j=0}^k  \sum_{t=\tau_j+1}^{\tau_{j+1}} \qty(y_{T+\thj+1-t}'(\phi)-\alpha_j (1/\gamma)^{t-\tau_{j+1}})^2  +\lambda k } \vast\}\\ 
&\overset{b.}{=}  \min_{\alpha\geq 0}\vast\{ \min_{\substack{0=\tau_0<\tau_1<\ldots<\tau_k<\tau_{k+1}=\thj,  \\
\alpha_0,\ldots,\alpha_{k-1}\geq 0, \alpha_k = \alpha, k}} \qty{ \frac{1}{2} \sum_{j=0}^k \sum_{t=\tau_j+1}^{\tau_{j+1}} \qty(y_t'(\phi)-\alpha_j \gamma^{t-\tau_{j+1}})^2  +\lambda k }  \\
&+  \min_{\substack{T=\tilde{\tau}_0>\tilde{\tau}_1>\ldots>\tilde{\tau}_k>\tilde{\tau}_{k+1}=\thj,  \\
\alpha_0,\ldots,\alpha_k\geq 0, \alpha_k = \gamma\alpha , k}}  \qty{\frac{1}{2} \sum_{j=0}^k  \sum_{t=\tilde{\tau}_{j+1}}^{\tilde{\tau}_{j}+1} \qty(y_{t}'(\phi)- \alpha_j \gamma^{t-\tilde{\tau}_{j+1}})^2  +\lambda k } \vast\}   \\
&\overset{c.}{=} \min_{\substack{
0=\tau_0<\tau_1<\ldots<\tau_k<\tau_{k+1}=\thj,  \\
\alpha_0,\ldots,\alpha_{k-1}\geq 0, \alpha_k = \alpha, k, \\
\thj=\tilde{\tau}_{\tilde{k}+1}<\tilde{\tau}_{\tilde{k}}<\ldots<\tilde{\tau}_1<\tilde{\tau}_{0}=T\,  \\
\tilde{\alpha}_0,\ldots,\tilde{\alpha}_{\tilde{k}}\geq 0, \tilde{\alpha}_{\tilde{k}} = \gamma\alpha , \tilde{k}
}}    \left\{ \frac{1}{2} \sum_{j=0}^k \sum_{t=\tau_j+1}^{\tau_{j+1}} \qty(y_t'(\phi)-\alpha_j \gamma^{t-\tau_{j+1}})^2  +\lambda k +  \frac{1}{2} \sum_{j=0}^{\tilde{k}}\sum_{t=\tilde{\tau}_{j}+1}^{\tilde{\tau}_{j+1}} \qty(y_{t}'(\phi)- \tilde{\alpha}_j \gamma^{t-\tilde{\tau}_{j+1}})^2  +\lambda \tilde{k}\right\} \\
&\overset{d.}{=}
 \min_{\substack{0=\tau_0<\tau_1<\ldots<\tau_k<\tau_{k+1}=T,  \\
\alpha_0,\ldots,\alpha_{k}\geq 0, k ,\\
\thj \neq \tau_j, \forall j = 1,\ldots,k.
}}    \left\{ \frac{1}{2} \sum_{j=0}^k \sum_{t=\tau_j+1}^{\tau_{j+1}} \qty(y_t'(\phi)-\alpha_j \gamma^{t-\tau_{j+1}})^2  +\lambda k \right\} \\
&\overset{e.}{=} \min_{\substack{0=\tau_0<\tau_1<\ldots<\tau_k<\tau_{k+1}=T,  \\
 k, \thj \notin \{\tau_1,\ldots, \tau_k\}}}   \qty{  \sum_{j=0}^k \min_{\alpha\geq 0} \qty(\frac{1}{2} \sum_{t=\tau_j+1}^{\tau_{j+1}} \qty(y_t'(\phi)-\alpha \gamma^{t-\tau_{j+1}})^2 ) +\lambda k }.
\end{align*}
\end{small}

Part $a.$ follows from expanding $\cost(\cdot)$ using Lemma~\ref{lemma:cost_alpha_k}. We then change the optimization variable in the second term from $\tau_j$ to $\tilde{\tau}_j = T+\thj-\tau_j$, which does not change the optimization problem because the mapping between $\tilde{\tau}_j$ and $\tau_j$ is invertible; re-indexing the summation completes part $b.$ Next, $c.$ follows from combining the two optimization problems. In step $d.$, we observe that the two constraints $\alpha_k = \alpha$ (i.e., fitted value at timepoint $\thj$ is $\alpha$) and $\tilde{\alpha}_{\tilde{k}} = \gamma\alpha$ (i.e., fitted value at timepoint $\thj+1$ is $\gamma\alpha$) are equivalent to a single constraint that $\thj$ is not a changepoint. Finally, step $e.$ follows from pulling the optimization over $\alpha_j$ inside the summation.

To summarize, we have proven that 
\begin{align}
\label{eq:c_phi_final}
 C(\phi) &=  \min_{\substack{0=\tau_0<\tau_1<\ldots<\tau_k<\tau_{k+1}=T,  \\
 k, \thj \in \{\tau_1,\ldots, \tau_k\}}}   \qty{  \sum_{j=0}^k \min_{\alpha\geq 0} \qty(\frac{1}{2} \sum_{t=\tau_j+1}^{\tau_{j+1}} \qty(y_t'(\phi)-\alpha \gamma^{t-\tau_{j+1}})^2 ) +\lambda k } ,
\end{align}
and 
\begin{align}
\label{eq:c_phi_prime_final}
 C'(\phi) &=  \min_{\substack{0=\tau_0<\tau_1<\ldots<\tau_k<\tau_{k+1}=T,  \\
 k, \thj \notin \{\tau_1,\ldots, \tau_k\}}}   \qty{  \sum_{j=0}^k \min_{\alpha\geq 0} \qty(\frac{1}{2} \sum_{t=\tau_j+1}^{\tau_{j+1}} \qty(y_t'(\phi)-\alpha \gamma^{t-\tau_{j+1}})^2 ) +\lambda k }.
\end{align}

By inspection of \eqref{eq:c_phi_final} and \eqref{eq:c_phi_prime_final}, we conclude that $\left\{\phi:  C(\phi)\leq  C'(\phi)\right\} = \left\{\phi:  \thj \in \mathcal{M}\qty(y'\qty(\phi))\right\}$, which completes the proof. 

We present the technical lemmas used in the proof below.

\begin{lemma}
\label{lemma:cost_alpha_k}
For $\cost\qty(y_{1:s},\alpha;\gamma)$ defined in \eqref{eq:def_cost}, we have
\begin{equation}
\cost\qty(y_{1:s},\alpha;\gamma) =  \min_{\substack{0=\tau_0<\tau_1<\ldots<\tau_k<\tau_{k+1}=s,  \\
\alpha_0,\ldots,\alpha_k\geq 0, \alpha_k = \alpha, k}} \qty{\frac{1}{2}\sum_{j=0}^k\sum_{t=\tau_j+1}^{\tau_{j+1}} \qty(y_t-\alpha_j \gamma^{t-\tau_{j+1}})^2 + \lambda k}.
\end{equation}
\end{lemma}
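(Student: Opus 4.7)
The plan is to prove the identity by induction on $s$, leveraging Proposition~\ref{prop:cost_recursion}. For the base case $s=1$, the initialization $\cost(y_1,\alpha;\gamma) = \frac{1}{2}(y_1-\alpha)^2$ matches the only admissible configuration on the right-hand side, namely $k=0$, $\tau_0=0$, $\tau_1=1$, $\alpha_0=\alpha$, whose objective evaluates to $\frac{1}{2}(y_1 - \alpha\gamma^{0})^2 + \lambda\cdot 0 = \frac{1}{2}(y_1-\alpha)^2$.

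For the inductive step, suppose the identity holds at $s-1$. Proposition~\ref{prop:cost_recursion} gives
\begin{equation*}
\cost(y_{1:s},\alpha;\gamma) = \min\!\left\{\cost(y_{1:(s-1)},\alpha/\gamma;\gamma),\ \min_{\alpha'\geq 0}\cost(y_{1:(s-1)},\alpha';\gamma)+\lambda\right\} + \tfrac{1}{2}(y_s-\alpha)^2.
\end{equation*}
The two branches correspond to whether $s-1$ is an interior changepoint. In the first (no changepoint at $s-1$) branch, I would apply the inductive hypothesis to expand $\cost(y_{1:(s-1)}, \alpha/\gamma; \gamma)$ as a minimum over partitions $0=\tau_0 < \cdots < \tau_k < \tau_{k+1} = s-1$ with $\alpha_k = \alpha/\gamma$. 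Relabeling $\tau_{k+1}$ from $s-1$ to $s$ extends the last segment to include $s$, and since the amplitude at time $s$ of a segment decaying like $\alpha_k\gamma^{t-\tau_{k+1}}$ with $\alpha_k=\alpha/\gamma$ satisfies $(\alpha/\gamma)\cdot\gamma = \alpha$, the terminal-amplitude constraint matches. The extra $\tfrac{1}{2}(y_s-\alpha)^2$ supplies exactly the missing summand at $t=s$. This enumerates all right-hand-side partitions with $\tau_k<s-1$.

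In the second (changepoint at $s-1$) branch, the inductive hypothesis expands $\cost(y_{1:(s-1)},\alpha';\gamma)$ over partitions ending at $\tau_{k+1}=s-1$ with terminal amplitude $\alpha'$; minimizing over $\alpha'\geq 0$ removes that amplitude constraint. Appending a new singleton segment consisting of just timestep $s$ with amplitude $\alpha$ adds the term $\tfrac{1}{2}(y_s-\alpha)^2$ and one additional interior changepoint (at $s-1$), accounted for by the $+\lambda$. This enumerates all right-hand-side partitions with $\tau_k = s-1$. Taking the minimum across both branches reproduces the full right-hand side.

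The work is essentially bookkeeping rather than estimation, so the main obstacle I anticipate is just carefully tracking indices through the relabeling, and confirming that the boundary configurations on the right-hand side (in particular the $k=0$ case, corresponding to a single segment $[1,s]$ with $\alpha_0=\alpha$) are generated by iterating the "no changepoint at $s-1$" branch all the way down to the base case. Once the bijection between the two enumerations is set up, the equality of objectives follows termwise.
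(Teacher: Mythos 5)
Your proof is correct, but it takes a genuinely different route from the paper's. You argue by induction on $s$, using the recursion of Proposition~\ref{prop:cost_recursion} as the engine and matching its two branches to the two classes of admissible partitions ($\tau_k < s-1$ versus $\tau_k = s-1$); the amplitude bookkeeping $(\alpha/\gamma)\cdot\gamma = \alpha$ and the classification of partitions by the location of the last interior changepoint are both handled correctly, and the base case and the $k=0$ boundary configuration are accounted for. The paper instead proves the identity directly, without induction and without invoking the recursion: it unfolds the definition of $\cost$ in \eqref{eq:def_cost}, expands $F(\tau)$ via the changepoint formulation \eqref{eq:changepoint_form}, merges the two nested minimizations, and relabels. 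The paper's route is shorter and self-contained given the definitions, whereas yours makes the lemma a corollary of Proposition~\ref{prop:cost_recursion}, which the paper imports from \citet{Jewell2019-jw}; that is legitimate here (the recursion is established independently in the cited work, so there is no circularity), and your version has the minor advantage of inheriting the recursion's clean initialization $\cost(y_1,\alpha;\gamma)=\frac{1}{2}(y_1-\alpha)^2$ rather than relying on the implicit convention for $F(0)$ in \eqref{eq:def_cost}. Either argument is acceptable.
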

\begin{proof}

\begin{small}
\begin{align*}
\cost(y_{1:s},\alpha;\gamma) &\overset{a.}{=} \min_{0\leq\tau <s} \qty{ F(\tau) +  \frac{1}{2}   \left( \sum_{t=\tau+1}^s (y_t - \alpha \gamma^{t-s})^2   \right) + \lambda  } \\
&\overset{b.}{=} \min_{0\leq\tau <s} \Bigg\{ \min_{0=\tau_0<\tau_1<\ldots<\tau_k<\tau_{k+1}=\tau, k} \frac{1}{2}  \left( \sum_{j=0}^k \min_{\alpha\geq 0} \left\{ \sum_{t=\tau_j+1}^{\tau_{j+1}} \qty(y_t-\alpha \gamma^{t-\tau_{j+1}})^2 \right\} +\lambda k \right)  \\
&+  \frac{1}{2} \left( \sum_{t=\tau+1}^s \qty(y_t - \alpha \gamma^{t-s})^2   \right) + \lambda  \Bigg\}  \\
&\overset{c.}{=} \min_{\substack{0=\tau_0<\tau_1<\ldots<\tau_k<\tau_{k+1}=\tau<s,  \\
\alpha_0,\ldots,\alpha_k\geq 0, k, \tau}} \qty{ \frac{1}{2} \sum_{j=0}^k \sum_{t=\tau_j+1}^{\tau_{j+1}} \qty(y_t-\alpha_j \gamma^{t-\tau_{j+1}})^2 + \lambda k +  \frac{1}{2}\sum_{t=\tau_{k+1}+1}^s \qty(y_t-\alpha \gamma^{t-s})^2  + \lambda } \\
&\overset{d.}{=}  \min_{\substack{0=\tau_0<\tau_1<\ldots<\tau_k<\tau_{k+1}=s,  \\
\alpha_0,\ldots,\alpha_k\geq 0, \alpha_k = \alpha, k}}   \qty{\frac{1}{2}\sum_{j=0}^k\sum_{t=\tau_j+1}^{\tau_{j+1}} \qty(y_t-\alpha_j \gamma^{t-\tau_{j+1}})^2 + \lambda k}.
\end{align*}
\end{small}
Here, $a.$ follows from the definition in \eqref{eq:def_cost} and $b.$ follows from the definition of $F(\tau)$, the optimal cost of segmenting the first $\tau$ data points. Part $c.$ follows from pulling the $\min_{\alpha\geq 0}$ operation out of the summation, which is performed separately for each data segment $y_{(\tau_j+1):\tau_{j+1}}$. Finally, part $d.$ follows by inspection.
\end{proof}

\begin{lemma}
\label{lemma:alt_cost}
For $\cost\qty(y_{1:s},\alpha;\gamma)$ defined in \eqref{eq:def_cost}, we have
$$ \min_{\alpha\geq 0}\qty{\cost\qty(y_{1:s},\alpha;\gamma)} = \min_{0=\tau_0<\tau_1<\ldots<\tau_k<\tau_{k+1}=s, k}\qty{\sum_{j=0}^k \min_{\alpha\geq 0} \left\{\frac{1}{2} \sum_{t=\tau_j+1}^{\tau_{j+1}} \qty(y_t-\alpha \gamma^{t-\tau_{j+1}})^2 \right\} +\lambda k}.$$
 \end{lemma}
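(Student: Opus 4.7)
The plan is to deduce Lemma~\ref{lemma:alt_cost} directly from Lemma~\ref{lemma:cost_alpha_k} by taking $\min_{\alpha \geq 0}$ on both sides and then interchanging the min with the summation. Since Lemma~\ref{lemma:cost_alpha_k} already expresses $\cost(y_{1:s},\alpha;\gamma)$ as an inner minimization over partitions $(\tau_0,\ldots,\tau_{k+1})$ and per-segment amplitudes $(\alpha_0,\ldots,\alpha_k)$ subject to $\alpha_k = \alpha$, the only real work is to manage the constraint $\alpha_k = \alpha$ and to exploit the separability of the sum across segments.

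First, I would apply $\min_{\alpha \geq 0}$ to both sides of the identity in Lemma~\ref{lemma:cost_alpha_k}. On the right-hand side, this produces a nested minimization of the form $\min_{\alpha \geq 0} \min_{\tau, \, \alpha_0,\ldots,\alpha_k \geq 0, \, \alpha_k = \alpha, \, k}$. Because the outer variable $\alpha$ appears only through the equality constraint $\alpha_k = \alpha$, the combined optimization is equivalent to simply dropping the constraint $\alpha_k = \alpha$ and optimizing over all $\alpha_0,\ldots,\alpha_k \geq 0$ freely. Formally, this is because $\min_{\alpha \geq 0}\min_{\alpha_k = \alpha, \alpha_k \geq 0}\{\cdots\} = \min_{\alpha_k \geq 0}\{\cdots\}$, since any feasible $\alpha_k \geq 0$ can be realized by choosing $\alpha = \alpha_k$.

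Next, I would observe that the objective
\[
\frac{1}{2}\sum_{j=0}^k \sum_{t=\tau_j+1}^{\tau_{j+1}} \qty(y_t - \alpha_j \gamma^{t-\tau_{j+1}})^2 + \lambda k
\]
is a separable function of $(\alpha_0,\ldots,\alpha_k)$ for any fixed partition $(\tau_0,\ldots,\tau_{k+1})$: the $j$th term depends only on $\alpha_j$. Thus the joint minimization over $\alpha_0,\ldots,\alpha_k \geq 0$ can be pushed inside the outer sum, yielding
\[
\min_{0=\tau_0<\cdots<\tau_{k+1}=s,\, k}\qty{\sum_{j=0}^k \min_{\alpha_j \geq 0}\frac{1}{2}\sum_{t=\tau_j+1}^{\tau_{j+1}}\qty(y_t - \alpha_j\gamma^{t-\tau_{j+1}})^2 + \lambda k}.
\]
Renaming the dummy variable $\alpha_j$ to $\alpha$ inside each summand reproduces exactly the right-hand side of the claim.

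There is no serious obstacle here; the whole argument is bookkeeping that rests on Lemma~\ref{lemma:cost_alpha_k}. The only point requiring a moment's care is justifying that dropping $\alpha_k = \alpha$ under the outer $\min_{\alpha \geq 0}$ does not change the value, but this follows from the trivial identity $\min_x f(x) = \min_x \min_{y = x} f(y)$ applied to $f(\alpha_k) = \min_{\text{other vars}} \{\cdots\}$. Everything else is the standard sum-min swap for separable objectives.
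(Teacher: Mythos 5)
Your proposal is correct and follows essentially the same route as the paper's proof: apply Lemma~\ref{lemma:cost_alpha_k}, absorb the outer $\min_{\alpha\geq 0}$ by dropping the constraint $\alpha_k=\alpha$, and then exchange the minimization over the $\alpha_j$'s with the sum using separability across segments. No gaps; the justification you give for dropping the equality constraint is exactly the (implicit) middle step in the paper's chain of equalities.
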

\begin{proof}

\begin{align*}
\min_{\alpha\geq 0}\qty{\cost\qty(y_{1:s},\alpha;\gamma)} &\overset{a.}{=} 
 \min_{\alpha\geq 0}\qty{\min_{\substack{0=\tau_0<\tau_1<\ldots<\tau_k<\tau_{k+1}=s,  \\
\alpha_0,\ldots,\alpha_k\geq 0, \alpha_k = \alpha, k}}   \qty{\frac{1}{2}\sum_{j=0}^k\sum_{t=\tau_j+1}^{\tau_{j+1}} \qty(y_t-\alpha_j \gamma^{t-\tau_{j+1}})^2 + \lambda k}} \\
 &=  \min_{\substack{0=\tau_0<\tau_1<\ldots<\tau_k<\tau_{k+1}=s,  \\
\alpha_0,\ldots,\alpha_k\geq 0, k}}   \qty{\frac{1}{2}\sum_{j=0}^k\sum_{t=\tau_j+1}^{\tau_{j+1}} \qty(y_t-\alpha_j \gamma^{t-\tau_{j+1}})^2 + \lambda k}  \\
 &\overset{b.}{=}   \min_{\substack{0=\tau_0<\tau_1<\ldots<\tau_k<\tau_{k+1}=s, k}}  \qty{\frac{1}{2}\sum_{j=0}^k \min_{\alpha\geq 0}\qty{ \frac{1}{2} \sum_{t=\tau_j+1}^{\tau_{j+1}} \qty(y_t-\alpha \gamma^{t-\tau_{j+1}})^2 }+ \lambda k}.
\end{align*}

Here, $a.$ follows from Lemma~\ref{lemma:cost_alpha_k}. $b.$ follows from noting that $\alpha_j$ can be minimized independently for each data segment $y_{(\thj+1):\hat\tau_{j+1}}$.
\end{proof}

\begin{lemma}
\label{lemma:cost_time_reversal}
For $\cost\qty(y_{1:s},\alpha;\gamma)$ defined in \eqref{eq:def_cost}, we have
$$ \min_{\alpha\geq 0}\qty{\cost\qty(y_{1:s},\alpha;\gamma)} = \min_{\alpha\geq 0}\qty{\cost\qty(y_{s:1},\alpha;1/\gamma)}.$$
 \end{lemma}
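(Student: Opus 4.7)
The plan is to reduce the claim to a statement about matching segmentations of the forward and reversed data. First, apply Lemma~\ref{lemma:alt_cost} on both sides, so that
\begin{align*}
\min_{\alpha\geq 0}\{\cost(y_{1:s},\alpha;\gamma)\} = \min_{0=\tau_0<\cdots<\tau_{k+1}=s,\, k}\left\{ \sum_{j=0}^k \min_{\alpha\geq 0}\tfrac{1}{2}\sum_{t=\tau_j+1}^{\tau_{j+1}}(y_t-\alpha\gamma^{t-\tau_{j+1}})^2 + \lambda k \right\},
\end{align*}
and analogously for $\min_{\alpha\geq 0}\{\cost(y_{s:1},\alpha;1/\gamma)\}$ with reversed data $\tilde y_u := y_{s+1-u}$ and decay $1/\gamma$. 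Any forward segmentation $0=\tau_0<\cdots<\tau_{k+1}=s$ corresponds bijectively to a reversed segmentation via $\tilde\tau_j = s - \tau_{k+1-j}$, which preserves the number $k$ of changepoints (and hence the penalty $\lambda k$). It therefore suffices to show that, under this bijection, the inner minimizations on matched segments contribute the same value.

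Fix a forward segment $[\tau_{k-j}+1,\tau_{k+1-j}]$; its counterpart under the bijection is the reversed segment $[\tilde\tau_j+1,\tilde\tau_{j+1}]$. Substituting $u = s+1-t$ in the reversed-data inner minimization and simplifying the exponent gives
\begin{align*}
\min_{\alpha'\geq 0}\tfrac{1}{2}\sum_{t=\tilde\tau_j+1}^{\tilde\tau_{j+1}}(\tilde y_t - \alpha'(1/\gamma)^{t-\tilde\tau_{j+1}})^2 = \min_{\alpha'\geq 0}\tfrac{1}{2}\sum_{u=\tau_{k-j}+1}^{\tau_{k+1-j}}(y_u - \alpha'\gamma^{u-\tau_{k-j}-1})^2.
\end{align*}
This differs from the forward inner minimization $\min_{\alpha\geq 0}\tfrac{1}{2}\sum_{u=\tau_{k-j}+1}^{\tau_{k+1-j}}(y_u - \alpha\gamma^{u-\tau_{k+1-j}})^2$ only in which endpoint anchors the exponential curve. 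The reparametrization $\alpha = \alpha'\gamma^{\tau_{k+1-j}-\tau_{k-j}-1}$ makes the two fitted curves agree pointwise, and since $\gamma>0$ this map is a bijection of $[0,\infty)$ onto itself, so the two optima coincide.

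Summing over $j$, every forward segmentation has a reversed counterpart producing the same total cost, and vice versa, so the minima over all segmentations agree and the lemma follows. The main obstacle is purely bookkeeping: one must verify that $\tilde\tau_j = s - \tau_{k+1-j}$ actually realizes the claimed bijection on segmentations, and that the substitution $u = s+1-t$ converts the $(1/\gamma)^{t-\tilde\tau_{j+1}}$ factor anchored at the right endpoint of the reversed segment into a $\gamma^{u - \tau_{k-j}-1}$ factor anchored at the left endpoint of the corresponding forward segment---after which the positivity-preserving rescaling of $\alpha$ finishes the argument cleanly.
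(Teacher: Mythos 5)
Your proof is correct and follows essentially the same route as the paper's: both reduce each side to a minimum over segmentations via the segment-decomposition lemmas (you use Lemma~\ref{lemma:alt_cost}, the paper uses Lemma~\ref{lemma:cost_alpha_k}), then apply the time-reversal bijection $\tilde\tau_j = s-\tau_{k+1-j}$ and the substitution $u=s+1-t$ to turn the decay $\gamma$ into $1/\gamma$. The one point where you are more explicit than the paper is the re-anchoring of the exponential fit from the right to the left endpoint of each segment via the positive rescaling $\alpha=\alpha'\gamma^{\tau_{k+1-j}-\tau_{k-j}-1}$; the paper absorbs this silently into the free per-segment amplitudes $\alpha_j\geq 0$, so making it explicit is a small but genuine improvement in rigor.
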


 \begin{proof}
\begin{align*}
\min_{\alpha\geq 0}\qty{\cost\qty(y_{1:s},\alpha;\gamma)} &\overset{a.}{=}\min_{\substack{0=\tau_0<\tau_1<\ldots<\tau_k<\tau_{k+1}=s,  \\
\alpha_0,\ldots,\alpha_k\geq 0, k}} \qty{\frac{1}{2}\sum_{j=0}^k\sum_{t=\tau_j+1}^{\tau_{j+1}} \qty(y_t-\alpha_j \gamma^{t-\tau_{j+1}})^2 + \lambda k} \\
&\overset{b.}{=}\min_{\substack{s=\tilde{\tau}_0>\tilde{\tau}_1>\ldots>\tilde{\tau}_k>\tilde{\tau}_{k+1}=0,  \\
\alpha_0,\ldots,\alpha_k\geq 0, k}} \qty{\frac{1}{2}\sum_{j=0}^k\sum_{t=s-\tilde{\tau}_j+1}^{s-\tilde{\tau}_{j+1}} \qty(y_t-\alpha_j \gamma^{t-(s-\tilde{\tau}_{j+1})})^2 + \lambda k} \\
&\overset{c.}{=} \min_{\substack{s=\tilde{\tau}_0>\tilde{\tau}_1>\ldots>\tilde{\tau}_k>\tilde{\tau}_{k+1}=0,  \\
\alpha_0,\ldots,\alpha_k\geq 0, k}} \qty{\frac{1}{2}\sum_{j=0}^k\sum_{t=\tilde{\tau}_{j+1}}^{\tilde{\tau}_{j}} \qty(y_{s-t}-\alpha_j (1/\gamma)^{t-\tilde{\tau}_{j+1}})^2 + \lambda k} \\
&\overset{d.}{=} \min_{\alpha\geq 0}\qty{\cost\qty(y_{s:1},\alpha;1/\gamma)}.
\end{align*}

Part $a.$ follows from Lemma~\ref{lemma:cost_alpha_k}. In step $b.$, we change the optimization variable from $\tau_j$ to $\tilde{\tau}_j = s-\tau_j$, which does not change the optimization problem because the mapping between $\tilde{\tau}_j$ and $\tau_j$ is invertible. Step $c.$ follows from re-indexing the summation,  and finally $d.$ follows from Lemma~\ref{lemma:cost_alpha_k} again.
 \end{proof}


\subsection{Proof of Proposition~\ref{prop:bivariate_cost_recursion}}
\label{appendix:prop_bivariate}

To begin, we will prove \eqref{eq:forward_fpop}  using an induction argument.  
The following claim serves as the ``base case" for the recursion.
\begin{lemma}
\label{lemma:base}
\begin{equation}
\cost\left(y_{1:(\thj-h+1)}'(\phi),\alpha;\gamma\right) = \min_{f \in \mathcal{C}_{\thj-h+1}} f(\alpha,\phi),
\label{eq:lemma:base}
\end{equation}
where 
\begin{small}
\begin{align}
\mathcal{C}_{\thj-h+1} = \Bigg\{\cost\left(y_{1:(\thj-h)}'(\phi),\alpha/\gamma;\gamma\right)+\frac{1}{2}\left(y_{\thj-h+1}'(\phi)-\alpha\right)^2, \nonumber \\
\min_{\alpha'\geq 0}\qty{\cost\left(y_{1:(\thj-h)}'(\phi),\alpha';\gamma\right)} +\lambda +\frac{1}{2}\left(y_{\thj-h+1}'(\phi)-\alpha\right)^2\Bigg\}.
\label{eq:base_case_c_set}
\end{align}
\end{small}
\end{lemma}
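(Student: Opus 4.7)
The plan is to show that this base case follows essentially immediately from Proposition~\ref{prop:cost_recursion}. Since Proposition~\ref{prop:cost_recursion} gives a recursion for $\cost(y_{1:s}, \alpha; \gamma)$ that is valid for any input data vector in $\mathbb{R}^T$, the same recursion applies pointwise in $\phi$ when we take the input to be the perturbed vector $y'(\phi)$ rather than $y$. So the lemma amounts to an unpacking of definitions rather than a new piece of mathematics.

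Concretely, I would invoke Proposition~\ref{prop:cost_recursion} at $s = \thj - h + 1$ with input data $y'(\phi)$, obtaining
\begin{align*}
\cost\bigl(y'_{1:(\thj-h+1)}(\phi),\alpha;\gamma\bigr)
&= \min\Bigl\{\cost\bigl(y'_{1:(\thj-h)}(\phi),\alpha/\gamma;\gamma\bigr),\\
&\qquad\quad \min_{\alpha'\geq 0}\cost\bigl(y'_{1:(\thj-h)}(\phi),\alpha';\gamma\bigr)+\lambda\Bigr\} \\
&\quad + \tfrac{1}{2}\bigl(y'_{\thj-h+1}(\phi)-\alpha\bigr)^2.
\end{align*}
Then I would distribute the common quadratic term $\tfrac{1}{2}\bigl(y'_{\thj-h+1}(\phi)-\alpha\bigr)^2$ inside the outer $\min\{\cdot,\cdot\}$, which yields precisely the minimum of the two functions of $(\alpha,\phi)$ listed in the definition of $\mathcal{C}_{\thj-h+1}$ in \eqref{eq:base_case_c_set}. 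That immediately gives \eqref{eq:lemma:base}.

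I do not anticipate any substantive obstacle here: Proposition~\ref{prop:cost_recursion} was proved for an arbitrary data vector in $\mathbb{R}^T$, and for each fixed $\phi$, $y'(\phi)$ is just such a vector. The only bookkeeping concern is that, once $\phi$ is regarded as a free variable rather than a fixed scalar, each element of $\mathcal{C}_{\thj-h+1}$ should be viewed as a bivariate function of $(\alpha,\phi)$ — which is exactly the form that the inductive step of Proposition~\ref{prop:bivariate_cost_recursion} will require to propagate forward through the recursion \eqref{eq:union_update_bivariate}.
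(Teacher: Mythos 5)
Your proposal is correct and matches the paper's argument: the paper's own proof of this base case simply re-derives the two branches of Proposition~\ref{prop:cost_recursion} (no changepoint versus a changepoint at time $\thj-h$) for the perturbed data $y'(\phi)$ and then folds the common term $\tfrac{1}{2}\bigl(y'_{\thj-h+1}(\phi)-\alpha\bigr)^2$ into both, which is exactly your ``apply Proposition~\ref{prop:cost_recursion} to $y'(\phi)$ and distribute'' step (and is also how the paper handles the inductive step). The only detail the paper adds, which you may note in passing, is that $y'_{1:(\thj-h)}(\phi)=y_{1:(\thj-h)}$ because $\nu$ is supported on $\{\thj-h+1,\ldots,\thj+h\}$, so the initializing function depends on $\alpha$ only.
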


\begin{proof}
To prove Lemma~\ref{lemma:base}, we will first compute 
$\cost\left(y_{1:(\thj-h+1)}'(\phi),\alpha;\gamma\right)$ using the definition in \eqref{eq:def_cost}; we will then show that this equals $\min_{f \in \mathcal{C}_{\thj-h+1}} f(\alpha,\phi)$, with $\mathcal{C}_{\thj-h+1}$ in \eqref{eq:base_case_c_set}.

Per the definition of $\nu$ in \eqref{eq:nu_def}, $y'_{1:(\thj-h)}(\phi)= y_{1:(\thj-h)}$; therefore, $\mathcal{C}_{\thj-h} =  \cost\left(y_{1:(\thj-h)}'(\phi),\alpha;\gamma\right)  =  \cost\left(y_{1:(\thj-h)},\alpha;\gamma\right)$. From Proposition~\ref{prop:cost_recursion},  this means that $\mathcal{C}_{\thj-h}$ is a piecewise quadratic function of $\alpha$ only.

Now we consider the function $\cost\left(y_{1:(\thj-h+1)}'(\phi),\alpha;\gamma\right)$. There are two possibilities:
\begin{enumerate}
\item
\emph{There is no changepoint at the $(\thj-h)$th time step.} In this case, $\cost\left(y_{1:(\thj-h+1)}'\left(\phi\right),\alpha;\gamma\right)$ equals 
\begin{align*}
\cost\left(y_{1:(\thj-h)}'(\phi),\alpha/\gamma;\gamma\right) + \frac{1}{2}\left(y_{\thj-h+1}'\qty(\phi)-\alpha\right)^2,
\end{align*} where $\alpha/\gamma$ accounts for the exponential calcium decay.

\item
\emph{There is a changepoint at the $(\thj-h)$th time step.} In this case, $\cost\left(y_{1:(\thj-h+1)}'(\phi),\alpha;\gamma\right)$ equals 
\begin{align*}
\min_{\alpha'\geq 0}\qty{\cost\left(y_{1:(\thj-h)}'(\phi),\alpha';\gamma\right)} + \lambda +  \frac{1}{2}\left(y_{\thj-h+1}'(\phi)-\alpha\right)^2,
\end{align*} where the changepoint incurs a penalty of $\lambda$, and there can be an arbitrary change in the calcium from timepoint $\thj-h$ to $\thj-h+1$.
\end{enumerate}

Therefore, 
\begin{align}
\cost\left(y_{1:(\thj-h+1)}'(\phi),\alpha;\gamma\right) =& \min \Bigg\{ \cost\left(y_{1:(\thj-h)}'(\phi),\alpha/\gamma;\gamma\right) + \frac{1}{2}\left(y_{\thj-h+1}'(\phi)-\alpha\right)^2, \nonumber \\
& \min_{\alpha'\geq 0}\qty{\cost\left(y_{1:(\thj-h)}'(\phi),\alpha';\gamma\right)} + \lambda +  \frac{1}{2}\left(y_{\thj-h+1}'(\phi)-\alpha\right)^2
 \Bigg\} \nonumber \\
 =& \min_{f\in \mathcal{C}_{\thj-h+1}} f(\alpha,\phi), 
 \label{eq:first_step_induction}
\end{align}
where the last equality follows from 
 \eqref{eq:base_case_c_set}. This completes the proof. 
\end{proof}

We will now prove the inductive step for the recursion, which relies on the following claim.
\begin{lemma}
Suppose that for some $s\in \left\{ \thj-h+1,\ldots,\thj-1 \right\}$, 
\begin{align}
\label{eq:inductive_hypo}
\cost\left(y_{1:s}'(\phi),\alpha;\gamma\right) = \min_{f\in\mathcal{C}_{s}} f(\alpha,\phi).
\end{align}
Then, 
\begin{align}
\cost\left(y_{1:(s+1)}'(\phi),\alpha;\gamma\right) = \min_{f\in\mathcal{C}_{s+1}} f(\alpha,\phi),
\end{align}
where $\mathcal{C}_{s+1}$ is defined recursively according to \eqref{eq:union_update_bivariate}.
\label{lemma:inductive}
\end{lemma}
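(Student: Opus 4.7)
The plan is to mirror the strategy used in the proof of Lemma~\ref{lemma:base}: apply the univariate cost recursion of Proposition~\ref{prop:cost_recursion} to the perturbed data $y'(\phi)$, and then substitute the inductive hypothesis \eqref{eq:inductive_hypo}. Since Proposition~\ref{prop:cost_recursion} is a purely algebraic statement that holds for arbitrary input data, replacing $y_{1:(s+1)}$ by $y'_{1:(s+1)}(\phi)$ gives
\[
\cost\qty(y'_{1:(s+1)}(\phi),\alpha;\gamma) = \min\qty{\cost\qty(y'_{1:s}(\phi),\alpha/\gamma;\gamma),\ \min_{\alpha'\geq 0}\cost\qty(y'_{1:s}(\phi),\alpha';\gamma)+\lambda} + \tfrac{1}{2}\qty(y'_{s+1}(\phi)-\alpha)^2,
\]
where the two branches correspond to the presence or absence of a changepoint at time $s$, just as in Lemma~\ref{lemma:base}.

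Next, I would plug the inductive hypothesis \eqref{eq:inductive_hypo} into each term. For the ``no changepoint'' branch, this immediately yields $\min_{f\in\mathcal{C}_s} f(\alpha/\gamma,\phi)$. For the ``changepoint'' branch, I swap the order of the two minimizations,
\[
\min_{\alpha'\geq 0}\cost\qty(y'_{1:s}(\phi),\alpha';\gamma) = \min_{\alpha'\geq 0}\min_{f\in\mathcal{C}_s}f(\alpha',\phi) = \min_{f\in\mathcal{C}_s}\min_{\alpha'\geq 0}f(\alpha',\phi),
\]
so that, after adding $\lambda$, this second branch equals exactly $g_{s+1}(\phi)$ as given in \eqref{eq:g_update}. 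Because $\tfrac{1}{2}(y'_{s+1}(\phi)-\alpha)^2$ depends neither on $\alpha'$ nor on the choice of $f\in\mathcal{C}_s$, I can distribute the outer $\min$ across both branches and obtain
\[
\cost\qty(y'_{1:(s+1)}(\phi),\alpha;\gamma) = \min\qty{\min_{f\in\mathcal{C}_s}\qty{f(\alpha/\gamma,\phi)+\tfrac{1}{2}(y'_{s+1}(\phi)-\alpha)^2},\ g_{s+1}(\phi)+\tfrac{1}{2}(y'_{s+1}(\phi)-\alpha)^2},
\]
which is precisely $\min_{f\in\mathcal{C}_{s+1}} f(\alpha,\phi)$ with $\mathcal{C}_{s+1}$ constructed according to \eqref{eq:union_update_bivariate}.

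The only substantive things to check are that (i) Proposition~\ref{prop:cost_recursion} is indeed purely data-agnostic so it applies verbatim to $y'(\phi)$, and (ii) the interchange of the two minimizations is valid, which it is since $\mathcal{C}_s$ is a finite collection of piecewise quadratics bounded below on $\{\alpha'\geq 0\}$. I do not anticipate any real obstacle: the content of the lemma is essentially bookkeeping to translate the univariate recursion of Proposition~\ref{prop:cost_recursion} into the bivariate ``collection-of-functions'' language used by the functional-pruning representation of $\cost(y'_{1:s}(\phi),\alpha;\gamma)$.
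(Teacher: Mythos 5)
Your proposal is correct and follows essentially the same route as the paper's proof: apply Proposition~\ref{prop:cost_recursion} to the perturbed data, substitute the inductive hypothesis into both branches, exchange the order of the two minimizations to recognize $g_{s+1}(\phi)$, and match the result against the definition of $\mathcal{C}_{s+1}$ in \eqref{eq:union_update_bivariate}. The only cosmetic difference is that the paper expands $\min_{f\in\mathcal{C}_{s+1}} f(\alpha,\phi)$ separately and checks equality of the two displays by inspection, whereas you derive the target expression directly; both are valid.
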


\begin{proof}
To begin, we apply Proposition~\ref{prop:cost_recursion} with $y'(\phi)$ instead of $y$ and get 
\begin{equation}
\label{eq:step_1}
\cost\left(y_{1:(s+1)}'(\phi),\alpha;\gamma\right) = \min \left\{ \cost\left(y_{1:s}'(\phi),\alpha/\gamma;\gamma\right), \min_{\alpha'\geq 0}\cost\left(y_{1:s}'(\phi),\alpha';\gamma\right)+\lambda \right\} + \frac{1}{2}\left(y_{s+1}'(\phi)-\alpha\right)^2 .
\end{equation}
Applying the inductive hypothesis in \eqref{eq:inductive_hypo} with $\alpha/\gamma$ instead of $\alpha$, we have that
\begin{align}
\cost\left(y_{1:s}'(\phi),\alpha/\gamma;\gamma\right) = \min_{f\in\mathcal{C}_s} f(\alpha/\gamma,\phi),
\end{align} and
\begin{align}
\min_{\alpha'\geq 0}\qty{\cost\left(y_{1:s}'(\phi),\alpha';\gamma\right)} = \min_{\alpha'\geq 0} \qty{\min_{f\in\mathcal{C}_s} f(\alpha',\phi)}.
\end{align}

Therefore,
\begin{small}
\begin{align}
\cost\left(y_{1:(s+1)}'(\phi),\alpha;\gamma\right) &\overset{a.}{=} \min \left\{  \min_{f\in\mathcal{C}_s} f(\alpha/\gamma,\phi), \min_{\alpha'\geq 0} \qty{\min_{f\in\mathcal{C}_s} f(\alpha',\phi)} + \lambda \right\} + \frac{1}{2}\left(y_{s+1}'(\phi)-\alpha\right)^2  \\
&\overset{b.}{=}  \min \qty{  \min_{f\in\mathcal{C}_s} f(\alpha/\gamma,\phi)  + \frac{1}{2}\left(y_{s+1}'(\phi)-\alpha\right)^2 ,  \min_{f\in\mathcal{C}_s} \qty{ \min_{\alpha'\geq 0}\qty{f(\alpha',\phi)}} + \lambda  + \frac{1}{2}\left(y_{s+1}'(\phi)-\alpha\right)^2 }, 
\label{eq:induction_step_1}
\end{align}
\end{small}
where $a.$ follows from \eqref{eq:inductive_hypo} and \eqref{eq:step_1}, and $b.$ follows from exchanging the order of minimization and distributing the $\frac{1}{2}\left(y_{s+1}'(\phi)-\alpha\right)^2$ term inside.

Furthermore, 
\begin{align}
\min_{f\in\mathcal{C}_{s+1}} f(\alpha,\phi) &\overset{a.}{=}  \min_{f\in \qty{\left( \bigcup_{f \in \mathcal{C}_{s}} \left\{ f(\alpha/\gamma,\phi)+\frac{1}{2}\left(y'_{s+1}(\phi)-\alpha\right)^2  \right\} \right) \bigcup \left\{ g_{s+1}(\phi)+\frac{1}{2}\left(y'_{s+1}(\phi)-\alpha\right)^2  \right\} } } f(\alpha,\phi) \\
&\overset{b.}{=} \min \qty{ \min_{f \in \mathcal{C}_{s}} \left\{ f(\alpha/\gamma,\phi)+\frac{1}{2}\left(y'_{s+1}(\phi)-\alpha\right)^2  \right\},  g_{s+1}(\phi)+\frac{1}{2}\left(y'_{s+1}(\phi)-\alpha\right)^2  } \\
&\overset{c.}{=} \min \qty{ \min_{f \in \mathcal{C}_{s}} \left\{ f(\alpha/\gamma,\phi) \right\} +\frac{1}{2}\left(y'_{s+1}(\phi)-\alpha\right)^2 ,   \min_{f\in\mathcal{C}_{s}}\qty{\min_{\alpha\geq 0} \qty{f(\alpha,\phi)}} +\lambda +\frac{1}{2}\left(y'_{s+1}(\phi)-\alpha\right)^2  },  \label{eq:induction_step_2}
\end{align}
where $a.$ follows from the definition of $\mathcal{C}_{s+1}$ in \eqref{eq:union_update_bivariate}; $b.$ follows from noting that $\min_{f\in A \bigcup B} f = \min \qty{ \min_{f\in A} f , \min_{f\in B} f }$; and $c.$ follows from the definition of $g_{s+1}(\phi)$ in \eqref{eq:g_update}. 

Now by inspection, \eqref{eq:induction_step_1} is equal to \eqref{eq:induction_step_2}; this completes the proof. 
\end{proof}

The inductive proof of \eqref{eq:forward_fpop} follows directly from combining Lemmas~\ref{lemma:base} and \ref{lemma:inductive}.

We will now show that  for $\thj-h+1 \leq s \leq \thj$, $\mathcal{C}_{s}$ is a collection of piecewise quadratic functions of $\alpha$ and $\phi$. We will show this by induction. 
We first make the following observations, which follow from simple algebra:
\begin{itemize}
	\item
    \emph{Observation 1: }For $\thj-h+1 \leq s \leq \thj$, $\frac{1}{2}(y'_s(\phi)-\alpha)^2$ is a quadratic function of $\phi$ and $\alpha$, where $y'(\phi)$ is defined in \eqref{eq:phi}.
	\item
	\emph{Observation 2: }If both $f_1(\alpha,\phi)$ and $f_2(\alpha,\phi)$ are  piecewise quadratic functions of $\alpha$ and $\phi$, then $f_1+f_2$ is  also a piecewise quadratic function of $\alpha$ and $\phi$.
	\item
	\emph{Observation 3: }If $f(\alpha,\phi)$ is a piecewise quadratic function of $\phi$ and $\alpha$, then $\min_{\alpha \geq 0} f(\alpha,\phi)$ is a piecewise quadratic function of only $\phi$.
	\item
	\emph{Observation 4: }If $\mathcal{C}_s$ is a finite set of piecewise quadratic functions of $\phi$ and $\alpha$, then $\min_{f\in\mathcal{C}_s} f(\alpha,\phi)$ is  a piecewise quadratic function of $\phi$ and $\alpha$.
\end{itemize}
In our induction, Lemma~\ref{lemma:pwq_base_case} serves as our ``base case". The induction step is presented in Lemma~\ref{lemma:pwq_induction_step}.
\begin{lemma}
\label{lemma:pwq_base_case}
$\mathcal{C}_{\thj-h+1}$ is a collection of piecewise quadratic functions of $\alpha$ and $\phi$. 
\end{lemma}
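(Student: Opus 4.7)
The plan is to verify, by direct inspection, that each of the two functions listed in the definition \eqref{eq:base_case_c_set} of $\mathcal{C}_{\thj-h+1}$ is piecewise quadratic in $(\alpha,\phi)$, appealing to Observations 1--4 and Proposition~\ref{prop:cost_recursion}.

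First I would exploit the fact that the contrast vector $\nu$ in \eqref{eq:nu_def} is supported only on $\{\thj-h+1,\dots,\thj+h\}$. Consequently, by the formula for $y'(\phi)$ in \eqref{eq:phi}, the coordinates $y'_t(\phi)$ for $t\leq \thj-h$ are \emph{independent of $\phi$}, and equal $y_t$. Hence
\[
\cost\bigl(y'_{1:(\thj-h)}(\phi),\alpha;\gamma\bigr)=\cost\bigl(y_{1:(\thj-h)},\alpha;\gamma\bigr),
\]
which by Proposition~\ref{prop:cost_recursion} is a piecewise quadratic function of $\alpha$ alone. Viewed as a function of $(\alpha,\phi)$, it is therefore trivially piecewise quadratic. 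The same argument shows that $\min_{\alpha'\geq 0}\cost(y'_{1:(\thj-h)}(\phi),\alpha';\gamma)$ is a constant (independent of both $\alpha$ and $\phi$).

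Next I would handle the ``new'' term $\tfrac12(y'_{\thj-h+1}(\phi)-\alpha)^2$. Since $y'_{\thj-h+1}(\phi)$ is an affine function of $\phi$ (by \eqref{eq:phi}), this term is a bona fide quadratic function of $(\alpha,\phi)$; this is precisely Observation~1. Combining these pieces: the first element of $\mathcal{C}_{\thj-h+1}$ is the sum of $\cost(y_{1:(\thj-h)},\alpha/\gamma;\gamma)$ (piecewise quadratic in $\alpha$, hence in $(\alpha,\phi)$) and $\tfrac12(y'_{\thj-h+1}(\phi)-\alpha)^2$ (quadratic in $(\alpha,\phi)$), which is piecewise quadratic in $(\alpha,\phi)$ by Observation~2. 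The second element is the sum of a constant, $\lambda$, and the same quadratic in $(\alpha,\phi)$, so it is also piecewise quadratic.

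There is no real obstacle here: the lemma is essentially bookkeeping. The only thing one must be careful about is the supporting observation that $\nu_t=0$ for $t\leq \thj-h$, which is what allows us to reduce $\cost(y'_{1:(\thj-h)}(\phi),\cdot;\gamma)$ to the unperturbed cost and thereby inherit the univariate piecewise-quadratic structure from Proposition~\ref{prop:cost_recursion}.
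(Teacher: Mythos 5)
Your proof is correct and follows essentially the same route as the paper's: both reduce $\cost\qty(y'_{1:(\thj-h)}(\phi),\alpha;\gamma)$ to the unperturbed cost $\cost\qty(y_{1:(\thj-h)},\alpha;\gamma)$ using the support of $\nu$, invoke Proposition~\ref{prop:cost_recursion} for the univariate piecewise-quadratic structure, and then apply Observations 1--2 to absorb the new quadratic data term. Your remark that the minimized term is in fact a constant is a slight sharpening of the paper's appeal to Observation 3, but the argument is otherwise identical.
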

\begin{proof}
Applying the recursion  in \eqref{eq:union_update_bivariate}, we see that \begin{small}
\begin{align*}
\mathcal{C}_{\thj-h+1} = \Bigg\{\cost\qty(y_{1:(\thj-h)}'(\phi),\alpha;\gamma)+\frac{1}{2}\qty(y_{\thj-h+1}'(\phi)-\alpha)^2,  \\
\min_{\alpha\geq 0}\qty{\cost\qty(y_{1:(\thj-h)}'(\phi),\alpha;\gamma)} +\lambda +\frac{1}{2}\qty(y_{\thj-h+1}'(\phi)-\alpha)^2\Bigg\}.
\end{align*}
\end{small}
By Proposition~\ref{prop:cost_recursion},  $\cost(y_{1:(\thj-h)}'(\phi),\alpha;\gamma)=\cost(y_{1:(\thj-h)},\alpha;\gamma)$ is a piecewise quadratic function of $\alpha$. Furthermore, $\frac{1}{2}(y_{\thj-h+1}'(\phi)-\alpha)^2$ is a quadratic function of $\phi$ and $\alpha$, according to Observation 1. Therefore, the first term in $\mathcal{C}_{\thj-h+1}$ is a piecewise quadratic function of $\phi$ and $\alpha$ according to Observation 2. As for the second term, we note that $\min_{\alpha\geq 0}\qty{\cost(y_{1:(\thj-h)}'(\phi),\alpha;\gamma)}$ is a piecewise quadratic function of $\phi$ according to Observation 3, so its sum with $\lambda +\frac{1}{2}(y_{\thj-h+1}'(\phi)-\alpha)^2$ is piecewise quadratic in  $\phi$ and $\alpha$.
\end{proof} 

\begin{lemma}
\label{lemma:pwq_induction_step}
Suppose that for some $s \in \{ \thj-h+1,\ldots,\thj-1\} $, $\mathcal{C}_{s}$ is a collection of piecewise quadratic functions of $\alpha$ and $\phi$. Then, 
\begin{align}
\mathcal{C}_{s+1} = \left( \bigcup_{f \in \mathcal{C}_{s}} \left\{ f(\alpha/\gamma,\phi)+\frac{1}{2}\qty(y'_{s+1}(\phi)-\alpha)^2  \right\} \right) \bigcup \left\{ g_{s+1}(\phi)+\frac{1}{2}\qty(y'_{s+1}(\phi)-\alpha)^2  \right\}
\label{eq:c_s_1_induction}
\end{align}
 is also a collection of piecewise quadratic functions of $\alpha$ and $\phi$, where $g_{s+1}$ is defined in \eqref{eq:g_update}.
\end{lemma}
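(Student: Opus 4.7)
The plan is to verify each element in the union defining $\mathcal{C}_{s+1}$ is piecewise quadratic in $(\alpha,\phi)$ by directly invoking Observations 1--4. Since $\mathcal{C}_{s+1}$ is a finite set (it has one more element than $\mathcal{C}_s$), showing each element is piecewise quadratic will suffice.

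First I would handle the elements of the form $f(\alpha/\gamma,\phi)+\tfrac{1}{2}(y'_{s+1}(\phi)-\alpha)^2$ for $f \in \mathcal{C}_s$. The inductive hypothesis gives that $f(\alpha,\phi)$ is piecewise quadratic in $(\alpha,\phi)$, and the affine change of variables $\alpha \mapsto \alpha/\gamma$ preserves this structure (each polynomial piece remains quadratic, and each region of the domain partition is mapped to another valid region since scaling $\alpha$ by the nonzero constant $1/\gamma$ maps polyhedral regions to polyhedral regions). By Observation 1, $\tfrac{1}{2}(y'_{s+1}(\phi)-\alpha)^2$ is a (single-piece) quadratic in $(\alpha,\phi)$, and by Observation 2 the sum is piecewise quadratic in $(\alpha,\phi)$.

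Next I would handle the extra element $g_{s+1}(\phi)+\tfrac{1}{2}(y'_{s+1}(\phi)-\alpha)^2$. By the inductive hypothesis and Observation 3, for each $f \in \mathcal{C}_s$ the function $\min_{\alpha \geq 0} f(\alpha,\phi)$ is a piecewise quadratic in $\phi$ alone. Since $\mathcal{C}_s$ is finite, Observation 4 gives that $\min_{f \in \mathcal{C}_s}\min_{\alpha \geq 0} f(\alpha,\phi)$ is piecewise quadratic in $\phi$, and adding the constant $\lambda$ leaves it piecewise quadratic; thus $g_{s+1}(\phi)$, as defined in \eqref{eq:g_update}, is piecewise quadratic in $\phi$ (hence trivially in $(\alpha,\phi)$). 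Adding the quadratic $\tfrac{1}{2}(y'_{s+1}(\phi)-\alpha)^2$ and applying Observation 2 once more gives a piecewise quadratic function of $(\alpha,\phi)$.

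Combining the two cases, every element of $\mathcal{C}_{s+1}$ is piecewise quadratic in $(\alpha,\phi)$, completing the inductive step. The only mild subtlety is the change-of-variables argument that $f(\alpha/\gamma,\phi)$ remains piecewise quadratic, which I would handle in a single sentence by noting that the map $(\alpha,\phi)\mapsto (\alpha/\gamma,\phi)$ is a linear bijection of $\mathbb{R}^2$ and hence carries any piecewise quadratic function to a piecewise quadratic function; no other step presents a genuine obstacle.
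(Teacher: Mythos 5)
Your proof is correct and follows essentially the same route as the paper's: handle the two kinds of elements of $\mathcal{C}_{s+1}$ separately, invoking Observation 2 for the sums, and Observations 3 and 4 to show $g_{s+1}(\phi)$ is piecewise quadratic in $\phi$. Your explicit justification that the substitution $\alpha \mapsto \alpha/\gamma$ preserves the piecewise quadratic structure is a small point the paper leaves implicit, but it does not change the argument.
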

\begin{proof}
According to the induction hypothesis, each $f \in \mathcal{C}_{s}$ is a piecewise quadratic function of $\alpha$ and $\phi$. Therefore, $f(\alpha/\gamma,\phi)+\frac{1}{2}(y'_{s+1}(\phi)-\alpha)^2 $ is a piecewise quadratic function of $\alpha$ and $\phi$ for all $f \in \mathcal{C}_{s}$, according to Observation 2. Furthermore, from Observations 3 and 4, we can see that 
\begin{align*}
g_{s+1}(\phi) = \min_{f\in\mathcal{C}_{s}}\min_{\alpha\geq 0} f(\alpha,\phi) +\lambda
\end{align*} is a piecewise quadratic function of $\phi$.
\end{proof}

Combining Lemmas~\ref{lemma:pwq_base_case} and \ref{lemma:pwq_induction_step} completes the argument that  for $s\in \{ \thj-h,\ldots,\thj\} $, $\mathcal{C}_{s}$ is a collection of piecewise quadratic functions. 

To complete the proof of Proposition~\ref{prop:bivariate_cost_recursion}, it remains to show that for $s\in \{ \thj-h,\ldots,\thj\} $, $|\mathcal{C}_{s}|=s-\thj+h+1$. According to \eqref{eq:recursion_init}, $\mathcal{C}_{\thj-h}$ consists of a single function. At each iteration of the recursion in \eqref{eq:union_update_bivariate}, only one additional function is added; therefore, $\mathcal{C}_{s}$ consists of $1+s-(\thj-h)= s-\thj+h+1$ functions.


\subsection{Extension of Proposition \ref{prop:bivariate_cost_recursion} to $y_{T:(\thj+1)}'(\phi)$}
\label{appendix:recursion_reverse}
The following proposition is a straightforward extension of Proposition \ref{prop:bivariate_cost_recursion} to the sequence $y_{T:(\thj+1)}'(\phi)$ with decay parameter $1/\gamma$ to account for the time reversal. 

\begin{Proposition}
\label{prop:bivariate_cost_recursion_reverse}
For $\thj+1 \leq s \leq \thj+h$, 
\begin{align}
{\cost}\qty(y_{T:s}'(\phi),\alpha;1/\gamma) = \min_{f\in\tilde{\mathcal{C}}_s} f(\alpha,\phi),
\label{eq:backward_fpop}
\end{align}
where $\tilde{\mathcal{C}}_s$ is a collection of $\thj+h+2-s$ piecewise quadratic functions of $\alpha$ and $\phi$, $f(\alpha,\phi)$, constructed with the initialization  
\begin{align}
\tilde{\mathcal{C}}_{\thj+h+1} = \left\{ \cost\qty(y_{T:(\thj+h+1)}'(\phi),\alpha;1/\gamma) \right\},
\label{eq:recursion_init_reverse}
\end{align} 
and the recursion
\begin{align}
\label{eq:union_update_bivariate_reverse}
\tilde{\mathcal{C}}_{s} = \left( \bigcup_{f \in \tilde{\mathcal{C}}_{s+1}} \left\{ f(\alpha\gamma,\phi)+\frac{1}{2}\qty(y'_s(\phi)-\alpha)^2  \right\} \right) \bigcup \left\{ g_s(\phi)+\frac{1}{2}\qty(y'_s(\phi)-\alpha)^2  \right\} \, ,
\end{align} 
where \begin{align}
g_s(\phi) = \min_{f\in\tilde{\mathcal{C}}_{s+1}}\min_{\alpha\geq 0} f(\alpha,\phi) +\lambda
\end{align}
and $y'(\phi)$ is defined in \eqref{eq:phi}. 
\end{Proposition}

\subsection{General case for Propositions \ref{prop:bivariate_cost_recursion} and \ref{prop:bivariate_cost_recursion_reverse}}
\label{appendix:general_tau_L_tau_R}

Propositions \ref{prop:bivariate_cost_recursion} and \ref{prop:bivariate_cost_recursion_reverse} assumed that $\thj-h \geq 1$ and $\thj+h+1 \leq T$ (where $T$ is the length of the observed data), respectively. We now provide details for the cases where $\thj-h < 1$ and $\thj+h+1 > T$.

\begin{itemize}
	\item
	\emph{Case 1: $\thj-h< 1$.} Define $\hat\tau_L = \max\{1, \thj-h \}$ and initialize with \begin{align}
\mathcal{C}_{\hat\tau_L} = \left\{ {\cost}\qty(y_{1:\hat\tau_L}'(\phi),\alpha;\gamma) \right\}
\end{align} in Proposition \ref{prop:bivariate_cost_recursion} instead of \eqref{eq:recursion_init}, with the convention $y_{1:1}'(\phi) = y_1'(\phi)$.
	\item
	\emph{Case 2: $\thj+h+1 > T$.} Define $\hat\tau_R = \min\{T, \thj+h+1 \}$ and initialize with \begin{align}
\mathcal{C}_{\hat\tau_R} = \left\{ {\cost}\qty(y_{T:\hat\tau_R}'(\phi),\alpha;1/\gamma) \right\}
\end{align} in Proposition \ref{prop:bivariate_cost_recursion_reverse} instead of \eqref{eq:recursion_init_reverse}, with the convention $y_{T:T}'(\phi) = y_T'(\phi)$.

\end{itemize}

\subsection{Algorithm for computing $\mathcal{S}$ in \eqref{eq:S_in_phi}}
\label{appendix:algorithm_s}
\begin{algorithm}[!htp]
  \SetKwInOut{Input}{Input}
  \SetKwInOut{Output}{Output}
  \Input{Data $y_{1:T}$, spike location $\thj$, exponential decay parameter $\gamma$}
  \Output{Set $\mathcal{S}$}
  \begin{enumerate}
  	\itemsep0em 

    \item
     Compute the collection of functions $\mathcal{C}_{\thj}$ using Proposition \ref{prop:bivariate_cost_recursion}.

    \item
     Compute the collection of functions  $\tilde{\mathcal{C}}_{\thj+1}$ using Proposition \ref{prop:bivariate_cost_recursion_reverse}.
  
	 \item
     Compute $C(\phi)$ using \eqref{eq:Cphi}.
     \item
 
     Compute $C'(\phi)$ using \eqref{eq:Cphi_prime}.

     \item
     Compute $\mathcal{S}  = \{\phi: C(\phi) \leq C'(\phi)\}$.
 \end{enumerate}
 \caption{Computing $\mathcal{S}$ in \eqref{eq:S_in_phi} for a spike $\hat\tau_j$ resulting from \eqref{eq:l0-opt}}
 \label{alg:S_computation}
\end{algorithm}

\subsection{Proof of Proposition \ref{prop:S_set_timing}}
\label{appendix:computation}

Throughout the proof, we assume that the number of pieces in the piecewise quadratic functions under consideration is a constant that does not depend on $h$ and $T$. Moreover, we will leverage the toolkit from \citet{Maidstone2017-vc,Rigaill2015-pm,Jewell2019-vv}, which allows for efficient manipulation of both univariate and bivariate piecewise quadratic functions. Provided with an efficient implementation of the toolkit, we make the following two observations for our timing complexity analysis:  
\begin{itemize}
	\item
	\emph{Observation 1: } $\min_{f\in\mathcal{C}} f(\phi)$ can be computed in $O(|\mathcal{C}|)$ operations, provided that $f(\phi)$ is a piecewise quadratic function of $\phi$; \label{eq:fp_1}
	\item
    \emph{Observation 2: } $\forall f_1, f_2 \in \mathcal{C}, \, f_1(\alpha,\phi)+f_2(\alpha,\phi)$ can be computed in $O(1)$ operations, provided that $f_1(\alpha,\phi)$ and $f_2(\alpha,\phi)$ are piecewise quadratic functions of $\alpha$ and $\phi$ with $O(1)$ pieces. \label{eq:fp_2}
\end{itemize}
Finally, we recall that if $f(\alpha,\phi)$ is a piecewise quadratic function of $\alpha$ and $\phi$, then $\min_{\alpha\geq 0} \{f(\alpha,\phi)\}$ is a piecewise quadratic function of $\phi$ only and can be computed analytically.

Now we will characterize the computational complexity of Algorithm~\ref{alg:S_computation}:
\begin{enumerate}
	\item
	Step 1: We first consider the time  to compute $\mathcal{C}_s$ for some $s\in \{\thj-h+1,\ldots, \thj\}$, assuming that we have computed $\mathcal{C}_{s-1}$. 
	\begin{enumerate}
		\item
			We first compute $\bigcup_{f\in \mathcal{C}_{s-1}}\{f(\alpha/\gamma,\phi)+ \frac{1}{2}(y_s'(\phi)-\alpha)^2\}$, which takes $O(|\mathcal{C}_{s-1}|) = O(s-\thj+h)$ operations. 
		\item
			We then compute $g_s(\phi)$ using \eqref{eq:g_update}: the inner minimization over $\alpha\geq 0$ takes $O(1)$ operations for each $f \in \mathcal{C}_{s-1}$ since it admits an analytical solution; the outer minimization over $\mathcal{C}_{s-1}$ takes $O(|\mathcal{C}_{s-1}|)= O(s-\thj+h)$ operations according to Observation 1. 
	\end{enumerate}

	In summary, computing $\mathcal{C}_s$ from $\mathcal{C}_{s-1}$ takes $O(s-\thj+h)$ operations for any $s\in \{\thj-h+1,\ldots, \thj\}$. The first step of Algorithm~\ref{alg:S_computation} requires computing $\mathcal{C}_s$ for all $s\in \{\thj-h+1,\ldots, \thj\}$, a total of $O\qty(\sum_{t=\thj-h+1}^{\thj} \qty(t-\thj+h)) = O(h^2)$ operations.

	\item

	Step 2: Applying the same logic used in analyzing Step 1 to the second step of Algorithm~\ref{alg:S_computation}, we conclude that computing $\tilde{\mathcal{C}}_{\thj+1}$ takes $O(h^2)$ operations using Proposition \ref{prop:bivariate_cost_recursion_reverse}.

	\item

    According to \eqref{eq:Cphi}, computing $C(\phi)$ requires $\min_{f \in \mathcal{C}_{\thj}} \left\{ \min_{\alpha\geq 0}f(\alpha,\phi) \right\}$  and $\min_{f \in \tilde{\mathcal{C}}_{\thj+1}}  \left\{\min_{\alpha' \geq 0}  f(\alpha' ,\phi) \right\}$. Both terms can be computed in $O(|\mathcal{C}_{\thj}|) = O(h)$ operations using Observation 1; moreover, the summation will take $O(1)$ operations according to Observation 2. Hence Step 3 takes $O(h)$ operations in total.

	\item
    According to \eqref{eq:Cphi_prime}, 
    $$C'(\phi) = \min_{ f\in  \mathcal{C}_{\thj}, \tilde{f}\in\tilde{\mathcal{C}}_{\thj+1}} \qty{ \min_{\alpha\geq 0} \qty{  f(\alpha,\phi)+\tilde{f}(\gamma\alpha,\phi) } }.$$

    \begin{enumerate}
    	\item 
   Computing the set $\qty{  f(\alpha,\phi)+\tilde{f}(\gamma\alpha,\phi)\;\middle\vert\; f\in  \mathcal{C}_{\thj}, \tilde{f}\in\tilde{\mathcal{C}}_{\thj+1} }$ takes $O(|\mathcal{C}_{\thj}|\cdot|\tilde{\mathcal{C}}_{\thj+1}|) = O(h^2)$ operations, since each addition takes $O(1)$ operations (Observation 2) and there are $|\mathcal{C}_{\thj}|\cdot|\tilde{\mathcal{C}}_{\thj+1}|$ such sums. 

    	\item

    	Minimizing over $\alpha\geq0$ for each $f(\alpha,\phi)+\tilde{f}(\gamma\alpha,\phi)$ takes $O(1)$ operations, so the cost of minimization over the entire collection is $O(h^2)$. 

    	\item

    	Computing $C'(\phi)$ as the minimum of $O(h^2)$ piecewise quadratic functions of $\phi$ requires $O(h^2)$ operations by Observation 1.

    \end{enumerate}

	To summarize, we need $O(h^2)$ operations to compute $C'(\phi)$. 

	\item

	To carry out Step 5, we first compute $\min\{C(\phi),C'(\phi)\}$, the minimum of two piecewise quadratic functions of $\phi$ only, which takes $O(1)$ operations by Observation 1. In $O(1)$ operations, we can obtain $\mathcal{S}$ in \eqref{eq:S_set} by computing the set of $\phi$ such that $\min\{C(\phi),C'(\phi)\} = C(\phi)$.

\end{enumerate}

To summarize, computing $\mathcal{S}$ defined in \eqref{eq:S_set} using Algorithm~\ref{alg:S_computation} takes $O(h^2)$ operations.

\subsection{Empirical timing results for Proposition \ref{prop:S_set_timing}}

\label{appendix:empirical_computation}

In this section, we investigate the claim from Proposition \ref{prop:S_set_timing} that computing the set $\mathcal{S}$ defined in \eqref{eq:S_set} requires $O(h^2)$ operations, where $h$ is the window size that appears in \eqref{eq:nu_def}. 

Figure \ref{fig:supp_time} displays the running time, computed on a MacBook Pro with a 1.4 GHz Intel Core i5 processor, as a function of the window size, $h$, over 50 replicate datasets simulated according to \eqref{eq:obs-model} with $T=10,000$, $\gamma = 0.98$, and $z_t \overset{\text{i.i.d.}}{\sim} \text{Poisson}(0.01)$; the tuning parameter $\lambda$ for the $\ell_0$ problem in \eqref{eq:l0-opt} is set to $0.3$, which yields between 50 and 100 spikes. With $h=20$, the average running time is 2.1 seconds for each dataset. In addition, a quadratic fit is plotted for reference. We see that the running time is indeed approximately quadratic in the window size $h$.


\subsection{An illustrative example for Propositions~\ref{prop:bivariate_cost_recursion} and \ref{prop:bivariate_cost_recursion_reverse} }
\label{appendix:example}

In this section, we walk through a very simple example of characterizing the set $\mathcal{S}=\left\{\phi: \thj \in \mathcal{M}(y'(\phi))\right\}$ in \eqref{eq:S_set} using Proposition~\ref{prop:characterization_S}.

Suppose $y_{1:4}=(8,4,6,3)$, and we want to compute $\mathcal{S}$ for $\thj=2$ with $h=1$ (i.e., $\thj-h=1, \thj+h=3$), $\gamma=\frac{1}{2}$, and $\lambda = 1$. We first compute $\nu$ according to \eqref{eq:nu_def} and $y'(\phi)$ according to \eqref{eq:phi}:
\begin{align*} 
\nu = \begin{pmatrix}
0 \\
-0.5 \\
1 \\
0
\end{pmatrix},  \quad 
y'(\phi) = \begin{pmatrix}
8 \\
5.6-0.4\phi \\
2.8+0.8\phi \\
3
\end{pmatrix}.
\end{align*}

According to \eqref{eq:S_in_phi}, to compute $\mathcal{S}$, it suffices to compute $C(\phi)$ in \eqref{eq:cost_spike_thj} and $C'(\phi)$ in \eqref{eq:cost_no_spike_thj}. 

We first compute $C(\phi)$ using Proposition~\ref{prop:bivariate_cost_recursion}. We start with $\mathcal{C}_{\thj-h} = \mathcal{C}_1$.

\begin{enumerate}
	\item
	$\mathcal{C}_1$ has only one function 
	\begin{align*}
	\mathcal{C}_1 = \cost\qty(y_1'(\phi),\alpha;\gamma) = \frac{1}{2}\qty(8-\alpha)^2.
	\end{align*}

	\item
	To compute $\mathcal{C}_2$, we apply \eqref{eq:union_update_bivariate}:
	\begin{align*}
	\mathcal{C}_2 = \Big\{ \frac{1}{2}\qty(8-\alpha/0.5)^2 + \frac{1}{2}\qty(5.6-0.4\phi-\alpha)^2, \\
	\frac{1}{2}\qty(5.6-0.4\phi-\alpha)^2 + g_2(\phi) \Big\},
	\end{align*} 
	where 
	\begin{align*}
	g_2(\phi)= \min_{\alpha\geq 0} \cost\qty(y_1,\alpha;\gamma) + \lambda = 0 + \lambda = 1.
	\end{align*} 
\end{enumerate}

This completes the calculation $$\cost\qty(y'_{1:\thj}(\phi),\alpha;\gamma) = \cost\qty(y'_{1:2}(\phi),\alpha;\gamma) = \min_{f\in \mathcal{C}_2 } f(\alpha,\phi).$$ 

For the reverse direction, we will apply Proposition~\ref{prop:bivariate_cost_recursion_reverse} to compute sets $\mathcal{C}_4$ and $\mathcal{C}_3$.

\begin{enumerate}
\item
$\mathcal{C}_4$ consists of a single function:
\begin{align*}
\mathcal{C}_4 = \cost\qty(y_4'(\phi),\alpha;1/\gamma) = \frac{1}{2}\qty(3-\alpha)^2.
\end{align*}

\item
Applying \eqref{eq:union_update_bivariate_reverse}, we get 
\begin{align*} 
\mathcal{C}_3 = \min\Big\{  \frac{1}{2}(3-\alpha/2)^2 +  \frac{1}{2}(2.8+0.8\phi-\alpha)^2,  \\
\min_{\alpha'\geq 0}\qty{\frac{1}{2}(3-\alpha'/2)^2} + \lambda + \frac{1}{2}(2.8+0.8\phi-\alpha)^2
  \Big\},
\end{align*}
which yields $$\cost\qty(y'_{T:\thj+1}(\phi),\alpha;1/\gamma) = \cost\qty(y'_{4:3}(\phi),\alpha;1/\gamma) = \min_{f\in \mathcal{C}_3 } f(\alpha,\phi).$$
\end{enumerate}

According to \eqref{eq:cost_spike_thj},
\begin{align*}
C(\phi) = \min_{\alpha\geq 0} \big\{ \cost\qty(y'_{1:2}(\phi),\alpha;\gamma) \big\} + \min_{\alpha\geq 0} \big\{ \cost\qty(y'_{4:3}(\phi),\alpha; 1/\gamma)\big\} + \lambda,
\end{align*}
where
\begin{small}
\begin{align*}
&\min_{\alpha\geq 0} \qty{ \cost\qty(y'_{1:2}(\phi),\alpha;\gamma) } = \min\qty{ \min_{\alpha\geq 0} \qty{\frac{1}{2}\qty(8-\alpha/(0.5))^2 + \frac{1}{2}\qty(5.6-0.4\phi-\alpha)^2}, 
 \min_{\alpha\geq 0}\qty{ \frac{1}{2}\qty(5.6-0.4\phi-\alpha)^2 +1 } } \\
 &= \min\vast\{
 \tilde{f}_1(\phi) = \begin{cases} 
      0.064\phi^2-0.512\phi+1.024 & \phi\leq54 \\
      0.08\phi^2-2.24\phi+47.68 & \phi > 54
   \end{cases},
   \tilde{f}_2(\phi) =  \begin{cases} 
      1 & \phi\leq14 \\
      0.08\phi^2-2.24\phi+16.68 & \phi > 14
   \end{cases}  \vast\} \\
 &=  \begin{cases} 
    1  & \phi\leq 0.047\\
 0.064\phi^2-0.512\phi+1.024 & 0.047<\phi\leq 7.953 \\
    1  & 7.953<\phi\leq 14\\
      0.08\phi^2-2.24\phi+16.68 & \phi > 14
    \end{cases},
\end{align*}
\end{small}

and 
\begin{small}
\begin{align*}
&\min_{\alpha\geq 0} \qty{ \cost\qty(y'_{4:3}(\phi),\alpha;1/\gamma) } = \min\qty{ \min_{\alpha\geq 0}\qty{\frac{1}{2}(3-\alpha/2)^2 + \frac{1}{2}(2.8+0.8\phi-\alpha)^2} ,  
 \min_{\alpha\geq 0}\qty{ \frac{1}{2}(2.8+0.8\phi-\alpha)^2 +1 } } \\
 &= \min\vast\{
 \tilde{f}_1(\phi) = \begin{cases} 
      0.32\phi^2+2.24\phi+8.42 & \phi < -5.375 \\
      0.064 \phi^2-0.512\phi+1.024 & \phi \geq -5.375
   \end{cases},
    \tilde{f}_2(\phi) =  \begin{cases} 
     0.32\phi^2+2.24\phi+ 4.92 &  \phi < -3.5 \\
     1 & \phi \geq  -3.5
   \end{cases}  \vast\} \\
 &=  \begin{cases} 
      0.32\phi^2+2.24\phi+ 4.92 &  \phi \leq -3.5 \\
      1 &  -3.5 < \phi \leq 0.047 \\
      0.064\phi^2-0.512\phi+1.024  &  0.047 < \phi \leq  7.953\\
    1  & \phi>7.953
    \end{cases}.
\end{align*}
\end{small} 

Therefore, 
\begin{align*}
C(\phi) =
 \begin{cases} 
      0.32\phi^2+2.24\phi+ 5.92 &  \phi \leq -3.5 \\
      2 &  -3.5 < \phi \leq 0.047 \\
      0.064\phi^2-0.512\phi+2.024  &  0.047 < \phi \leq  7.953\\
      2  &  7.953 <\phi \leq 14 \\
    0.08\phi^2-2.24\phi+17.68  & \phi>14
    \end{cases}.
\end{align*}

Moreover, according to \eqref{eq:cost_no_spike_thj},
\begin{align*}
C'(\phi) &= \min_{\alpha\geq 0} \left\{ \cost\qty(y'_{1:2}(\phi),u/0.5;\gamma) + \cost\qty(y'_{4:3}(\phi),\alpha;1/\gamma) \right\} \\
&= \min \Bigg\{  
 \min_{\alpha\geq 0}\qty{ \frac{1}{2}(8-\alpha/0.25)^2 + \frac{1}{2}(5.6-0.4\phi-\alpha/0.5)^2+ \frac{1}{2}(3-\alpha/2)^2 +  \frac{1}{2}(2.8+0.8\phi-\alpha)^2},\\
 & \min_{\alpha\geq 0}\qty{ \frac{1}{2}(8-\alpha/0.25)^2 + \frac{1}{2}(5.6-0.4\phi-\alpha/0.5)^2+  1 + \frac{1}{2}(2.8+0.8\phi-\alpha)^2}, \\
&\min_{\alpha\geq 0} \qty{\frac{1}{2}(5.6-0.4\phi-\alpha/0.25)^2 + 1 + \frac{1}{2}(3-\alpha)^2 +  \frac{1}{2}(2.8+0.8\phi-\alpha)^2}, \\
&\min_{\alpha\geq 0} \qty{\frac{1}{2}(5.6-0.4\phi-\alpha/0.25)^2 + 1 +  1 + \frac{1}{2}(2.8+0.8\phi-\alpha)^2}  \Bigg\} \\
 &= 0.4\phi^2 + 2 .
\end{align*}

Finally, to determine $\mathcal{S}$, we take the minimum of these two functions:
\begin{align*}
\min\left\{ C(\phi), C'(\phi)\right\} = \begin{cases} 
      0.32\phi^2+2.24\phi+ 5.92 &  \phi \leq -3.5 \; \; \text{Minimizer: } C(\phi) \\
   3  & -3.5 <\phi \leq -1.581 \; \; \text{Minimizer: } C(\phi) \\
   0.4\phi^2+2 &  -1.581 \leq \phi < 0.837 \; \; \text{Minimizer: } C'(\phi)\\
    0.064\phi^2-0.512\phi+3.024 &  0.837 < \phi \leq 7.953 \; \; \text{Minimizer: } C(\phi)\\
    3  & \phi \geq 7.953   <\phi \leq 14 \; \; \text{Minimizer: } C(\phi)\\
    3 0.08\phi^2-2.24\phi+17.68 & \phi > 14   \; \; \text{Minimizer: } C(\phi)
\end{cases}.
\end{align*}

According to \eqref{eq:S_in_phi}, $\mathcal{S} =(-\infty,-1.581) \cup [0.837,\infty)$ for this example.


\subsection{Proof of Proposition~\ref{prop:ci}}
\label{appendix:ci}

We first present an auxiliary result.

\begin{lemma}[Lemma A.2. in \citet{Kivaranovic2020-ug}]
\label{lemma:trunc_normal_cdf}
Let $F_{\mu,\sigma^2}^{\mathcal{S}}$ denote the cumulative distribution function for a normal distribution with mean $\mu$ and variance $\sigma^2$, truncated to ths set $\mathcal{S}\subseteq \mathbb{R}$. For each $t\in \mathcal{S}$, $F^{\mathcal{S}}_{\mu, \sigma^2}(t)$ is continuous and monotonically decreasing in $\mu$ .
\end{lemma}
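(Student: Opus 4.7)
The plan is to represent the truncated CDF as a ratio, $F_{\mu,\sigma^2}^{\mathcal{S}}(t) = N(\mu)/D(\mu)$, where $N(\mu) = \int_{\mathcal{S}\cap(-\infty,t]}\phi_{\mu,\sigma^2}(x)\,dx$ and $D(\mu) = \int_{\mathcal{S}}\phi_{\mu,\sigma^2}(x)\,dx$, with $\phi_{\mu,\sigma^2}$ the Gaussian density. Continuity in $\mu$ then reduces to continuity of $N$ and $D$ separately, plus strict positivity of $D$. Both $N$ and $D$ are continuous by dominated convergence: on any compact neighborhood of $\mu$, the integrand $\phi_{\mu,\sigma^2}(x)$ is jointly continuous in $(\mu,x)$ and uniformly dominated by a single integrable Gaussian envelope. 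Strict positivity of $D(\mu)$ holds because $\mathcal{S}$, containing a point $t$ at which we evaluate, is assumed to have positive Lebesgue measure; this makes the quotient well-defined and continuous.

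For monotonic decrease in $\mu$, the plan is to differentiate under the integral sign using the location-family identity $\partial_\mu\phi_{\mu,\sigma^2}(x) = \sigma^{-2}(x-\mu)\phi_{\mu,\sigma^2}(x)$. Letting $Y$ denote a random variable with density proportional to $\phi_{\mu,\sigma^2}(x)\mathbf{1}_{\mathcal{S}}(x)$, a short calculation using the quotient rule produces the covariance identity
\begin{equation*}
\frac{\partial}{\partial\mu} F^{\mathcal{S}}_{\mu,\sigma^2}(t) \;=\; \frac{1}{\sigma^2}\,\mathrm{Cov}\bigl(\mathbf{1}_{\{Y\le t\}},\, Y\bigr).
\end{equation*}
Because $y\mapsto \mathbf{1}_{\{y\le t\}}$ is a non-increasing function of $y$ while $y\mapsto y$ is increasing, Chebyshev's sum (association) inequality for monotone functions of a single random variable forces this covariance to be non-positive. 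An equivalent route is to recognize that the normal family has monotone likelihood ratio in $\mu$, which transfers to the truncated distribution and yields stochastic ordering, hence $F^{\mathcal{S}}_{\mu_2,\sigma^2}(t)\le F^{\mathcal{S}}_{\mu_1,\sigma^2}(t)$ whenever $\mu_1 < \mu_2$.

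The main obstacle I anticipate is pinning down \emph{strict} monotonicity as opposed to merely non-increasing behavior, since the Chebyshev inequality is strict exactly when neither of the two functions is almost surely constant under the law of $Y$. Concretely, the covariance vanishes iff $\mathbf{1}_{\{Y\le t\}}$ is almost surely constant, i.e.\ iff $P(Y\le t)\in\{0,1\}$, which corresponds to $t$ lying beyond the effective support of the truncated law. I would handle this by arguing that for any $\mu$, $P(Y\le t) = 0$ would force $\mathcal{S}\cap(-\infty,t]$ to have Lebesgue measure zero, and similarly $P(Y\le t) = 1$ would force $\mathcal{S}\cap(t,\infty)$ to have measure zero; neither occurs at an interior point $t$ of $\mathcal{S}$. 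This gives strict decrease on the set of $\mu$ for which $F^{\mathcal{S}}_{\mu,\sigma^2}(t)\in(0,1)$, which is precisely the regime needed so that the roots $\theta_L(t)$ and $\theta_U(t)$ in \eqref{eq:LCB_UCB} exist and are unique.
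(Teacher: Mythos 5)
Your proof is correct, but note that the paper does not prove this lemma at all: it is imported verbatim as Lemma A.2 of \citet{Kivaranovic2020-ug}, so there is no internal proof to compare against. Your self-contained derivation is the standard one and is sound: writing $F^{\mathcal{S}}_{\mu,\sigma^2}(t)=N(\mu)/D(\mu)$, getting continuity from dominated convergence, and obtaining $\partial_\mu F^{\mathcal{S}}_{\mu,\sigma^2}(t)=\sigma^{-2}\,\mathrm{Cov}\bigl(\mathbf{1}_{\{Y\le t\}},Y\bigr)\le 0$ via the quotient rule and Chebyshev's association inequality (equivalently, the monotone-likelihood-ratio/stochastic-dominance route, which truncation to $\mathcal{S}$ preserves since the likelihood ratio $\exp((\mu_2-\mu_1)x/\sigma^2)$ is unaffected by restricting the support). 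The one genuine subtlety is the one you flagged yourself: strict decrease fails exactly when $P(Y\le t)\in\{0,1\}$, i.e.\ when $\mathcal{S}\cap(-\infty,t]$ or $\mathcal{S}\cap(t,\infty)$ is Lebesgue-null, so the lemma as literally stated (strict monotonicity for every $t\in\mathcal{S}$) is false at extreme points of $\mathcal{S}$ such as $t=\sup\mathcal{S}$ when $\mathcal{S}$ is a finite union of closed intervals. Your restriction to the regime $F^{\mathcal{S}}_{\mu,\sigma^2}(t)\in(0,1)$ is precisely what is needed for the uniqueness of $\theta_L(t)$ and $\theta_U(t)$ in \eqref{eq:LCB_UCB}, and in the application $\mathcal{S}\cap(0,\infty)$ is a finite union of intervals of positive measure with $t=\nu^\top y$ in its interior, so nothing in Proposition~\ref{prop:ci} is affected. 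Two very minor points you could tighten: justify differentiation under the integral sign by dominating $|x-\mu|\phi_{\mu,\sigma^2}(x)$ uniformly over compact sets of $\mu$, and note that the strictness of Chebyshev's inequality here uses that $Y$ is non-degenerate, which holds because $D(\mu)>0$ forces $\mathcal{S}$ to have positive Lebesgue measure.
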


We now present the proof of Proposition~\ref{prop:ci}. 

According to Lemma~\ref{lemma:trunc_normal_cdf}, $F^{\mathcal{S}\cap(0,\infty)}_{\mu, \sigma^2\Vert\nu\Vert_2^2}(t)$ is a monotonically decreasing function of $\mu$ for each $t\in \mathcal{S}\cap(0,\infty)$. Since $\frac{\alpha}{2}<1-\frac{\alpha}{2}$, it follows that $\theta_L(t)$ and $\theta_U(t)$ defined in \eqref{eq:LCB_UCB} are unique, and that $\theta_L(t)< \theta_U(t)$.

In addition, monotonicity implies that $\forall t \in \mathcal{S}\cap(0,\infty)$,  (i) $\nu^{\top}c > \theta_L(t)$ if and only if $F^{\mathcal{S}\cap(0,\infty)}_{\nu^{\top}c, \sigma^2\Vert\nu\Vert_2^2}(t) < 1-\alpha/2$; and (ii) $\nu^{\top}c < \theta_U(t)$ if and only if $F^{\mathcal{S}\cap(0,\infty)}_{\nu^{\top}c, \sigma^2\Vert\nu\Vert_2^2}(t) > \alpha/2$. 

These two observations imply that
\begin{align}
\label{eq:ci_cdf_equiv}
\left\{\nu^{\top}c: \nu^{\top}c \in \left[\theta_L(t),\theta_U(t)\right] \right\} = \left\{\nu^{\top}c: \frac{\alpha}{2} \leq F^{\mathcal{S}\cap(0,\infty)}_{\nu^{\top}c, \sigma^2\Vert\nu\Vert_2^2}(t) \leq 1- \frac{\alpha}{2}   \right\}, \forall t \in \mathcal{S}\cap(0,\infty).
\end{align}

Recall that $Y\sim \mathcal{N}(c,\sigma^2 I)$. This implies that
\begin{align*}
&\mathbb{P} \qty( \nu^{\top}c \in \left[\theta_L(\nu^{\top} Y),\theta_U(\nu^{\top} Y)\right]  \;\middle\vert\; \thj\in\mathcal{M}(Y), \Pi_{\nu}^{\perp} Y = \Pi_{\nu}^{\perp} y, 
\nu^\top Y>0 ) \\
&\overset{a.}{=} \mathbb{P} \qty( \frac{\alpha}{2} \leq F^{\mathcal{S}\cap(0,\infty)}_{\nu^{\top}c, \sigma^2\Vert\nu\Vert_2^2}(\nu^{\top}Y) \leq 1- \frac{\alpha}{2}   \;\middle\vert\; \thj\in\mathcal{M}(Y), \Pi_{\nu}^{\perp} Y = \Pi_{\nu}^{\perp} y, 
\nu^\top Y>0 ) \\
&\overset{b.}{=} \mathbb{P}\qty(F^{\mathcal{S}\cap(0,\infty)}_{\nu^{\top}c, \sigma^2\Vert\nu\Vert_2^2}\qty(Z)\in \qty[\frac{\alpha}{2},1-\frac{\alpha}{2}]) \\
&\overset{c.}{=} 1-\alpha.
\end{align*}

To prove $a.$, we note that \eqref{eq:ci_cdf_equiv} holds for all  $t \in \mathcal{S}\cap(0,\infty)$; therefore it holds for $\nu^\top Y$ conditioning on $\qty{\thj\in\mathcal{M}(Y), \Pi_{\nu}^{\perp} Y = \Pi_{\nu}^{\perp} y, \nu^\top Y>0 }$ as well. Step $b.$ follows from Proposition~\ref{prop:pval} and letting $Z$ denote a normal random variable with mean $\nu^\top c$ and variance $\sigma^2||\nu||_2^2$, truncated to the set $\mathcal{S}\cap(0,\infty)$. The last step follows from the probability integral transform, which states that for a continuous random variable $X$, $F_X(X)$ is distributed as a Uniform(0,1) distribution.


\subsection{Additional information for data analysis in Section~\ref{section:real_data}}
\label{appendix:sensitivity_real_data}

Data for the \texttt{spikefinder} challenge are available for download at \\
\texttt{https://s3.amazonaws.com/neuro.datasets/challenges/spikefinder/spikefinder.train.zip}.

In what follows, we reproduce Figure~\ref{fig:perm_test_cor_vp_chen_cells} with different choices of $h$ (defined in \eqref{eq:nu_def}). 
In Figures~\ref{fig:sensitivity_hcor_vp_chen_cells} and \ref{fig:sensitivity_hcor_vp_chen_cells_large_h}, we compare the accuracy --- as measured by the Victor-Purpura distance and correlation --- of the spikes estimated via \eqref{eq:l0-opt-intercept} (in orange), as well as the subset of spikes estimated via  \eqref{eq:l0-opt-intercept} for which the $p$-value is below $0.05$ (in blue). The black lines
indicate the 2.5\% and 97.5\% quantiles of the accuracy measures obtained over 1,000 resampled datasets, where each resampled dataset contains a subset of the estimated spikes from \eqref{eq:l0-opt-intercept}; details are as in 
Section~\ref{section:result_all_cell}. 
The results using $h=5$ and $h=50$ are quite similar to those with $h=20$ (see Figure~\ref{fig:perm_test_cor_vp_chen_cells}):  the subset of spikes estimated via  \eqref{eq:l0-opt-intercept} for which the $p$-value is below $0.05$  is the most accurate in almost every recording. 

In addition, we performed simple diagnostics of the normality assumption of the error terms in \eqref{eq:obs-model}. In Figure~\ref{fig:real_data_residual}, we plot the residuals ($y_t-\hat{c}_t$) for recordings from the \cite{Chen2013-ha} dataset; for most recordings, residuals appear approximately normal.

\subsection{Estimation of the error variance $\sigma^2$ in \eqref{eq:obs-model}}
\label{appendix:estimated_sigma}

Throughout the paper, we have assumed that $\sigma^2$ in \eqref{eq:obs-model} is known. However, if it is unknown, we propose to use $\hat\sigma^2 = \frac{1}{T-1}\sum_{t=1}^T\left(y_t - \hat{c}_t \right)^2$ as an estimator for $\sigma^2$ in evaluating the $p$-value in \eqref{eq:pval}. In Figure~\ref{fig:sensitivity_estimated_sigma}, we present the results of a simulation study using the estimator $\hat\sigma^2$ and demonstrate that it leads to (i) adequate selective Type I error control under the global null (see Figure~\ref{fig:sensitivity_estimated_sigma}(a)), (ii) substantial power under the alternative (see Figure~\ref{fig:sensitivity_estimated_sigma}(b)), and (iii) correct selective coverage of the parameter $\nu^\top c$ (see Figure~\ref{fig:sensitivity_estimated_sigma}(c)).


\begin{figure}[htbp!] 
  \centering
  \includegraphics[width=0.5\linewidth]{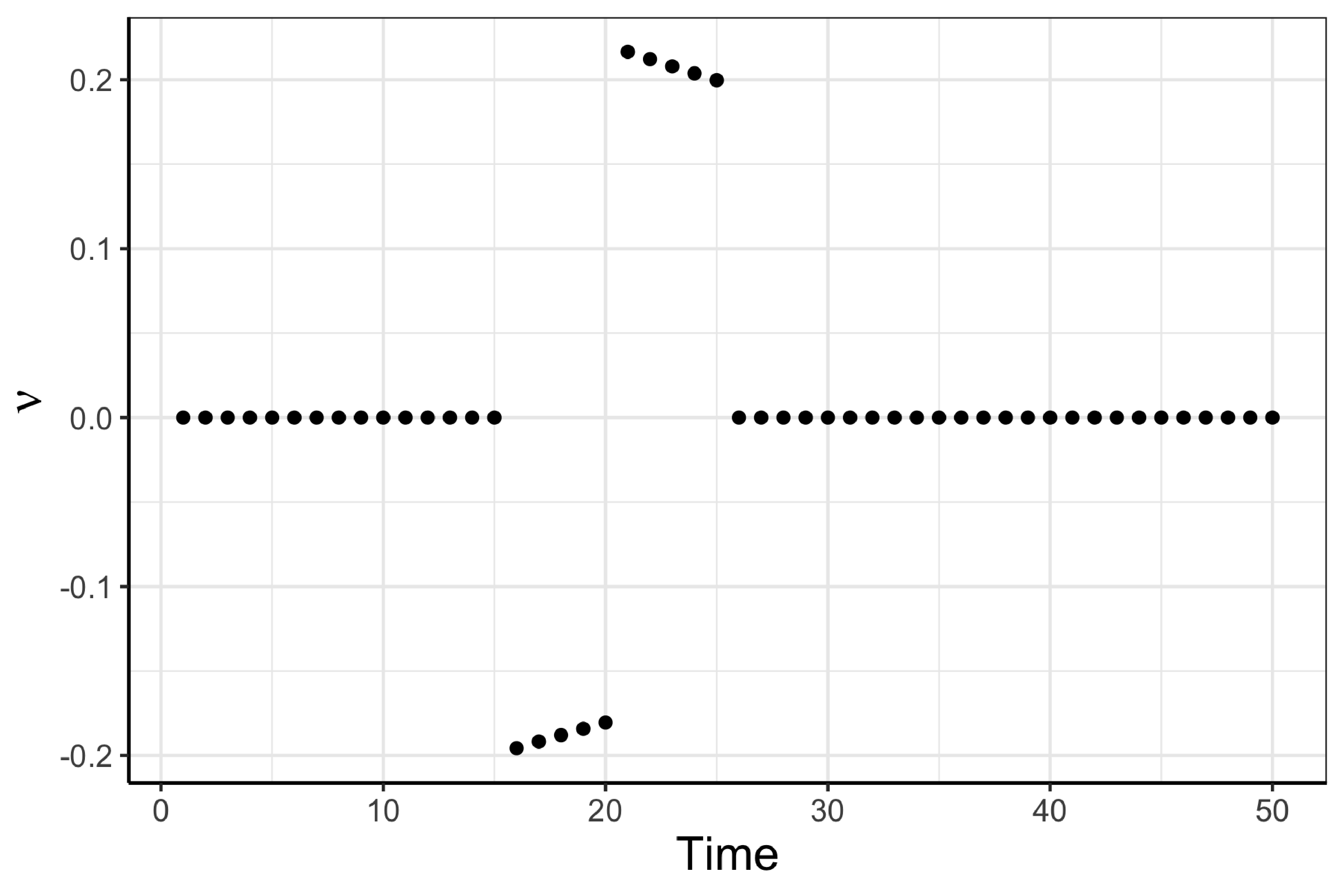}
  \caption{Plot of the contrast $\nu$ generated according to \eqref{eq:nu_def}, with $T=50$, $\gamma=0.98$, $\hat\tau_j=20$, and $h=5$. }
\label{fig:plot_nu_vec}
\end{figure}

\begin{figure}[htbp!]
  \centering
  \includegraphics[width=0.6\linewidth]{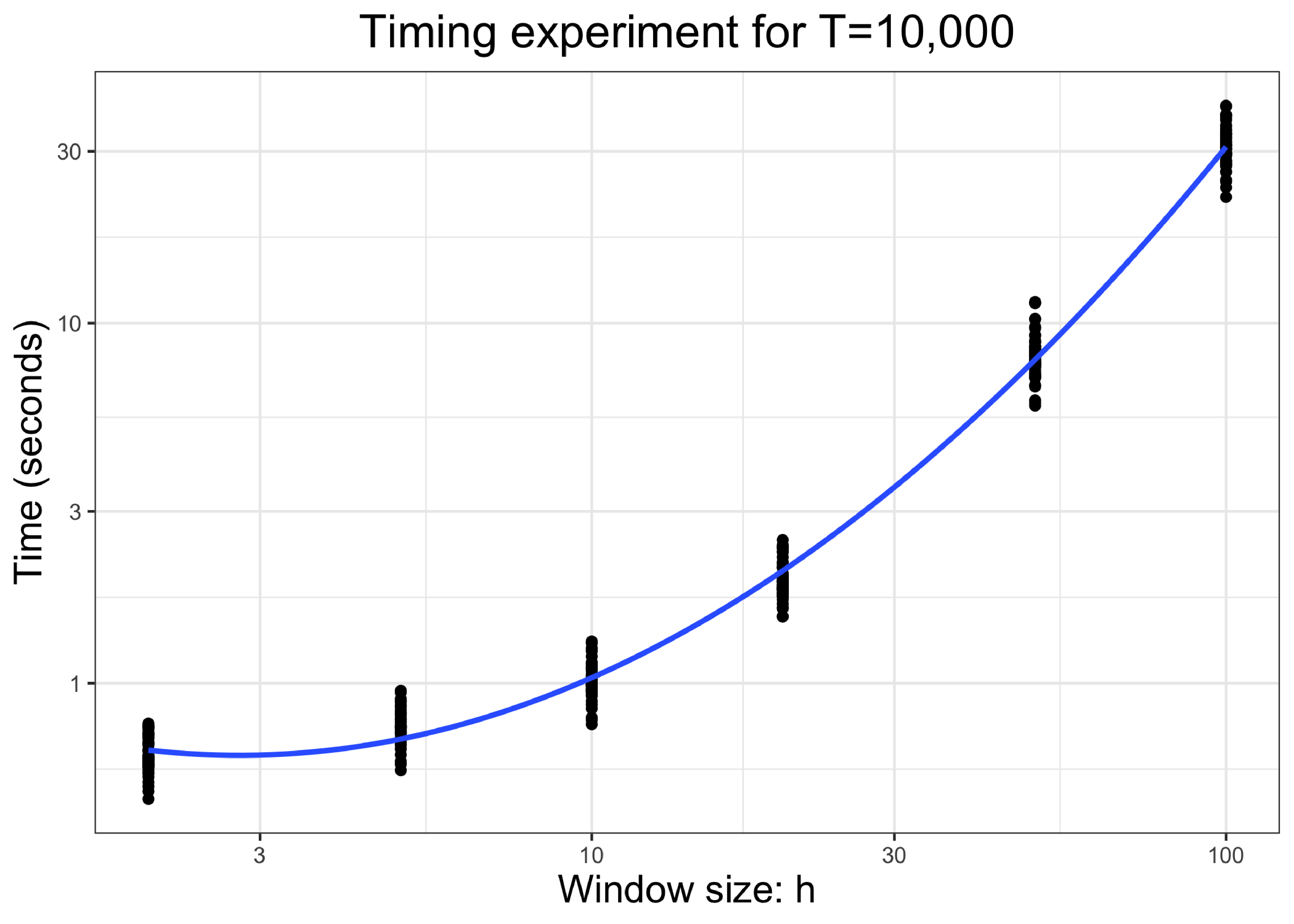}
\caption{Running time of Algorithm~\ref{alg:S_computation} over 50 replicate datasets, as a function of the window size, $h$. Each point represents a separate dataset. Each dataset is simulated according to \eqref{eq:obs-model}, and the $\ell_0$ problem is solved with $\lambda=0.3$. A quadratic equation ($\text{Time} = 0.003h^2-0.002 h + 0.695$) is plotted for reference. }
\label{fig:supp_time}
\end{figure}

\begin{figure}[htbp!]
\begin{centering}
\hspace{15mm}(a)  \hspace{5mm}\\
\end{centering}
\begin{subfigure}{\textwidth}
  \centering
  \includegraphics[width=\linewidth]{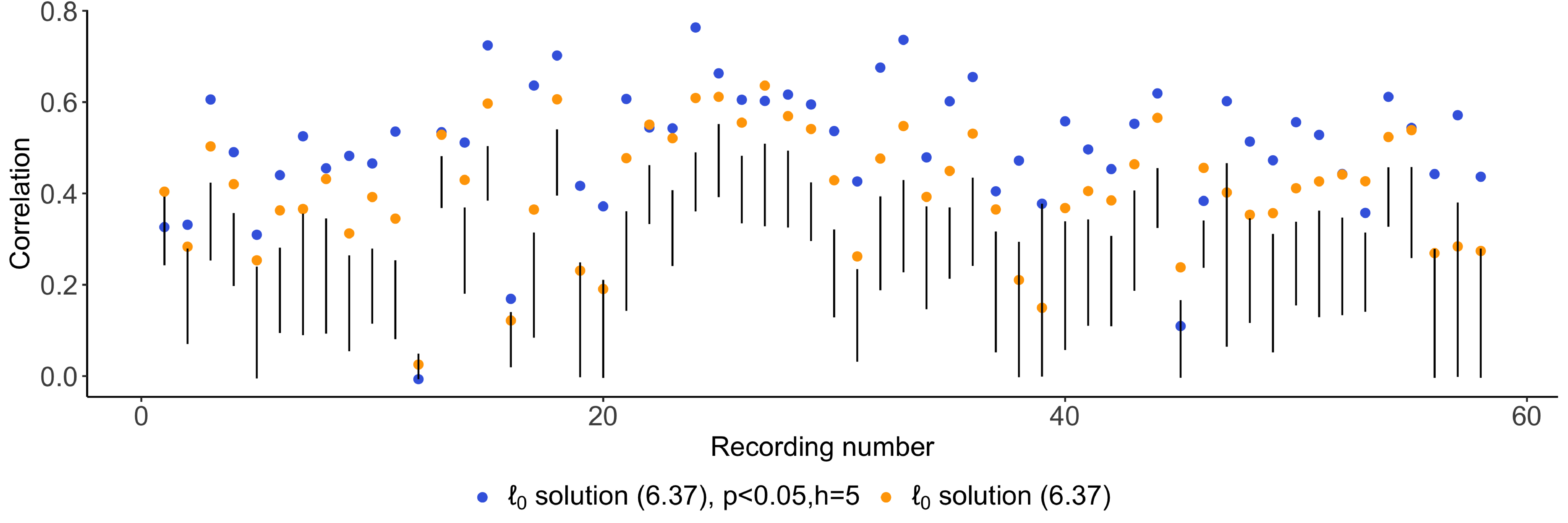}
\end{subfigure}

\begin{centering}
\hspace{15mm}(b)  \hspace{5mm}\\
\end{centering}
\begin{subfigure}{\textwidth}
  \centering
  \includegraphics[width=\linewidth]{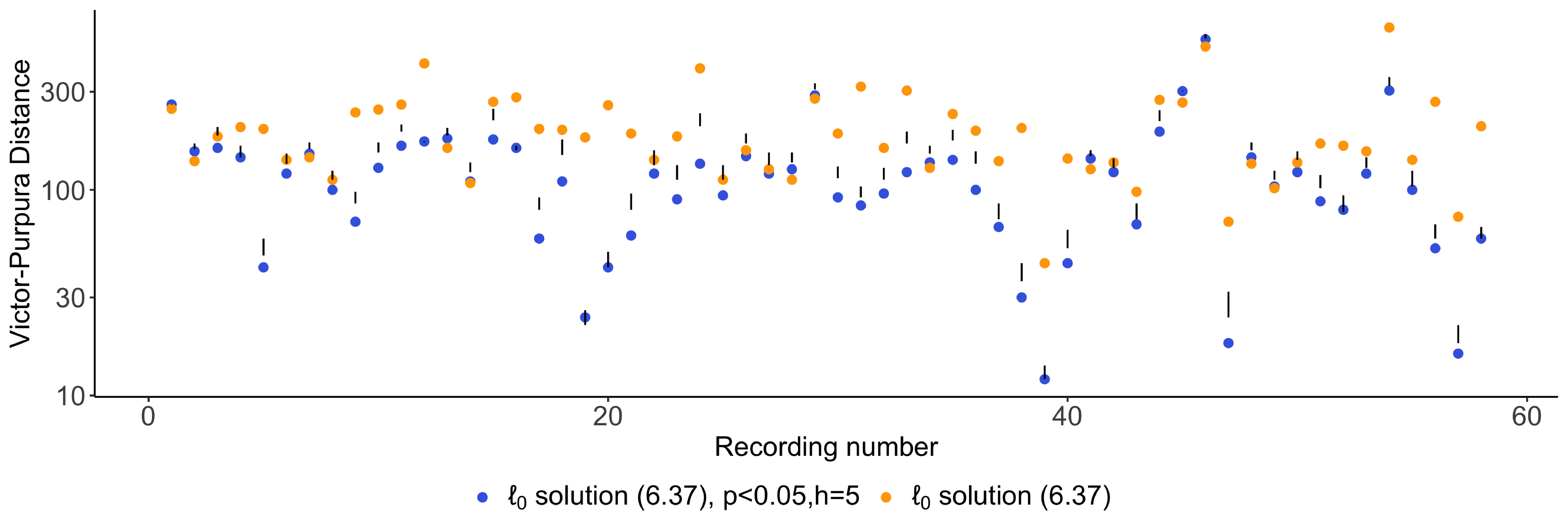}\end{subfigure}

\caption{Results for the \citet{Chen2013-ha} dataset. Details are as in Figure~\ref{fig:perm_test_cor_vp_chen_cells} but with $h=5$.
 }
\label{fig:sensitivity_hcor_vp_chen_cells}
\end{figure}

\begin{figure}[htbp!]
\begin{centering}
\hspace{15mm}(a)  \hspace{5mm}\\
\end{centering}
\begin{subfigure}{\textwidth}
  \centering
  \includegraphics[width=\linewidth]{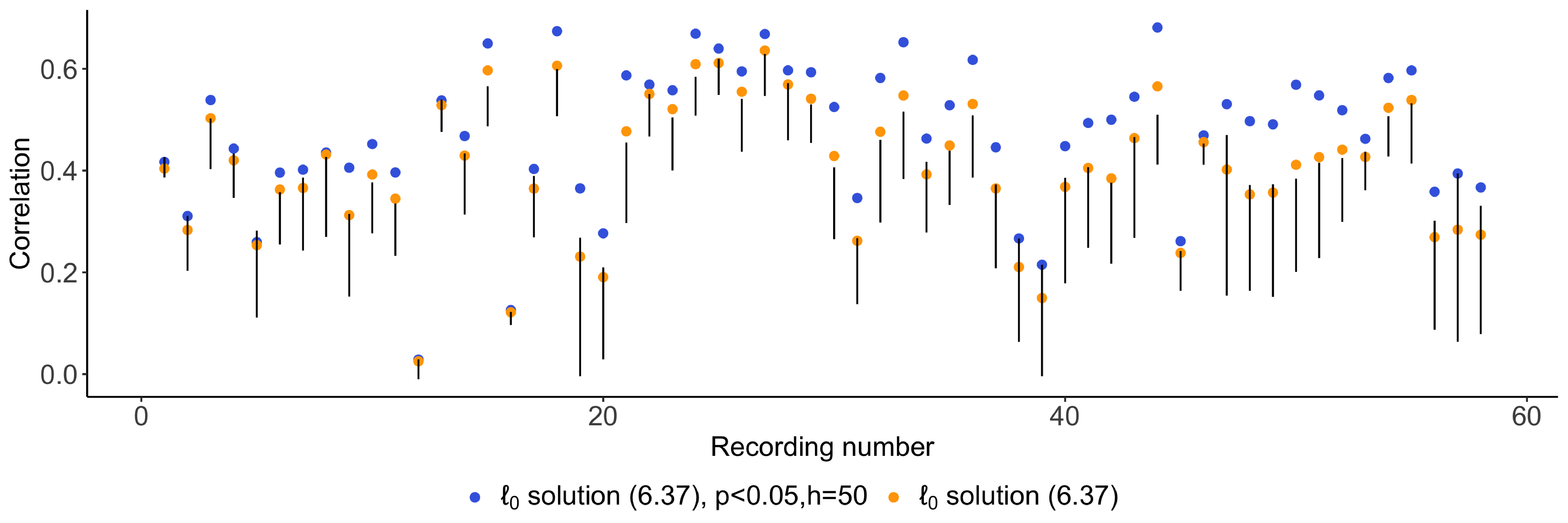}\end{subfigure}

\begin{centering}
\hspace{15mm}(b)  \hspace{5mm}\\
\end{centering}
\begin{subfigure}{\textwidth}
  \centering
  \includegraphics[width=\linewidth]{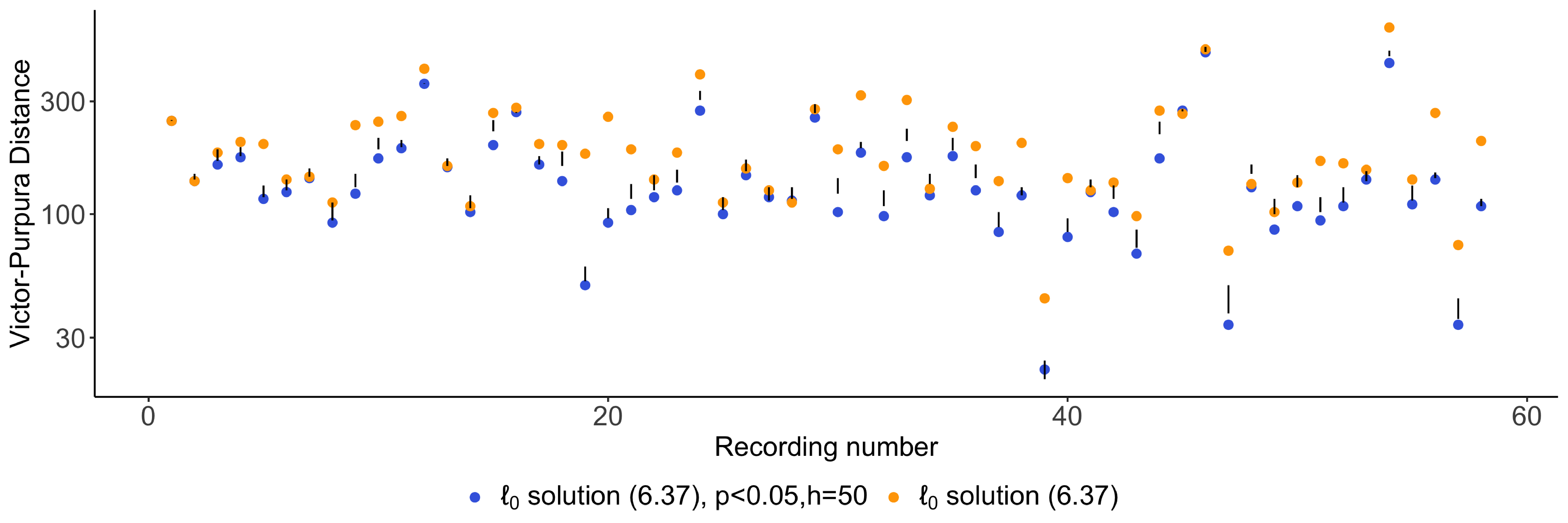}\end{subfigure}

\caption{Results for the \citet{Chen2013-ha} dataset. Details are as in Figure~\ref{fig:perm_test_cor_vp_chen_cells} but with $h=50$. }
\label{fig:sensitivity_hcor_vp_chen_cells_large_h}
\end{figure}

\begin{figure}[htbp!] 
  \centering
  \includegraphics[width=\linewidth]{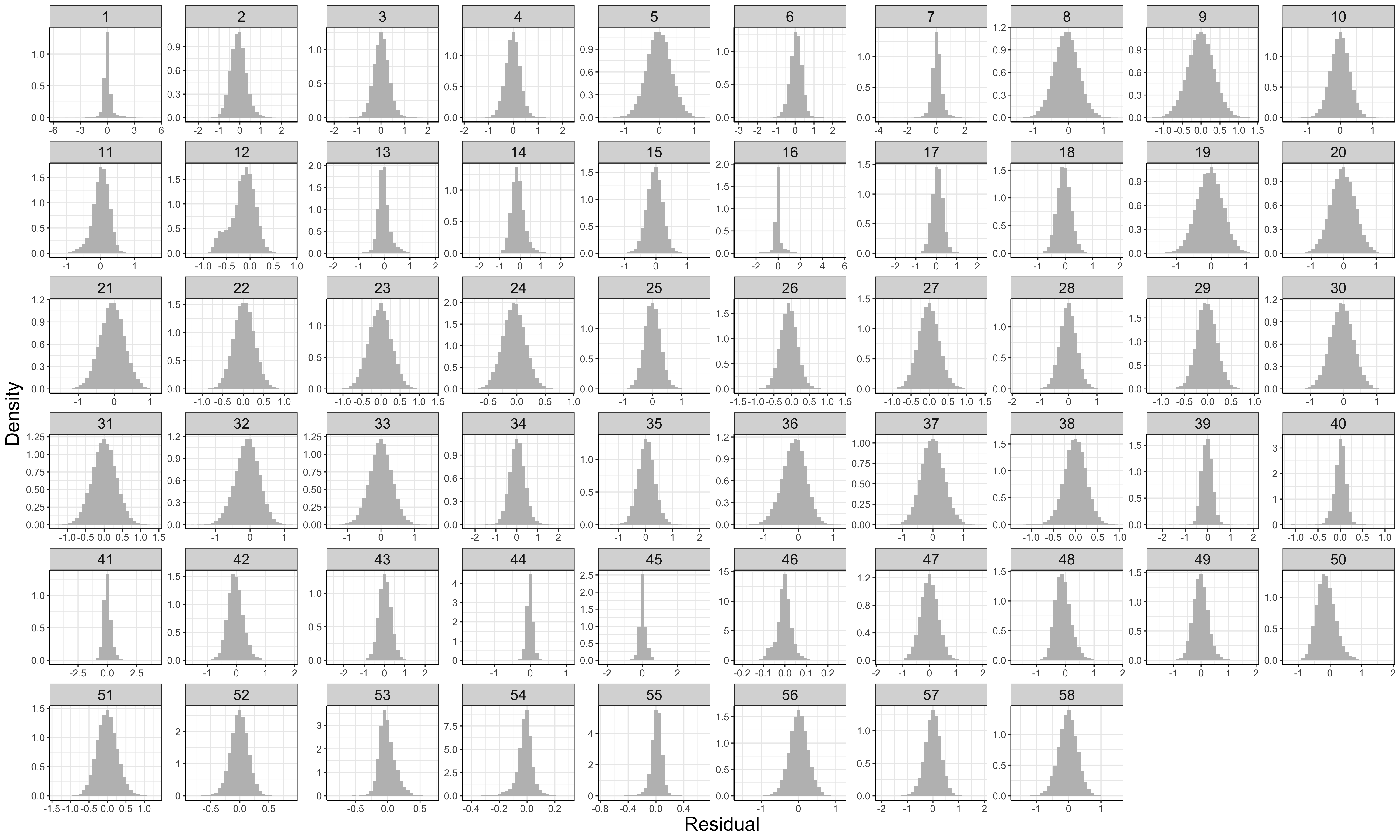}
  \caption{Residuals, $y_t-\hat{c}_t$, for recordings from the \cite{Chen2013-ha} dataset, where $\hat{c}_t$ is the solution to \eqref{eq:l0-opt-intercept}.}
\label{fig:real_data_residual}
\end{figure}

\begin{figure}[htbp!]
\begin{centering}
\hspace{15mm} (a) \hspace{40mm} (b) \hspace{40mm} (c)\\
\end{centering}
\begin{subfigure}{\textwidth}
  \centering
  \includegraphics[width=.3\linewidth]{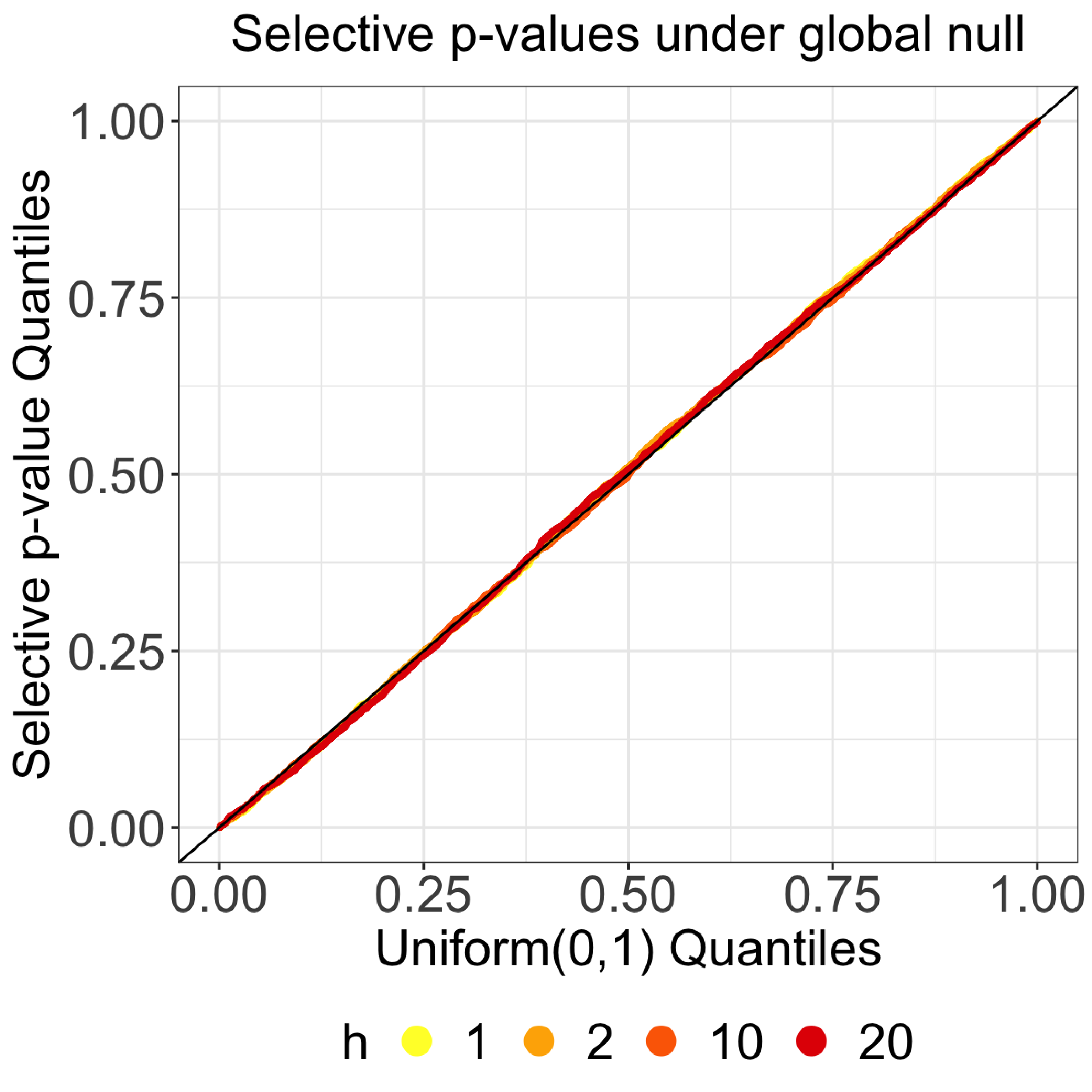}
  \includegraphics[width=.3\linewidth]{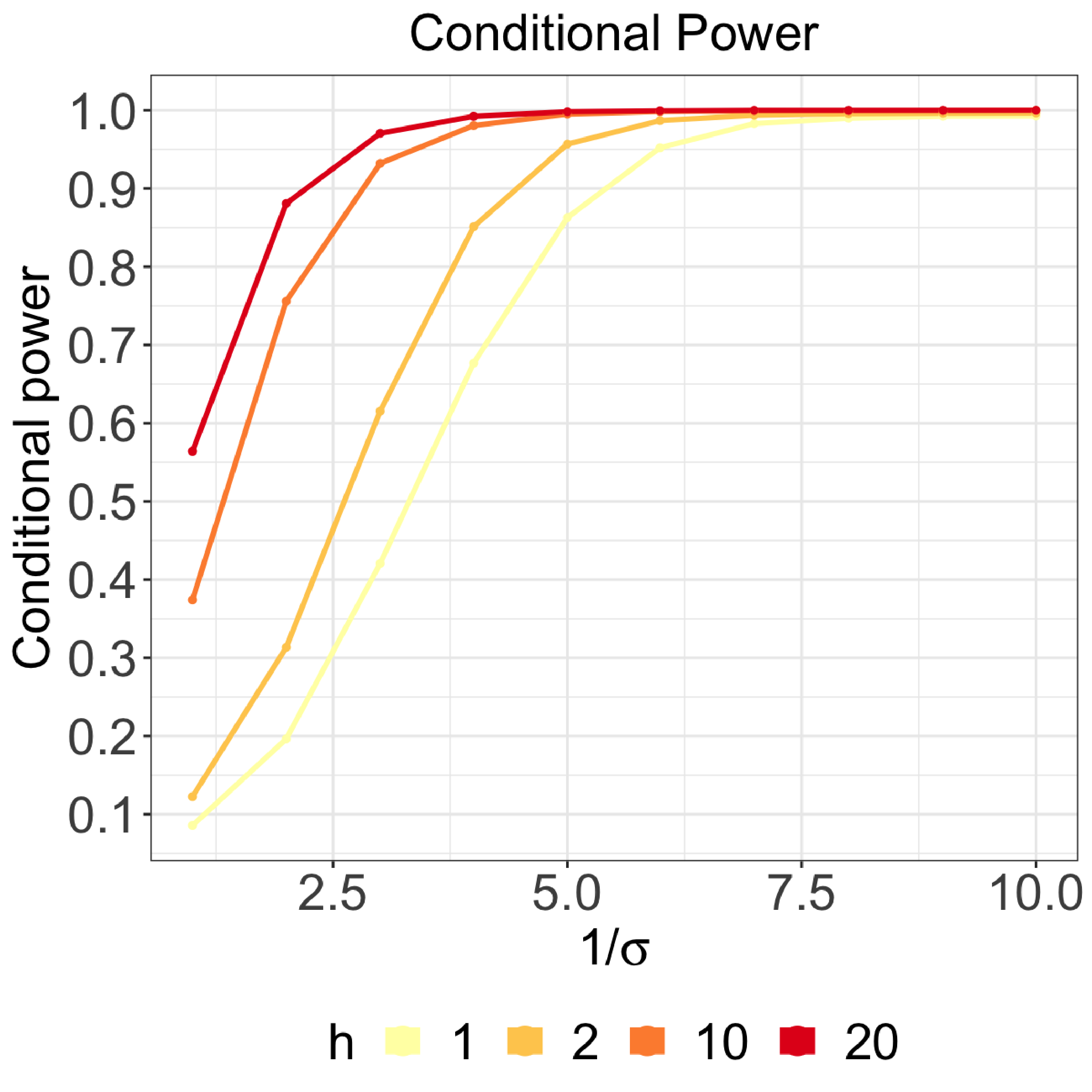}
  \includegraphics[width=.3\linewidth]{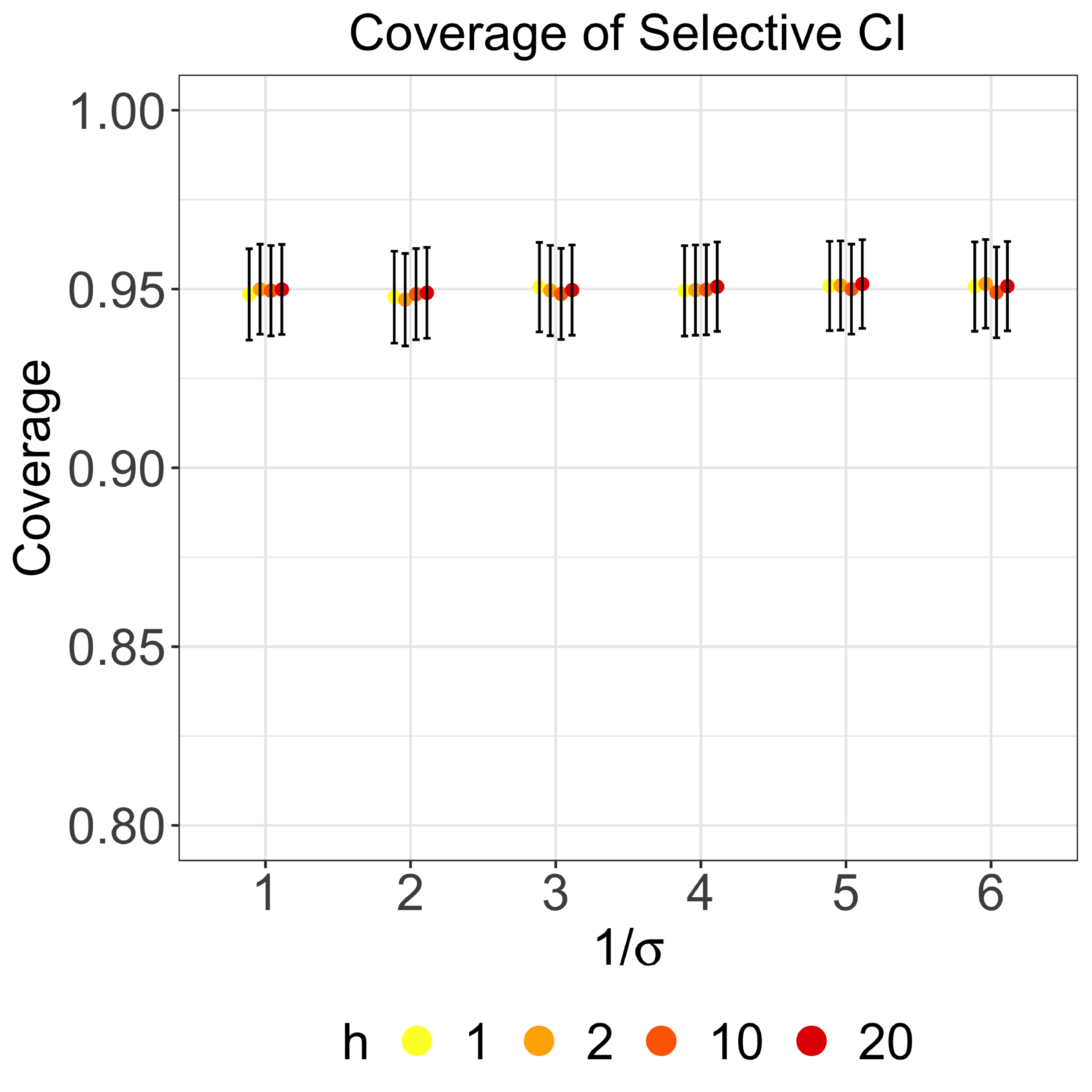}
\end{subfigure}
\caption{\textit{(a): } Quantile-quantile plot for selective $p$-values computed using estimated variance $\hat{\sigma}^2$ based on 100 simulations (2,988 hypothesis tests) under the global null. \textit{(b): }Conditional power for selective $p$-values with estimated variance $\hat{\sigma}^2$. \textit{(c): }Selective confidence intervals computed using estimated variance $\hat{\sigma}^2$ achieve correct nominal coverage (95\% coverage at level $\alpha=0.05$) across all values of $h$ and $\sigma$.}
\label{fig:sensitivity_estimated_sigma}
\end{figure}

\end{document}